\begin{document}
\setcounter{page}{0000}
\issue{XXI~(0000)}

\title{On Reduction and Synthesis of  Petri's Cycloids}

\address{ruediger.valk@uni-hamburg.de}

\author{R\"udiger Valk \\
Department of Informatics\\
University of Hamburg\\ 
Vogt-K\"olln-Str. 30,  D-22527 Hamburg, Germany\\
ruediger.valk@uni-hamburg.de
\and Daniel Moldt\\
Department of Informatics\\
University of Hamburg\\ 
Vogt-K\"olln-Str. 30,  D-22527 Hamburg, Germany\\
daniel.moldt@uni-hamburg.de
 } 
 \maketitle
\today

\runninghead{R. Valk, D. Moldt}{On Reduction and Synthesis of  Petri's Cycloids}
\begin{abstract}
Cycloids are particular Petri nets for modelling processes of actions and events, belonging to the fundaments of Petri's general systems theory.
Defined by four parameters, they provide an algebraic formalism to describe strongly synchronised sequential processes.
To further investigate their structure, reduction systems of cycloids are defined in the style of rewriting systems
and properties of irreducible cycloids are proved. In particular, the synthesis of cycloid parameters from their Petri net structure is derived, leading to
an efficient method for a decision procedure for cycloid isomorphism.
\end{abstract}

\begin{keywords}
Structure of Petri Nets,
Cycloids,
Reduction, Cycloid Isomorphism,
Cycloid Algebra,
 Synthesis of Cycloids
\end{keywords}

\maketitle

\section{Introduction}\label{sec-intro}

\begin{figure}[bt]
\hspace{-1.5cm}
\begin{subfigure}[c]{0.6\textwidth}
\hspace{1.0cm}
\includegraphics[width=1\textwidth]{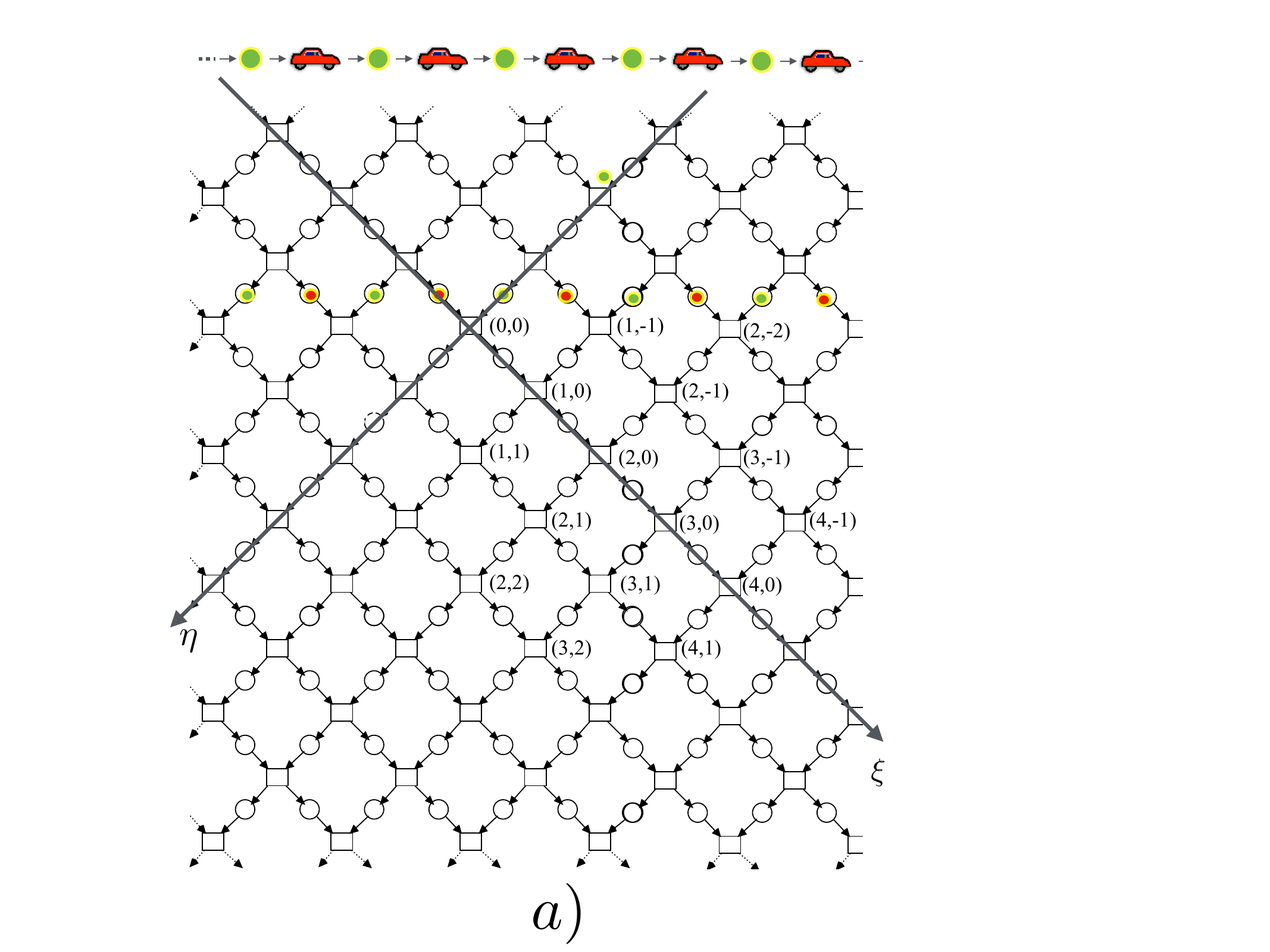}
\end{subfigure}
\begin{subfigure}[c]{0.6\textwidth}
\hspace{-1.9cm}
\includegraphics[width=1.3\textwidth]{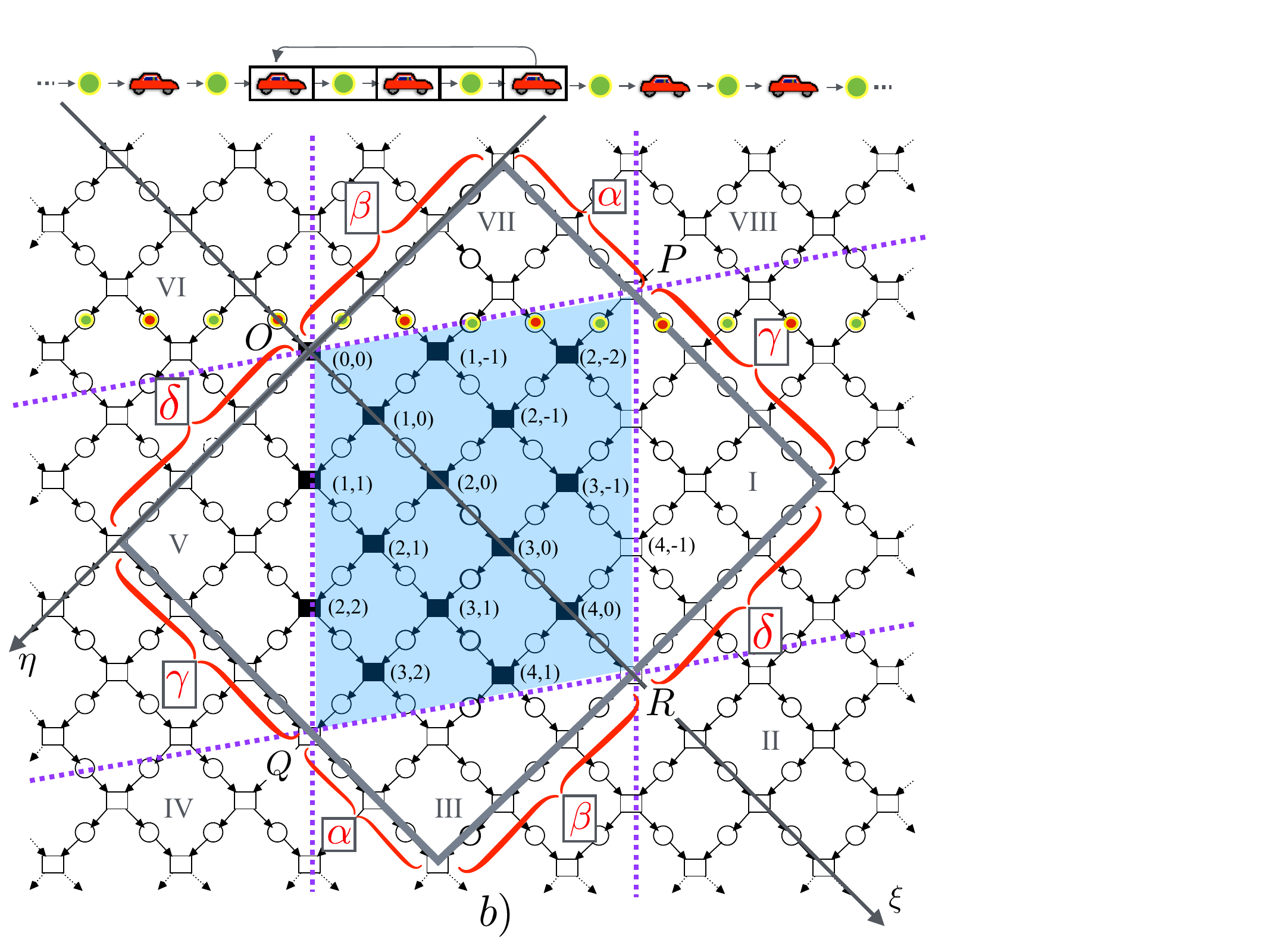}
\label{}
\end{subfigure}
\caption{a) Petri space with infinite queue of cars and b) its folding to a fundamental parallelogram. }
\label{pspace}
\end{figure}


Similar to the Minkowski space \cite{Min-1923}\cite{Cat-2008}, but with discrete space and time steps, with the
\emph{Petri space} Carl Adam Petri created a basis for modelling and visualising movement in space and time
\cite{Petri-NTS}. This is, as shown in Figure \ref{pspace} a), a net (in particular an infinite marked graph) labelled
with coordinates, where only some of the coordinates  are given here.
The left (resp. right) input place of a transition is denoted forward (backward) input place of the transition.
An (infinite) initial marking is drawn to enable the net to be active. This marking corresponds to the (infinite) queue
of cars drawn at the top of the figure.

The net models an infinite number of cars and spaces.
Only six cars of the infinite number of cars are shown in the figure.
For instance, the token in the forward input place of the transition with coordinate $(0,0)$ corresponds to the second
of the cars drawn, as well as the token in the backward
input place of the transition corresponds to the  gap after the aforementioned car. Both, the car and the transition can
make a step, just like the other infinite number of cars and transitions on the right and left. All in all, the net
behaves like the endless queue of cars with all the phenomena of concurrency.

For Petri, it was always of the utmost importance that all modelling in the context of computer applications be carried
out in finite spaces and discrete steps. He therefore limited the queue of cars to a finite length and modelled this in
such a way that he defined a first and last car. As a result of this restriction of the spatial extension, the lower
bound is created by the lines across the transitions $O =(0,0)$ and $Q=(3,3)$ and the upper bound by $P=(2,-3)$ and
$R=(5,0)$ (as vertical dotted lines in
Figure  \ref{pspace} b).

\begin{figure}[htbp]
 \begin{center}
\hspace{0 cm}
        \includegraphics [scale = 0.25]{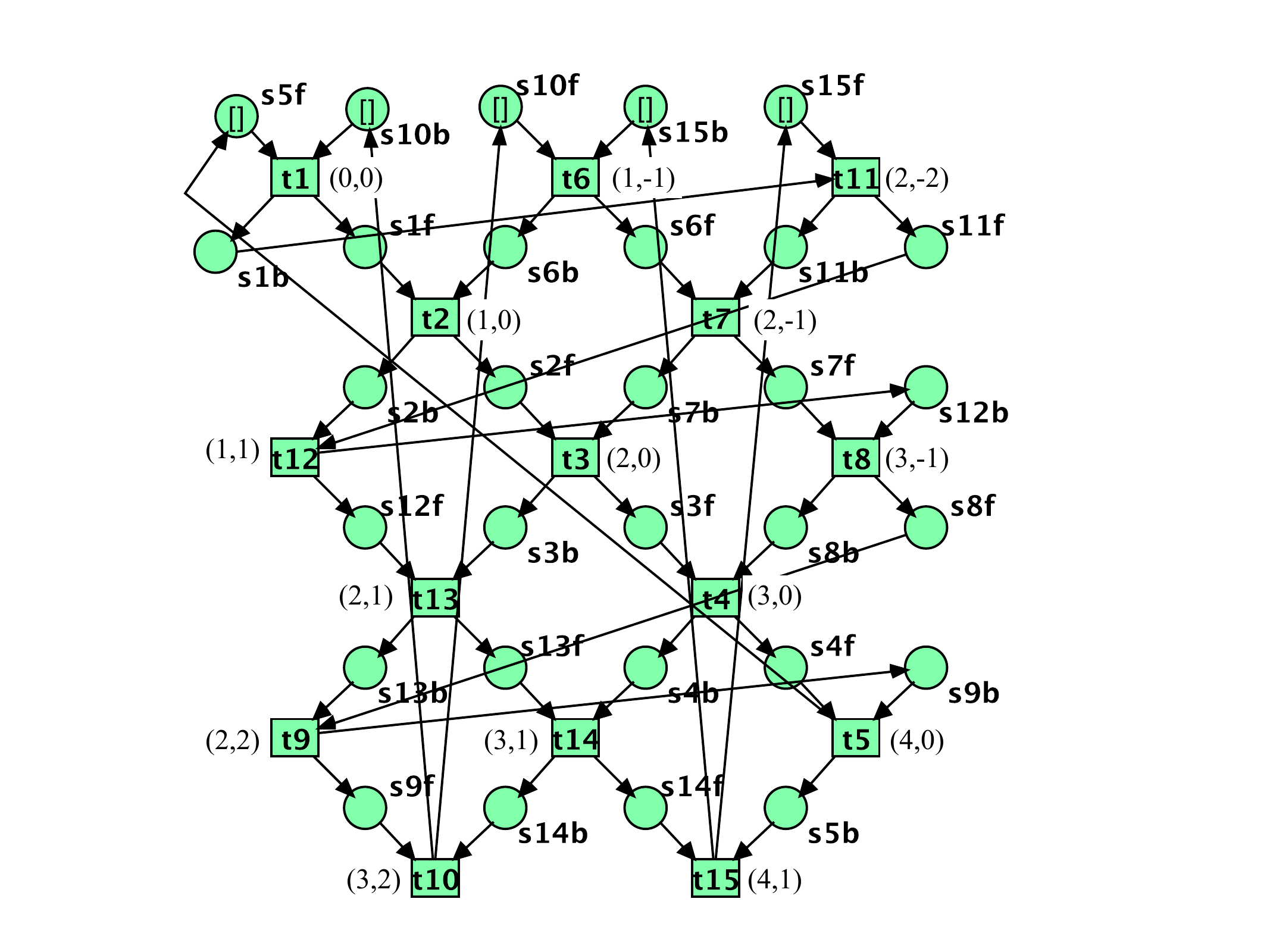}
   \caption{The cycloid $\mathcal{C}( 2,3,3,3 )$. }
        \label{2333}
      \end{center}
\end{figure}

Similarly, the temporal restriction is represented by the lines containing the transitions $O$ and $P$ and the upper
bound by $Q$ and $R$ (as approximately horizontal lines in Figure  \ref{pspace} b). The resulting form is known as a
\emph{fundamental parallelogram} with corners $O, P, Q$ and $R$ of a cycloid.

Petri has represented such a parallelogram by the sections of lengths  $ \alpha, \beta, \gamma$ and $\delta  $ on the
grid lines and it is notated as  $ \mathcal{C}( \alpha, \beta, \gamma, \delta ) $, i.e.  $ \mathcal{C}( 2, 3, 3, 3 ) $
for the example of Figure  \ref{pspace} b). All transitions within this parallelogram and those on the lines containing
the origin $O=(0,0)$ are declared as transitions of the cycloid.

In contrast, transitions on the lines through the diagonally opposite point $R=(5,0)$ are not included, as they are
assigned to the neighbouring parallelograms of the same shape.
The transitions included are shown in black. Their number is
$A = \alpha \cdot \delta + \beta \cdot \gamma = 2 \cdot 3 + 3 \cdot 3 = 15$, as Petri has shown in  \cite{Petri-NTS} .

A cyclical structure is then formed by a folding of all transitions of these neighbouring parallelograms with the same
shape onto the fundamental parallelogram. 
This corresponds to the fact that the fourth car, as represented in the infinite queue,
at the top of the figure takes the place of the second car when  this has become free.

If, for example, the transition $(3,-1)$ occurs and enables the transition $(4,-1)$ in the Petri space, then the same
transition $(3,-1)$ enables $(2,2)$ in the  folding. The resulting net of the cycloid is shown in Figure \ref{2333}. To
show the relationship to the Petri space coordinates, these are inserted. Note, that in fact,   the transition 
$\textbf{t8}=(3,-1)$ enables $\textbf{t9}=(2,2)$, if the second input place is marked.

The initial marking of cycloids has been defined by Petri in an informal way in \cite{Petri-NTS}.
In presentations and discussions, however, he used many examples to explain how this should be specified. On this basis,
a formal definition (Definition \ref{standard-M0}) was given in \cite{ Kummer-Stehr-1997} and \cite{fenske-da}, which
was used in \cite{fenske-da}\cite{Valk-2019} to prove that all cycloids are safe (1-bounded) and live.
The cycloid  $ \mathcal{C}( 2,3,3,3) $ models a circular queue of length $\alpha+\beta = 5$ containing $\beta = 3$ cars
with $\alpha = 2$  gaps in between.

\begin{figure}[htbp]
 \begin{center}
\hspace{0 cm}
        \includegraphics [scale = 0.4]{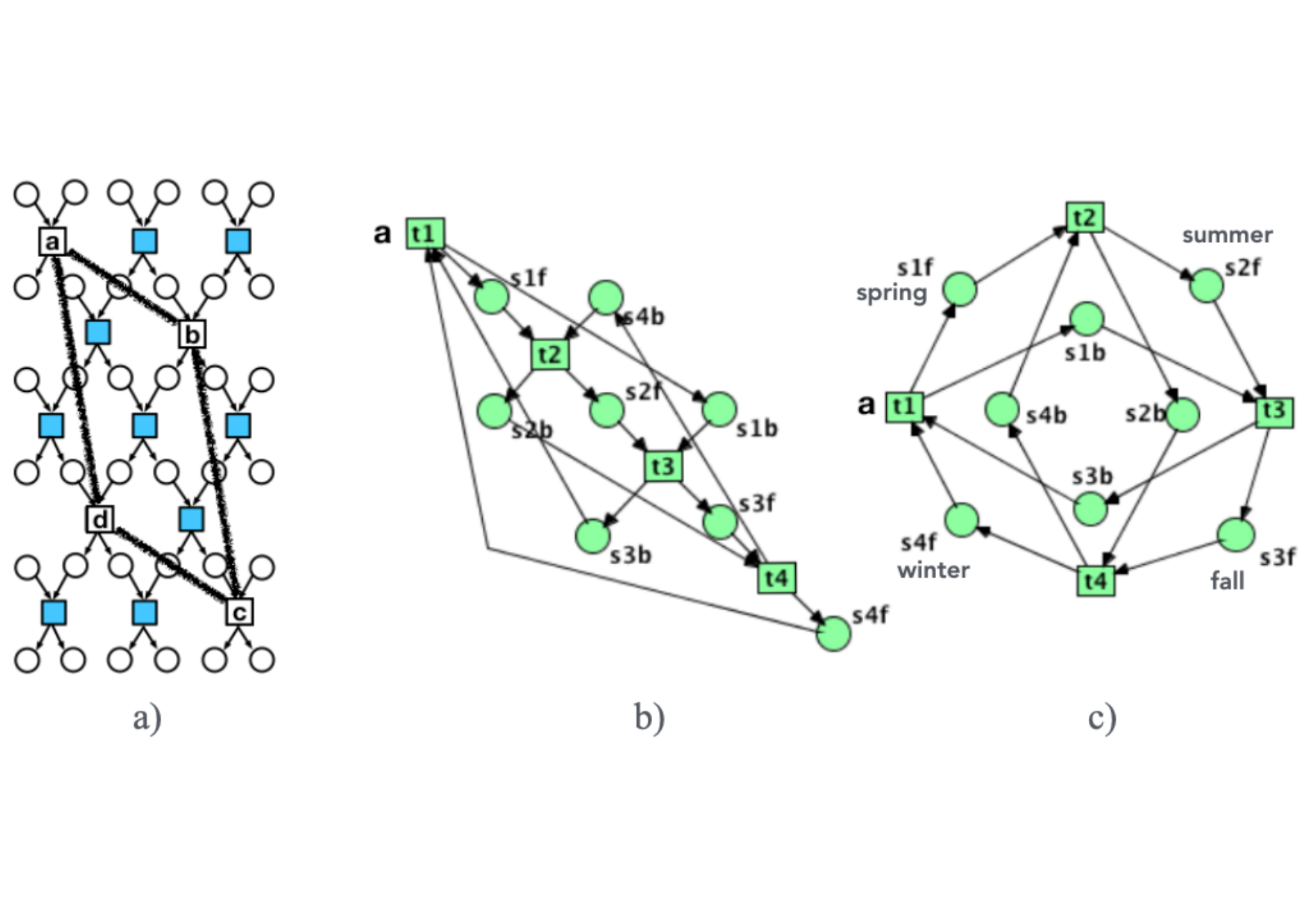}
   \caption{Folding for the cycloid $\mathcal{C}( 2,1,2,1)$ called \emph{4-seasons} by Petri.  }
        \label{4seasons}
      \end{center}
\end{figure}

By the condition that $\beta$ divides $\delta$ \emph{regular cycloids} \cite{Valk-2020} are defined.  This subclass of
cycloids  models synchronised sequential processes. The net representation of  such a cycloid
consists  of $\beta$ strongly synchronised sequential processes of length $p = \frac{A}{\beta}$. An additional
restriction $\beta=\gamma=\delta$  defines the \emph{canonical regular cycloids} whose behaviour corresponds to that of
a cyclic queue of length $p = \alpha + \beta$. As proved in \cite{Valk-2020} they model circular traffic queues of
$\beta$ traffic items (e.g. cars) and $\alpha$ gaps in between. The cycloid  $ \mathcal{C}( 2,3,1,6 ) $ is regular, but
not canonical regular. However, by the results of this article it is isomorphic to the canonical regular cycloid  $
\mathcal{C}( 2,3, 3,3 ) $.

The cycloid  \emph{4-seasons} is one of Petri's favourite examples.
Figure  \ref{4seasons} from \cite{Valk-2019} shows its development from the Petri space in part a) via the folding to a
net in b) to an interpretation originating from Petri in c). Transitions \textbf{b},  \textbf{c} and  \textbf{d} are
folded with  transition \textbf{a}.
Places like \textbf{s2b} model fuzzy specifications, here the statement "\emph{it is summer or fall}".
The same net was also used by Petri to represent an oscillator, coupling the momentum and a position.


In this article, after the introduction to the field of cycloids 
 (\emph{Section 1}), 
 the formal prerequisites for the following results are presented.
 These contain new results, some of which involve complex proofs, which are used in this article on the one hand, but are also of general use for research into the structure of cycloids on the other. Petri formulated nets almost exclusively with places that could contain at most one token. He only advocated nets with multiple tokens if they were used to model resources from many objects. With this in mind, he also introduced cycloids. We follow this approach, also because it is easier to handle and  extensions to multiple tokens are not used in this context. Nets (Definition \ref{def-net}) are not restricted to finite sets of places and transitions, since the Petri space (Definition \ref{petrispace}) is an infinite marked graph. Afterwards, cycloids are defined by finite sets of places and transitions, which are infinite classes of places and transitions of the Petri space. To facilitate understanding of this definition, Table \ref{values} illustrates some elements (six, to be precise) of five  infinite classes using the  example $ \mathcal{C}( 2,3,3,3 ) $. These classes are defined by an equivalence relation using the four parameters  $  \alpha, \beta, \gamma$ and $ \delta  $. 
%
%
%
 This relation must be calculated relatively frequently in proofs. For this purpose, Theorem 
\ref{parameter} uses the matrix of a cycloid. Since this approach can often be used to advantage, it is referred to as (linear) \emph{cycloid algebra}. 
A much more difficult problem is to find  for a given transition   in the  Petri space
the (uniquely determined) equivalent transition of the fundamental parallelogram of the cycloid. This problem can be solved by enumerating all transitions of the fundamental parallelogram and each time checking for equivalence. 
Using  cycloid algebra, however,  Theorem \ref{xy-to-FP} gives a solution by a closed expression that is used in different proofs.
 
 Four instances of \emph{cycloid shearing} have been presented in \cite{Valk-2019} and proved to be net isomorphisms. However, since cycloid algebra was not yet known at that time, we present a new proof by
  Theorem \ref{shear} with the help of this algebra. When analysing cycloids based on their net representation, it is important to know the minimum length of cyclic paths. Again, by Theorem \ref{minimal cycles} a) cycloid algebra is used to solve this problem. Since this solution involves high runtime complexity due to a minimum calculation, parts c) to d) of this theorem deal with special cases without such a minimum operator.  Since case c) covers a particularly large class, cycloids of this class are referred to as \emph{lbc-cycloids}.

 
 In applications a system is represented or  modelled by a description of its observed behaviour, for instance by a transition system or a reachability graph.
\emph{Synthesis}, on the other hand, refers to the implementation of the system in a formalism such as an automaton, Petri net or programming language \cite{reisig-under}. 
For Petri nets, this has been introduced by the \emph{region theory} \cite{Ehr-Roz-1990}\cite{Bernard-1998}, whereby a Petri net is constructed from a given transition system whose reachability graph is isomorphic to the transition system.
By the term \emph{synthesis}, mentioned in the title of this article, we mean a similar and additional approach:
from a Petri net representation of a cycloid the four parameters of a cycloid are computed. While the Petri net may have thousands of transitions and a complex structure, this is then described by only four integers of the cycloid.

 To achieve this goal, reduction rules for cycloids are formulated in Section \ref{sec-reduction} on the basis of the shear isomorphisms from Section \ref{sec-cycloids}, and concepts from rewriting systems are adopted for reduction. As a first result Theorem \ref{Th-normal-a-neq-b} shows that the lbc synthesis theorem (Theorem \ref{synthesis}) allows for an easier computation of the parameters $\gamma$ and $\delta$ for some irreducible cycloids.

 The following Section  \ref{sec-morphism} presents a considerably stronger form of cycloid synthesis. This result is connected with a decision procedure for cycloid isomorphism. 
 The algorithm is much more efficient than a general graph isomorphism decision procedure, the complexity of which is suspected to be  NP-intermediate (not known to be solvable in polynomial time or NP-complete) \cite{Babei-1983}.
 Therefore, it will first be shown that the shear mappings (Theorem \ref{shear}) are not only net isomorphisms (Definition \ref{def-net}) but even cycloid isomorphisms (Definition \ref{def-cyc-iso} and Theorem \ref{shearing}). An important part of the proof of Theorem  \ref{shearing} was anticipated in the proof of Lemma \ref{pi(post)}. Using an algorithm similar to Euclid's algorithm for calculating the greatest common divisor, important properties of irreducible cycloids are derived in Theorem \ref{bd-irreducible}.
 From this, a method is derived that determines the parameters of the irreducible cycloid using path properties of the cycloid net. Since nets of cycloid-isomorphic cycloids must have the same path structure, this method can be used to determine the same parameters of their irreducible cycloid. This makes it possible to determine whether two cycloids are cycloid-isomorphic with lower time complexity  
 of $T(n) = \mathcal{O}(log_2 \, n)$, where $n= max\{\beta_1,\delta_1,\beta_2,\delta_2\}$ for the cycloids  
 $ \mathcal{C}_1( \alpha_1, \beta_1, \gamma_1, \delta_1 ) $ and  
 $ \mathcal{C}_2( \alpha_2, \beta_2, \gamma_2, \delta_2) $ in question.

  It extends Theorem \ref{th-beta-delta-red} by determining not only the parameters of the irreducible cycloid from the cycloid net, but all parameters of the entire reduction chain.  By reading an $\beta\delta$-reduction backwards, one obtains an $\alpha\gamma$-reduction. These reductions are connected by the symmetric transformation, whose isomorphism property is also proven in Theorem \ref{shear} e). 
 Obviously, this transformation is not a cycloid-isomorphism, since forward and backward places are swapped.
 
 
 Substantial progress has been made compared to the previously published workshop version  \cite{ValkMoldt22} of this article. This applies in particular to the representation of reductions as rewriting systems and the properties of $\beta\delta$-irreducible cycloids obtained in Theorem \ref{bd-irreducible}. 
  As a result, new methods of reconstructing the cycloid parameters from the Petri net representation of such cycloids were found.
These results apply to the entire class of cycloids and are not restricted to subclasses as regular cycloids. Circular queues of cars were only used in this introduction to explain foldings of the Petri space and play no role in the rest of this article. A formal definition would go too far and can be found in \cite{Valk-2020}.


 Cycloids have been introduced  by C.A. Petri in \cite{Petri-NTS} in the section on physical spaces, using as examples firemen carrying the buckets with water to extinguish a fire, the shift from Galilei to Lorentz transformation and the representation of elementary logical gates like Quine-transfers.
Besides the far-sighted work of Petri we got insight in his concepts of cycloids by numerous seminars he held at the University of Hamburg \cite{valk-2019-2worlds}.
Based on  formal descriptions of cycloids in \cite{Kummer-Stehr-1997} and \cite{fenske-da} a more elaborate formalisation is given in \cite{Valk-2019}. Semantical extensions to include more elaborate features of traffic systems have been presented in \cite{jessen-moldt}.
The nets of this article are generated by the RENEW-tool\footnote{http://www.renew.de}\cite{Moldt+23b}.

We recall some standard notations for set theoretical relations.
If $R\subseteq A \!\times\! B$ is a relation and $U \subseteq A$ then 
$R[U]:= \{b\,|\,\exists u \in U: (u,b)\in R\}$ is the \emph{image} of $U$ and $R[a]$ stands for $R[\{a\}]$. 
$R^{-1}$ is the \emph{inverse relation} and $R^+$ is the \emph{transitive closure} of $R$ if $A=B$.
Also, if $R\subseteq A \!\times\! A$ is an equivalence relation then $\eqcl[R]{a}$
 is the \emph{equivalence class} of the quotient $A/R$ containing $a$
 and $\eqcl[R]{U} = \{ \eqcl[R]{u} | u \in U \}$ is the set of equivalence classes with representatives in $U$.
 Furthermore  $\Nat$, $\Natp$, $\Int$ and $\Real$ denote the sets of nonnegative integers,  positive integers, integers and real numbers, respectively.
For integers: $a|b$ if $a$ is a factor of $b$.
The $modulo$-function is used in the form 
$a \,mod\,b = a - b \cdot \lfloor \frac{a}{b} \rfloor$, which also holds for negative integers $a \in \Int$.
In particular, $-a\,mod\,b = b-a$ for $0<a\leq b$.  $gcd(a,b)$ denotes the greatest common divisor of $a$ and $b$.



\section{Petri Space and Cycloids}  \label{sec-cycloids}

In general for cycloids all places contain at most one token, as they are to be understood as conditions, like the \emph{condition-event nets} favoured by Petri. 
As this allows for a simpler description of the behaviour of Petri nets, they will  be introduced in this way in the following.

 \begin{definition}[\cite{Valk-2019}] \label{def-net} 
 As usual, a net $ \N{} = (S, T, F)$ is defined by non-empty, disjoint sets 
 $S$ of places and $T$ of transitions, connected by a flow relation 
 $F \subseteq (S \cp T) \cup (T \cp S)$ and $X := S \cup T$.
A transition $t \in T$ is \emph{active} or \emph{enabled} in a marking $M \subseteq S$ if $\; ^{\ndot} t \subseteq M$,
where 
$^{\ndot} x := F^{-1}[x], \; x^{\ndot} := F[x] $
denotes the input and output elements of an element $x \in X$, respectively. 
In this case we obtain $M \stackrel{t}{\rightarrow}M'$ with the follower marking $M' = (M \backslash^{\ndot} t )\cup t^{\ndot}$. 
The relation $\stackrel{*}{\rightarrow}$ is the reflexive and transitive closure of $\rightarrow$. 
 A net together with an initial marking $M_0 \subseteq S$ is called a \emph{net system} $(\N{},M_0)$.
 Given two net systems $ \N{1} = (S_1, T_1, F_1,M_0^1)$ and 
 $ \N{2} = (S_2, T_2, F_2,M_0^2)$ a mapping $f: X_1 \to X_2$ ($X_i = S_i \cup T_i$) is a \emph{net morphism}
 (\cite{Reisig-Smith-87}) if $f(F_1 \cap (S_1 \cp T_1) )\subseteq (F_2 \cap (S_2 \cp T_2)) \cup id$ and 
 $f(F_1 \cap (T_1 \cp S_1) )\subseteq (F_2 \cap (T_2 \cp S_2)) \cup id$ and $f(M^1_0) = M^2_0$, 
 where
 $id$ is the \emph{identity relation} $\{ (x,x) | x \in X_2 \}$.
 It is an \emph{isomorphism} if it is bijective and the inverse mapping $f^{-1}$ is also a net morphism.
 $\N{1}\simeq\N{2}$ denotes isomorphic net systems.
 Omitting the initial states the definitions apply also to nets.
   $ \N{} = (S, T, F)$ is a  \emph{marked graph} (also called  $T$-net or \emph{synchronisation graph}) if $|^{\ndot} s| = |s ^{\ndot}| = 1$ for all places $s\in S$.
\end{definition}

As motivated in the introduction, the Petri space is an infinite marked graph whose transitions are labelled by Cartesian coordinates.  As shown in Figure \ref{P-space+FD} a) the  axes are denoted by the characters $\xi $ and 
$\eta$ and the two output places of a transition $t_{\xi,\eta}$ are called \emph{forward output place} 
$\gsvw{\xi,\eta}$ and \emph{backward output place} $\gsrw{\xi,\eta}$ (see Figure \ref{P-space+FD} b).
$\GSvw[1]$ and $\GSrw[1]$ are the sets of forward and backward  places, respectively, and their union is the set of places of the Petri space, as defined below.

\begin{figure}
	\begin{center}
	\includegraphics[width=0.8\textwidth]{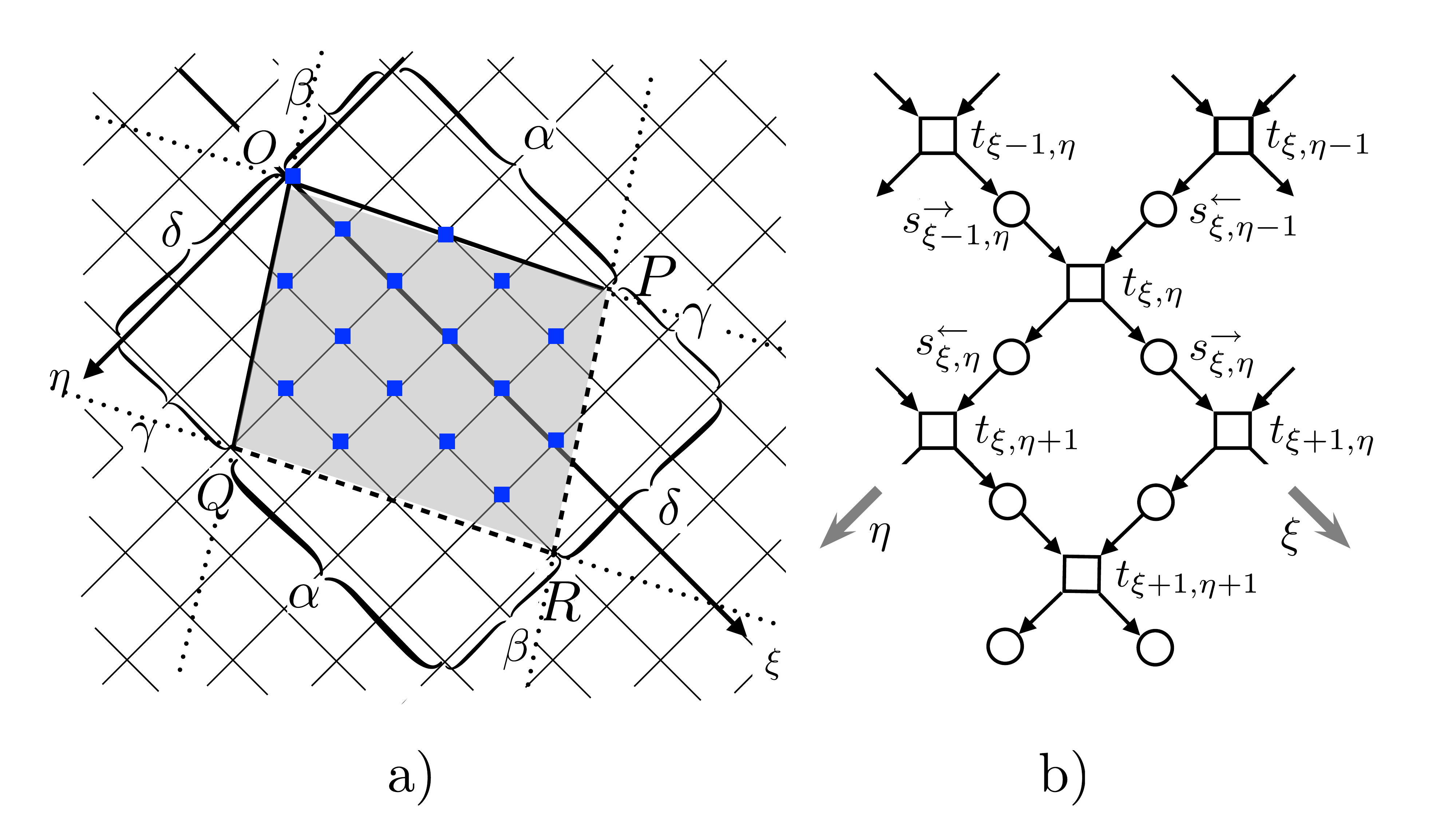}
		\caption{a)  Fundamental parallelogram of $\mathcal{C}(4,2,2,3)$ and  b) Petri space.}
		\label{P-space+FD}
	\end{center}
\end{figure}

\begin{definition} [\cite{Valk-2019}] \label{petrispace}
The $Petri \; space$ 
is defined by the net 
$\PS{1} := (S_1, T_1, F_1)$ \ where
\begin{itemize}
       \item [a)] $S_1 = \GSvw[1]\cup \GSrw[1]$ with $ \;\GSvw[1]  =$ 
   $\col{\gsvw{\xi,\eta}}{\xi,\eta \in \Int},$ 
$\;\GSrw[1] = \col{\gsrw{\xi,\eta}}{\xi,\eta \in \Int }$ 
is the set of places,
       \item [b)]  $ T_1 = \col{t_{\xi,\eta}}{\xi,\eta \in \Int }$ is the set of transitions,
       \item [c)] $ F_1 =$ $ \col{(t_{\xi,\eta},\gsvw{\xi,\eta})}{\xi,\eta \in \Int } \cup 
 \col{(\gsvw{\xi,\eta},t_{\xi+1,\eta})}{\xi,\eta \in \Int } \cup \\ $ 
 $\col{(t_{\xi,\eta},\gsrw{\xi,\eta})}{\xi,\eta \in \Int } \cup \col{(\gsrw{\xi,\eta},t_{\xi,\eta+1})}{\xi,\eta \in \Int }$ 
 is the flow relation.
\end{itemize}     

 $\GSvw[1]$ is the set of \emph{forward places} and $\GSrw[1]$ the set of 
 \emph{backward places}.
 $ \prenbfw{t_{\xi,\eta}}:= \gsvw{\xi-1,\eta}$ is the forward input place of $t_{\xi,\eta}$ and in the same way 
 $ \prenbbw{t_{\xi,\eta}}:= \gsrw{\xi,\eta-1}$,
 $ \postnbfw{t_{\xi,\eta}} := \gsvw{\xi,\eta}$ and
 $\postnbbw{t_{\xi,\eta}}:= \gsrw{\xi,\eta}$ (see Figure~\ref{P-space+FD} b).

\end{definition}

In two steps, by a twofold folding with respect to time and space, Petri defined the cyclic structure of a cycloid, by parameters  $ \alpha, \beta, \gamma$ and $ \delta $ from $\Natp$. As one of these steps is  
 a folding $f$ with respect to space, the transition $P = t_{\alpha,-\beta}$ is merged with the origin $O = t_{0,0}$
 (see Figures \ref{pspace} and \ref{P-space+FD} a). 
 The corresponding folding is
 $f(t_{\xi+m\cdot\alpha,\eta-m\cdot\beta}) = f(t_{\xi,\eta})$, fusing all points 
  $t_{\xi+m\cdot\alpha,\eta-m\cdot\beta}$ 
   of the Petri space with $t_{\xi,\eta}$ where $m\in \Int$.  
The resulting still infinite net is called a \emph{time orthoid} (\cite{Petri-NTS}, page 37), as it extends infinitely in temporal future and past.
The second step is a folding with $f(t_{\xi+n\cdot\gamma,\eta+n\cdot\delta}) = t_{\xi,\eta}$ with
$n\in \Int$ reducing the system to a cyclic structure also in time direction. By this folding the transition $Q$ is merged with the origin $O$. In general, the cycloid is obtained by a linear combination of these two cases, by applying $m$-times the first case and $n$-times the second: $f(t_{\xi,\eta})=t_{\xi+m\cdot\alpha + n\cdot\gamma,\eta-m\cdot\beta+n\cdot\delta}$ with $m,n \in \Int$. 
By these two foldings a transition $ t_{\xi,\eta}$ is merged with each transition $t_{\xi+m\cdot\alpha+n\cdot\gamma,\,\eta-m\cdot\beta+n\cdot\delta} $ of the Petri space for all $ \xi, \eta, m, n \in \Int $. 
Hence, the corner $R = t_{\alpha+\gamma,\delta-\beta}$ is also merged with the origin $O$ by choosing $(m,n) = (1,1)$.
To refer to the  cycloid  $ \mathcal{C}( 2,3,3,3) $ discussed in Section \ref{sec-intro} transition $t_{2,2}$ is fused with $t_{4,-1}$ by choosing $(m,n) = (1,0)$ since $f(t_{2,2})=f(t_{\xi,\eta})=
t_{\xi+m\cdot\alpha + n\cdot\gamma,\;\eta-m\cdot\beta+n\cdot\delta} = 
t_{2+1\cdot 2 + 0\cdot 3,\;2-1\cdot 3+0\cdot 3} = t_{4,-1}$.
Note that a cycloid is a net, as will be defined next, while the fundamental parallelogram is its representation in the Petri space.


To keep the example more general, 
in Figure~\ref{P-space+FD} a) the values 
 $ ( \alpha, \beta, \gamma, \delta ) = (4,2,2,3)$ for a not regular cycloid are chosen.
 In this representation of a cycloid, called \emph{fundamental parallelogram}, the squares of the transitions as well as the circles of the places are omitted.
 All transitions with coordinates within the parallelogram belong to the cycloid including those on the lines between $O,Q$ and $O, P$, but excluding those of the points $Q,R,P$ and those on the dotted edges between them.
 All copies of the fundamental parallelogram outside along the dotted lines,  having the same shape,  are merged with it by the quotient operation of the equivalence relation in Definition \ref{cycloid}. 
 In Figure \ref{pspace} b) the directly neighbouring parallelograms are labelled with Roman numerals \textbf{I} to \textbf{VIII}, clockwise starting with the right neighbour parallelogram.


\begin{definition} [\cite{Valk-2019}] 
\label{cycloid}
A \emph{cycloid} is a net \ $ \mathcal{C}( \alpha, \beta, \gamma, \delta ) = (S, T, F)$, defined by parameters 
 \ $ \alpha, \beta, \gamma, \delta \in \Natp$, by a quotient \cite{Reisig-Smith-87} of the Petri space $\PS{1} := (S_1, T_1, F_1)$ \ 
with respect to the equivalence relation
$\mo\zykaeq 
\subseteq X_1 \cp X_1 $ defined by
$ x_{\xi,\eta} \zykaeq x_{\xi+m\cdot\alpha+n\cdot\gamma,\,\eta-m\cdot\beta+n\cdot\delta} $
for all $ \xi, \eta, m, n \in \Int $.
In this definition, where $X_1 = S_1 \cup T_1$, type preservation is demanded: 
$ \mo\zykaeq[\GSvw[1]] \subseteq \GSvw[1]$, 
$\mo\zykaeq[\GSrw[1]]  \subseteq \GSrw[1] $ and
$\mo\zykaeq[T_1]  \subseteq T_1     $.
\begin{itemize}
       \item [a)] $ S =   \eqcl[\zykaeq]{\GSvw[1]}   \cup \eqcl[\zykaeq]{\GSrw[1]}  $ \\ is the set of equivalent classes of the forward and backward places of $\PS{1} $,
       \item [b)] $T =   \eqcl[\zykaeq]{T_1} $ is the set of equivalent classes of the transitions of $\PS{1} $,
       \item [c)] The flow relation is defined via the representatives $x'$ and $y'$ of the classes:  \\ 
 $ \eqcl[\zykaeq]{x} \mb{F} \eqcl[\zykaeq]{y} \: \Leftrightarrow
\exists\,{x'\in\eqcl[\zykaeq]{x}}\,\exists\,y' \in \eqcl[\zykaeq]{y}: \,x' F_1 y' $
 \ for all $x, y \in X_1 $. 
 \end{itemize}  
The matrix $\mathbf{A} = \begin{pmatrix} \alpha & \gamma \\ -\beta & \delta \end{pmatrix} $ is called the matrix of the cycloid.
Petri denoted the number $|T|$ of transitions as the area $A$ of the cycloid and proved in \cite{Petri-NTS} its value to $|T| =A =\alpha\cdot\delta+\beta\cdot\gamma$ which equals the determinant $A = det(\mathbf{A})$.
The embedding of a cycloid in the Petri space is called \emph{fundamental parallelogram} 
(see Figure~\ref{P-space+FD} a).
When the distinction between forward places $\GSvw $ and backward places $\GSrw $ is explicitly given we denote it as the \emph{cycloid net} of the cycloid and represent it by
 $\N{} =(\GSvw,\GSrw,T,F)$.
\end{definition}

\begin{lemma} \label{corr}
Definition \ref{cycloid} c) is correct: the definition of $F$ is independent of the choice of representatives.
\end{lemma}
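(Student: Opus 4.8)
# Proof Proposal for Lemma \ref{corr}

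The plan is to show that the relation $F$ on equivalence classes is well-defined, i.e.\ that whether $\eqcl[\zykaeq]{x} \mb{F} \eqcl[\zykaeq]{y}$ holds does not depend on which representatives $x', y'$ we pick from the two classes. The subtle point is that the definition in \ref{cycloid} c) uses an \emph{existential} quantifier over representatives, so there is nothing to prove about the \emph{existence} of a witnessing pair --- that is fixed by the classes themselves. What genuinely requires justification is the \emph{converse} direction needed for $F$ to be a faithful flow relation: if one pair of representatives $x' F_1 y'$ witnesses the relation, then replacing $x'$ by an equivalent element $x''$ still yields a consistent incidence structure, so that the folding $f$ really is a net morphism in the sense of Definition \ref{def-net}. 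Concretely, I would prove the key statement: if $x' \zykaeq x''$ and $x' F_1 y'$, then there exists $y'' \zykaeq y'$ with $x'' F_1 y''$ (and symmetrically). This is what makes the quotient a \emph{net} quotient rather than merely a relational image.

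First I would write out the shift explicitly. By Definition \ref{cycloid}, $x' \zykaeq x''$ means $x'' = x'_{\xi + m\alpha + n\gamma,\; \eta - m\beta + n\delta}$ for some $m, n \in \Int$, where $x'$ has Petri-space coordinates $(\xi,\eta)$ and both elements are of the same type (forward place, backward place, or transition) by the type-preservation clause. The strategy is then a case analysis over the four generating families of $F_1$ listed in Definition \ref{petrispace} c). For each generator --- say $(t_{\xi,\eta}, \gsvw{\xi,\eta})$ --- I apply the same integer shift $(m,n)$ simultaneously to both coordinate labels, obtaining $(t_{\xi + m\alpha + n\gamma,\, \eta - m\beta + n\delta},\; \gsvw{\xi + m\alpha + n\gamma,\, \eta - m\beta + n\delta})$. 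The crucial observation is that the translation by $(m\alpha + n\gamma,\, -m\beta + n\delta)$ acts uniformly on $\xi$ and $\eta$ and commutes with the constant offsets ($+1$ in $\xi$, or $+1$ in $\eta$) that define the four generators of $F_1$. Hence the shifted pair is again a generator of $F_1$ of the same type, and the shifted target is by construction $\zykaeq$-equivalent to the original target. This yields exactly the required $y''$.

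The main obstacle --- and the only place needing care --- is verifying that the translation genuinely preserves each of the four flow generators, because the offsets are attached to different coordinates ($t_{\xi,\eta}$ connects forward to $t_{\xi+1,\eta}$ but backward to $t_{\xi,\eta+1}$). I would handle this by noting that applying $f$ to the generator $(\gsvw{\xi,\eta}, t_{\xi+1,\eta})$ sends it to $(\gsvw{\xi + m\alpha + n\gamma,\, \eta - m\beta + n\delta},\; t_{(\xi+1) + m\alpha + n\gamma,\, \eta - m\beta + n\delta})$, and that the second component equals $t_{(\xi + m\alpha + n\gamma) + 1,\, (\eta - m\beta + n\delta)}$, i.e.\ again of the form "$\xi$-coordinate incremented by one," so it lies in the same generating family. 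The analogous check for the $\eta+1$ generators is symmetric. Since $\zykaeq$ respects type by hypothesis, forward edges map to forward edges and backward to backward, so the incidence type is preserved as well.

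Finally I would assemble these observations into the statement of well-definedness: the four case checks together show that $F$ as defined is precisely the image under $f$ of the flow relation $F_1$, independent of representatives, and that it inherits the type-separated structure making the quotient a cycloid net $\N{} = (\GSvw, \GSrw, T, F)$. This completes the verification that Definition \ref{cycloid} c) is consistent.
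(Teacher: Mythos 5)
Your proposal is correct and follows essentially the same route as the paper: the paper also fixes an equivalent representative (e.g.\ $\gsvw{\xi',\eta'}\in\eqcl[\zykaeq]{\gsvw{\xi-1,\eta}}$ with $\xi'=\xi-1+m\alpha+n\gamma$, $\eta'=\eta-m\beta+n\delta$) and verifies by direct computation that the attached transition $t_{\xi'+1,\eta'}$ is again $\zykaeq$-equivalent to $t_{\xi,\eta}$, i.e.\ exactly your observation that the lattice shift $(m\alpha+n\gamma,\,-m\beta+n\delta)$ commutes with the $+1$ offsets defining the four generating families of $F_1$. Like you, the paper treats one of the four incidence cases explicitly and notes the remaining cases are analogous.
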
 
\begin{proof} 
Each transition $t_{\xi,\eta}$ is connected to four  places 
 $ \gsvw{\xi-1,\eta}$,
 $ \gsrw{\xi,\eta-1}$,
 $  \gsvw{\xi,\eta}$ and
 $\gsrw{\xi,\eta}$ (see Figure~\ref{P-space+FD} b). The following proof considers the first case as the remaining are similar. Let be $(\gsvw{\xi-1,\eta},t_{\xi,\eta})\in F_1$ and  $\gsvw{\xi',\eta'}\in\eqcl[\zykaeq]{\gsvw{\xi-1,\eta}}$
with $(\gsvw{\xi',\eta'},t_{\xi'+1,\eta'}  )\in F_1$. We have to prove $t_{\xi'+1,\eta'}\in \eqcl[\zykaeq]{t_{\xi,\eta}}$
or $t_{\xi'+1,\eta'} \equiv t_{\xi,\eta}$.
By definition $\gsvw{\xi',\eta'}\in\eqcl[\zykaeq]{\gsvw{\xi-1,\eta}}$ implies 
$\xi' = \xi-1+m\cdot\alpha+n\cdot\gamma$ and
$\eta' = \eta-m\cdot\beta+n\cdot\delta$
for some $m,n \in \Int$. Therefore 
$t_{\xi'+1,\eta'} = t_{\xi-1+m\cdot\alpha+n\cdot\gamma+1,\eta-m\cdot\beta+n\cdot\delta} =$
$ t_{\xi+m\cdot\alpha+n\cdot\gamma,\eta-m\cdot\beta+n\cdot\delta}  \equiv $
$t_{\xi,\eta}$.
\end{proof}  

The independence of the choice of representatives also extends to the relationships between classes of transitions of a cycloid. We will give an example using the classes  in Table \ref{values} for the cycloid 
$ \mathcal{C}( 2,3,3,3 ) $.


In  graphical representations like Figure \ref{2333} or Figure \ref{4seasons} forward  and backward places
are distinguished by the suffix \textbf{f} and \textbf{b}, respectively,  in their  identifiers.
Note that the equivalence relation $\mo\zykaeq $ is also defined on the  places of the Petri space. To compute the classes of the equivalence relation  some of the values of $\xi+m\cdot \alpha + n\cdot \gamma$ and $\eta-m\cdot \beta+n\cdot \delta$ are given in Table \ref{values} for the cycloid  $ \mathcal{C}( 2,3,3,3) $ from Section \ref{sec-intro}. Only one of these values can denote a transition $t_{\xi,\eta}$ within the fundamental parallelogram of Figure \ref{pspace} b). The corresponding coordinates are shown in bold. Note that the first and last line describe the same class
with partly different representatives from their infinite number of elements.

To give an example for the Definition of the flow relation (case c) of Definition \ref{cycloid} we consider transition 
\textbf{t9} and its forward input place \textbf{s8f} of the cycloid  $ \mathcal{C}( 2,3,3,3 ) $ of Figure \ref{2333}.
The coordinates of \textbf{t9}  are $(2,2)$ and by the first line of Table \ref{values} some values of its class are
$\textbf{t9} = \eqcl[\zykaeq]{t_{\textbf{2,2}}} = \{  t_{\textbf{2,2}}, t_{4,-1}, t_{5,5}, t_{0,5}, t_{7,2}, t_{9,-1}, \cdots\}$. In the Petri space the forward input place of $t_{\textbf{2,2}}$  is the forward output place of $t_{1,2}$ and its class is 
$ \eqcl[\zykaeq]{s^\rightarrow_{1,2}} 
= \{  s^\rightarrow_{1,2},s^\rightarrow_{\mathbf{3,-1}},s^\rightarrow_{4,5},s^\rightarrow_{-1,5}, s^\rightarrow_{6,2}, s^\rightarrow_{8,-1} \cdots\}$ by the third line of Table \ref{values}. 
The only element of this class, that lies in the fundamental parallelogram, is $s^\rightarrow_{\mathbf{3,-1}}$ which is $\postnbfw{\mathbf{t8}} = \mathbf{s8f}$. Hence, the relation $ \eqcl[\zykaeq]{x} \mb{F} \eqcl[\zykaeq]{y} $ from Definition \ref{cycloid} c) implies $( \mathbf{s8f},\mathbf{t9}  ) \in F_{\mathcal{C}( 2,3,3,3 )}$ in the cycloid of Figure \ref{2333}.
The three remaining edges at transition $\mathbf{t9} $ can be determined in the same way.

An example for the independence of the choice of representatives 
is now given by  this table.
The transition $t_{3,-1}$ is found in the third row, representing the the class of 
$(\xi,\eta) =(1,2)$. Its successor in $\xi-$direction is $t_{4,-1}$ in the class of $(\xi,\eta) =(2,2)$ of the first row. Now, we observe that all elements of the first row are obtained from the elements of the third row in the same column by adding the value $(1,0)$.

\begin{table}[htbb]
 \begin{center}
\caption{Values of $(\xi+m\cdot 2 + n\cdot 3,\eta-m\cdot 3+n\cdot 3$) for the cycloid $ \mathcal{C}( 2,3,3,3 ) $.}
\label{values}
\begin{tabular}{||c||c|c|c|c|c|c|l}
\hline
 $ \mathcal{C}( 2,3,3,3 ) $ &$(m,n) $  &  $(m,n) $  & $(m,n) $ &$(m,n) $  & $(m,n) $ & $(m,n) $ &…\\  
 $$&$=(0,0)$  &  $=(1,0)$  & $=(0,1)$ &$=(-1,0) $ & $=(1,1)$ & $=(2,1)$ &…\\  \hline \hline 
 $(\xi,\eta) =(2,2)$&$ \mathbf{(2,2)} $ & $(4,-1)$& $(5,5)$& (0,5)&$(7,2)$ & $(9,-1)$  & …  \\
  \hline
  $(\xi,\eta) =(2,1)$&$  \mathbf{(2,1)} $ & $(4,-2)$& $(5,4)$&(0.4)& $(7,1)$ & $(9,-2)$&  … \\
  \hline
  $(\xi,\eta) =(1,2)$&$ (1,2) $ & $ \mathbf{(3,-1)}$& $(4,5)$&(-1,5)& $(6,2)$ & $(8,-1)$&  … \\
  \hline
  $(\xi,\eta) =(2,3)$&$ (2,3) $ & $ \mathbf{(4,0)}$& $(5,6)$& (0,6)&$(7,3)$ & $(9,0)$&  … \\
   \hline
  $(\xi,\eta) =(4,-1)$&$ (4,-1) $ & $ (6,-4)$& $(7,2)$& $(\mathbf{2,2)}$&$(9,-1)$ & $(11,-4)$&  … \\
  \hline
   \hline
\end{tabular}
 \end{center}
       \end{table}
       


Terms, as introduced for the Petri space, as the forward output place  $ \postnbfw{t_{\xi,\eta}} := \gsvw{\xi,\eta}$ also apply for cycloids. As each place has a unique input and output transition, the notions are extended to transitions, as for the following example. The \emph{forward output transition} of $t_{\xi,\eta}$ denotes the (unique) output transition of  the place $ \postnbfw{t_{\xi,\eta}} = \gsvw{\xi,\eta}$, namely the transition
$  {(\gsvw{\xi,\eta})}^{\ndot} = t_{\xi+1,\eta}$ (see Figure \ref{P-space+FD} b).

%
For proving the equivalence of two points in the Petri space the following procedure\footnote{The algorithm is implemented under \url{https://cycloids.de}.} is useful.


\begin{theorem}[\cite{Valk-2020}] \label{parameter}
\begin{itemize}
       \item [a)]    
Two points $\vec{x}_1, \vec{x}_2\in X_1$ are equivalent $\vec{x}_1 \equiv \vec{x}_2$ if and only if
 for the difference $\vec{v} := \vec{x_2}-\vec{x_1}$
 the parameter vector 
 $\pi(\vec{v}) = \frac{1}{A} \cdot\mathbf{B} \cdot \vec{v}$ has integer values, where $A$ is the area and 
 $\mathbf{B} = \begin{pmatrix} \delta & -\gamma \\ \beta & \alpha \end{pmatrix}$.
 \item [b)] 
 In analogy to Definition \ref{cycloid} we obtain 
 $\vec{x}_1 \equiv \vec{x}_2 \Leftrightarrow$ 
 $\exists \;m, n \in \Int: \vec{x_2}-\vec{x_1} =\mathbf{A}\begin{pmatrix} m \\ n \end{pmatrix}$.
\end{itemize}  
\end{theorem}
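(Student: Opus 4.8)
The plan is to prove part b) first, as it is essentially a direct transcription of the equivalence relation of Definition \ref{cycloid} into matrix form, and then to derive part a) from it by recognizing $\frac{1}{A}\mathbf{B}$ as the inverse of the cycloid matrix $\mathbf{A}$.

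For part b), I would simply unfold Definition \ref{cycloid}. By that definition, $\vec{x}_1 \equiv \vec{x}_2$ holds exactly when there are integers $m,n \in \Int$ with $\vec{x}_2 = \vec{x}_1 + \begin{pmatrix} m\alpha + n\gamma \\ -m\beta + n\delta \end{pmatrix}$. Writing $\vec{v} := \vec{x}_2 - \vec{x}_1$ and reading the displayed column as the product of $\mathbf{A}$ with $\begin{pmatrix} m \\ n \end{pmatrix}$, this condition becomes $\vec{v} = \mathbf{A}\begin{pmatrix} m \\ n \end{pmatrix}$ with $\mathbf{A} = \begin{pmatrix} \alpha & \gamma \\ -\beta & \delta \end{pmatrix}$, which is precisely the assertion of b).

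For part a), the single computation that does the work is the matrix product $\mathbf{A}\mathbf{B}$. I would verify by direct multiplication that $\mathbf{A}\mathbf{B} = \mathbf{B}\mathbf{A} = \begin{pmatrix} A & 0 \\ 0 & A \end{pmatrix}$, where $A = \alpha\delta + \beta\gamma = \det(\mathbf{A})$; the off-diagonal entries cancel since $\delta\gamma - \gamma\delta = 0$ and $\beta\alpha - \alpha\beta = 0$. Because $A > 0$ for parameters in $\Natp$, the matrix $\mathbf{A}$ is invertible with $\mathbf{A}^{-1} = \frac{1}{A}\mathbf{B}$, so that $\pi(\vec{v}) = \frac{1}{A}\mathbf{B}\,\vec{v} = \mathbf{A}^{-1}\vec{v}$.

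Combining the two parts finishes the proof via a short equivalence chain. If $\vec{x}_1 \equiv \vec{x}_2$, then by b) there are $m,n \in \Int$ with $\vec{v} = \mathbf{A}\begin{pmatrix} m \\ n \end{pmatrix}$; multiplying by $\mathbf{A}^{-1}$ gives $\pi(\vec{v}) = \begin{pmatrix} m \\ n \end{pmatrix}$, which has integer entries. Conversely, if $\pi(\vec{v})$ has integer entries $m,n$, then $\vec{v} = \mathbf{A}\,\pi(\vec{v}) = \mathbf{A}\begin{pmatrix} m \\ n \end{pmatrix}$, so b) yields $\vec{x}_1 \equiv \vec{x}_2$. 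I expect no genuine obstacle here: the only points requiring care are keeping straight which matrix inverts which and confirming that the off-diagonal entries of $\mathbf{A}\mathbf{B}$ cancel, with the entire substance of the argument resting on the one routine identity $\mathbf{A}\mathbf{B} = \begin{pmatrix} A & 0 \\ 0 & A \end{pmatrix}$.
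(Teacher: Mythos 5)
Your proposal is correct and follows essentially the same route as the paper: unfold Definition \ref{cycloid} into the matrix form $\vec{v}=\mathbf{A}\begin{pmatrix} m \\ n \end{pmatrix}$ and then identify $\frac{1}{A}\mathbf{B}$ as $\mathbf{A}^{-1}$ (the paper cites the standard adjugate formula where you verify $\mathbf{A}\mathbf{B}=A\cdot I$ by hand, a negligible difference). Nothing is missing.
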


\begin{proof}
For  $\vec{x}_1 := (\xi_1,\eta_1),
 \vec{x}_2 := (\xi_2,\eta_2), 
\vec{v} := \vec{x}_2 - \vec{x}_1$
from Definition \ref{cycloid} we obtain \ in vector form: 
$\vec{x}_1 \equiv \vec{x}_2 \Leftrightarrow 
\exists \, m,n \in \Int: \begin{pmatrix} \xi_2 \\ \eta_2  \end{pmatrix} =  
\begin{pmatrix} \xi_1 +m\cdot\alpha +n\cdot\gamma \\ \eta_1-m\cdot\beta + n\cdot\delta  \end{pmatrix}$
$\Leftrightarrow$ 
$\exists \, m,n \in \Int:$
$\vec{v}=\begin{pmatrix} \xi_2 -\xi_1\\ \eta_2 - \eta_1 \end{pmatrix} = \begin{pmatrix}  m\cdot\alpha+n\cdot\gamma\\ -m\cdot\beta+n\cdot\delta  \end{pmatrix} = \begin{pmatrix} \alpha & \gamma \\ -\beta & \delta \end{pmatrix} \begin{pmatrix} m \\ n  \end{pmatrix} = \mathbf{A}\begin{pmatrix} m \\ n  \end{pmatrix}$ 
$\Leftrightarrow$
$\begin{pmatrix} m \\ n  \end{pmatrix} = \mathbf{A}^{-1}\vec{v}$ 
$\in \Int \times \Int$ . 
It is well-known that $\mathbf{A}^{-1} = \frac{1}{det(\mathbf{A})}\mathbf{B}$ 
if $det(\mathbf{A}) > 0$ (see any book on linear algebra). 
The condition $det(\mathbf{A}) = A = \alpha\cdot\delta + \beta\cdot\gamma >0$ is satisfied by the definition of a cycloid.
 \end{proof}
 

Taking up the example of the equivalent points $\vec{x}_1 = (4,-1)$ and $ \vec{x}_2 = (2,2)$ from  
$ \mathcal{C}( 2,3,3,3 ) $ in Figure \ref{pspace}, we calculate $\vec{v} := \vec{x_2}-\vec{x_1} = (-2,3)$ and
$\pi(\vec{v}) = \frac{1}{A} \cdot\mathbf{B} \cdot \vec{v} = \frac{1}{15}\cdot  \begin{pmatrix} 3 & -3 \\ 3 & 2 \end{pmatrix} \cdot  \begin{pmatrix} -2 \\ 3 \end{pmatrix}
 = 
 \begin{pmatrix} -1 \\ 0 \end{pmatrix}$. The result agrees with the entry of the fifth column and sixth line of Table \ref{values}.

The shear mappings\footnote{We would like to thank Olaf Kummer for pointing out these isomorphisms.}
for cycloids defined below form the basis of this article.
We give here a proof using cycloid algebra, which was not yet known when the article \cite{Valk-2019} had been published.

\begin{theorem}[shear mappings \cite{Valk-2019}]\label{shear}
The following cycloids are net isomorphic to $\mathcal{C}(\alpha,\beta,\gamma,\delta) $ (Definition \ref{def-net}) :\begin{itemize}
     \item [a)] $\mathcal{C}(\alpha,\beta,\gamma -  \alpha,\delta+\beta) $ if  $\gamma >  \alpha$, 
      \item [b)] $\mathcal{C}(\alpha,\beta,\gamma +  \alpha,\delta- \beta) $ if $\delta >  \beta$.
      \item [c)] $\zyk(\alpha-\gamma,\beta+\delta,\gamma,\delta) $ if $\alpha > \gamma$, 
       \item [d)]  $\zyk(\alpha+\gamma,\beta-\delta,\gamma,\delta) $ if $\beta > \delta$,
       \item [e)] $\mathcal{C}(\beta,\alpha,\delta, \gamma)$ (the $symmetric \; cycloid$ of $\mathcal{C}(\alpha,				\beta,\gamma,\delta) $).
\end{itemize}
\end{theorem}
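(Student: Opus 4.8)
The plan is to reduce everything to the lattice characterisation of equivalence provided by Theorem \ref{parameter} b): two points satisfy $\vec{x}_1 \equiv \vec{x}_2$ if and only if $\vec{x}_2 - \vec{x}_1 \in \mathbf{A}\,(\Int \times \Int)$, the lattice spanned by the columns of $\mathbf{A}$. Consequently the equivalence relation $\equiv$, and hence the cycloid as a quotient of the Petri space (Definition \ref{cycloid}), depends on the four parameters only through the column lattice $\mathbf{A}\,(\Int \times \Int)$. This reduces the isomorphism claims to statements about lattices and unimodular changes of basis.

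For parts a)--d) I would exhibit in each case an integer matrix $U$ with $\det U = 1$ such that the matrix $\mathbf{A}'$ of the transformed cycloid satisfies $\mathbf{A}' = \mathbf{A}\,U$. Explicitly one takes $U = \begin{pmatrix}1 & -1\\ 0 & 1\end{pmatrix}$ for a), $U = \begin{pmatrix}1 & 1\\ 0 & 1\end{pmatrix}$ for b), $U = \begin{pmatrix}1 & 0\\ -1 & 1\end{pmatrix}$ for c), and $U = \begin{pmatrix}1 & 0\\ 1 & 1\end{pmatrix}$ for d); each identity is a one-line column computation. Since $U$ is unimodular, $\mathbf{A}'\,(\Int \times \Int) = \mathbf{A}\,(\Int \times \Int)$, so the two cycloids are quotients of the same Petri space by the very same equivalence relation and are therefore equal as nets, the identity map being the required isomorphism. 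The side conditions ($\gamma > \alpha$ in a), $\delta > \beta$ in b), $\alpha > \gamma$ in c), $\beta > \delta$ in d)) serve only to guarantee that the changed parameter stays in $\Natp$, so that the target tuple is a legitimate cycloid.

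Part e) is the genuinely different case, and I expect it to be the main obstacle, because the symmetric matrix $\mathbf{A}_e = \begin{pmatrix}\beta & \delta\\ -\alpha & \gamma\end{pmatrix}$ is not of the form $\mathbf{A}\,U$; instead it arises by acting on the domain. Here I would use the reflection $L(\xi,\eta) = (\eta,\xi)$ of the Petri space. A short computation gives $L\,\mathbf{A} = \mathbf{A}_e\begin{pmatrix}-1 & 0\\ 0 & 1\end{pmatrix}$, and since the right factor is unimodular, $L$ carries the lattice $\mathbf{A}\,(\Int \times \Int)$ bijectively onto $\mathbf{A}_e\,(\Int \times \Int)$. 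Hence $\vec{x}_1 \equiv \vec{x}_2$ in $\mathcal{C}(\alpha,\beta,\gamma,\delta)$ holds iff $L\vec{x}_1 \equiv L\vec{x}_2$ in $\mathcal{C}(\beta,\alpha,\delta,\gamma)$, so $L$ descends to a well-defined bijection of equivalence classes.

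The remaining and most delicate step is to confirm that this induced map is a net isomorphism in the sense of Definition \ref{def-net}. For this I would check directly on the Petri space (Definition \ref{petrispace}) that $L$ sends transitions to transitions while interchanging the two place types, namely $\gsvw{\xi,\eta} \mapsto \gsrw{\eta,\xi}$ and $\gsrw{\xi,\eta} \mapsto \gsvw{\eta,\xi}$, and that it maps the flow edges correctly: the forward pair $t_{\xi,\eta} \to \gsvw{\xi,\eta} \to t_{\xi+1,\eta}$ is carried to the backward pair $t_{\eta,\xi} \to \gsrw{\eta,\xi} \to t_{\eta,\xi+1}$, and symmetrically for the backward pair. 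Because both place types occur and the flow is preserved edge-for-edge, $L$ is a net isomorphism; the interchange of forward and backward places is exactly why it fails to be a cycloid isomorphism, consistent with the remark in the introduction. Throughout, the type preservation demanded in Definition \ref{cycloid} is automatic in a)--d), where the relation is unchanged, and in e) amounts to the consistent swap just described.
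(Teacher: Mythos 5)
Your proof is correct, and for parts a)--d) it is in substance the paper's own argument: the paper computes $\mathbf{A}_1\cdot\overrightarrow{mn} = \mathbf{A}\cdot\begin{pmatrix} m\pm n\\ n\end{pmatrix}$, which is precisely your identity $\mathbf{A}' = \mathbf{A}\,U$ with $U = \begin{pmatrix}1 & \pm 1\\ 0 & 1\end{pmatrix}$, and it concludes, as you do, that the equivalence relation and hence the quotient net is unchanged, the identity being the isomorphism (the paper dismisses c) and d) as ``similar''; your lower-triangular matrices $U$ make those cases explicit, and your checks of all four products are correct). For e) you deviate mildly but harmlessly: the paper uses $\varphi(\xi,\eta) = (\eta+\beta,\,\xi-\alpha)$, which is your reflection $L$ composed with translation by $(\beta,-\alpha)$; since that vector is the first column of $\mathbf{A}_2$, the two maps differ by a lattice vector of the target and induce the same bijection on equivalence classes, so nothing is lost by dropping the shift. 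Your identity $L\,\mathbf{A} = \mathbf{A}_e\begin{pmatrix}-1 & 0\\ 0 & 1\end{pmatrix}$ is correct (note $\det = -1$, still unimodular, so the lattices agree), and it yields the congruence as a genuine ``if and only if'' in one stroke, whereas the paper proves only the forward implication by exhibiting $(m',n') = (-m,n)$ and tacitly relies on the symmetry of the construction for the converse. You also supply something the paper leaves implicit: the edge-level verification that $L$ carries the forward flow pattern $t_{\xi,\eta} \to \gsvw{\xi,\eta} \to t_{\xi+1,\eta}$ onto the backward pattern and vice versa, which is what actually certifies the morphism conditions of Definition \ref{def-net}; the paper's proof of e) checks only bijectivity and congruence. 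The one step you should make explicit in a final write-up is the routine bridge from the Petri space to the quotient: a congruent bijection that maps $F_1$-edges to $F_1$-edges descends to a well-defined bijection of classes respecting the quotient flow of Definition \ref{cycloid} c) --- Lemma \ref{corr} shows exactly this pattern of argument, so it is a one-line appeal, but it is the formal reason your Petri-space computations prove an isomorphism of the cycloids themselves.
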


\begin{proof} 
We give a bijection on the Petri space, which is a congruence with respect to equivalence. 
Let be $ \mathcal{C} = \mathcal{C}( \alpha, \beta, \gamma, \delta ) $ with matrix
$\mathbf{A}$ (Definition \ref{cycloid}) and the vector $\overrightarrow{mn} := (m,n) \in \Int^2$.\\
a) and b): The bijection is the identity map and we prove that the equivalence relation of 
$ \mathcal{C}_1 = \mathcal{C}_1( \alpha, \beta, \gamma \pm \alpha, \delta \mp \beta) $ 
with 
  matrix 
 $\mathbf{A}_1 = \begin{pmatrix} \alpha &\; \gamma \pm \alpha \\ -\beta & \;\delta \mp \beta \end{pmatrix}$  
remains unchanged:
 $\mathbf{A}_1 \cdot \overrightarrow{mn} = 
 \mathbf{A} \cdot \overrightarrow{mn} + \begin{pmatrix} 0 & \;\; \pm \alpha \\ 0 & \;\; \mp \beta \end{pmatrix}\cdot \overrightarrow{mn} =
 \mathbf{A} \cdot \overrightarrow{mn} + \begin{pmatrix} \pm n \cdot \alpha \\ \mp n\cdot\beta \end{pmatrix} =
 \mathbf{A} \cdot \overrightarrow{mn} + \mathbf{A} \cdot \begin{pmatrix} \pm n \\ 0 \end{pmatrix} =$ 
$ \mathbf{A} \cdot \begin{pmatrix} m \pm n \\ n \end{pmatrix} $.
 Hence, the by Theorem \ref{parameter} b) the equivalence relations of $ \mathcal{C}$ and $ \mathcal{C}_1 $ are the same, since $m$ and $n$ are integers if and only if $m \pm n$ and $n$ are integers.\\
 c) and d) are similar to a) and b).\\
 e): We denote $ \mathcal{C}_2 = \mathcal{C}_2( \beta,\alpha,\delta,\gamma ) $ with matrix
  $\mathbf{A}_2 = \begin{pmatrix} \beta &\; \delta\\ -\alpha & \gamma \end{pmatrix}$.
  Using the sets $X$ and $X_2$ of $ \mathcal{C}$ and $ \mathcal{C}_2$, respectively (Definition \ref{def-net}),
  the isomorphism is defined by $\varphi(x_{\xi,\eta}) := x_{\eta+\beta,\xi-\alpha}$. 
  Obviously, $\varphi$ is injective and surjective. 
  In the following we use the indices as coordinates of the points in the Petri space and write 
  $\varphi \begin{pmatrix} \xi\\ \eta \end{pmatrix} =  \begin{pmatrix} \eta+\beta\\ \xi-\alpha \end{pmatrix} $. It remains to prove that $\varphi$ is a congruence, i.e. 
  $$\begin{pmatrix} \xi\\ \eta \end{pmatrix} \equiv \begin{pmatrix} \xi_1\\ \eta_1 \end{pmatrix} \; \Rightarrow \; \varphi\begin{pmatrix} \xi\\ \eta \end{pmatrix} \equiv \varphi\begin{pmatrix} \xi_1\\ \eta_1 \end{pmatrix} $$
 For the precondition of this implication we have  by Theorem \ref{parameter}   the following  Equation (\ref{shear1}):
 \begin{equation}\label{shear1}
 \begin{pmatrix} \xi\\ \eta \end{pmatrix} \equiv \begin{pmatrix} \xi_1\\ \eta_1 \end{pmatrix} \; \Leftrightarrow \;  \begin{pmatrix} \xi-\xi_1 \\ \eta - \eta_1 \end{pmatrix} = 
 \mathbf{A}\begin{pmatrix} m\\ n  \end{pmatrix} = \begin{pmatrix} \alpha \cdot m + \gamma \cdot n \\ -\beta \cdot m + \delta \cdot n  \end{pmatrix} \;\;\; (\text{for some} \;m,n \in \Int).
 \end{equation}
 We use this term to prove the conclusion: \\
 $\varphi\begin{pmatrix} \xi \\ \eta  \end{pmatrix} \equiv \varphi\begin{pmatrix} \xi_1\\ \eta_1  \end{pmatrix}  
 \; \Leftrightarrow \; 
 \begin{pmatrix} \eta+\beta \\ \xi-\alpha  \end{pmatrix}  - \begin{pmatrix} \eta_1+\beta \\ \xi_1-\alpha 
 \end{pmatrix} =
 \begin{pmatrix} \eta - \eta_1 \\ \xi - \xi_1  \end{pmatrix} =
 $
 $\mathbf{A}_2 \begin{pmatrix} m' \\ n' \end{pmatrix}  = 
 \begin{pmatrix} \beta \cdot m' + \delta \cdot n'\\ -\alpha \cdot m' + \gamma \cdot n'  \end{pmatrix} $ 
 Taking $m' := -m$ and $n' := n$ by Equation (\ref{shear1}) the last vector belongs to $\Int^2$.
\end{proof}


In plane geometry, a \emph{shear mapping} or \emph{stretching} \cite{Strang} (Page 149), is a linear map that displaces each point in a fixed direction, by an amount proportional to its signed distance from the line that is parallel to that direction and goes through the origin. 
For a cycloid $ \mathcal{C}( \alpha, \beta, \gamma, \delta ) $ the corners of its fundamental parallelogram have the coordinates 
$O = \begin{pmatrix} 0 \\ 0 \end{pmatrix}, 
P = \begin{pmatrix} \alpha \\ -\beta \end{pmatrix},
R = \begin{pmatrix} \alpha+\gamma \\ \delta-\beta \end{pmatrix}$ and $
Q = \begin{pmatrix} \gamma \\ \delta \end{pmatrix} $.
 Comparing them with the corners $O',P',R',Q'$ of the transformed cycloid
$\mathcal{C}(\alpha,\beta,\gamma + \alpha,\delta- \beta) $ of Theorem \ref{shear} b) we observe $O' = O, P' = P, Q' = \begin{pmatrix} \gamma + \alpha\\ \delta -\beta\end{pmatrix}=R$ and 
$R' = \begin{pmatrix} 2\cdot \alpha +\gamma \\ \delta -2 \cdot\beta\end{pmatrix}$.
The lines $\overline{Q,R}$ and 
$\overline{Q',R'}$ are the same. Therefore the second is a shearing of the first one.
This is shown in Figure\footnote{The figure has been designed using the tool \url{http://cycloids.adventas.de}.}~\ref{OP-shearing} for the cycloids $ \mathcal{C}( 2,3,2,8)$ and $\mathcal{C}( 2,3,4,5) $.
The point $P''$ is for later reference in the proof of Theorem \ref{shearing}.

\begin{figure}[htbp]
	\begin{center}
		\includegraphics [scale = 0.32]{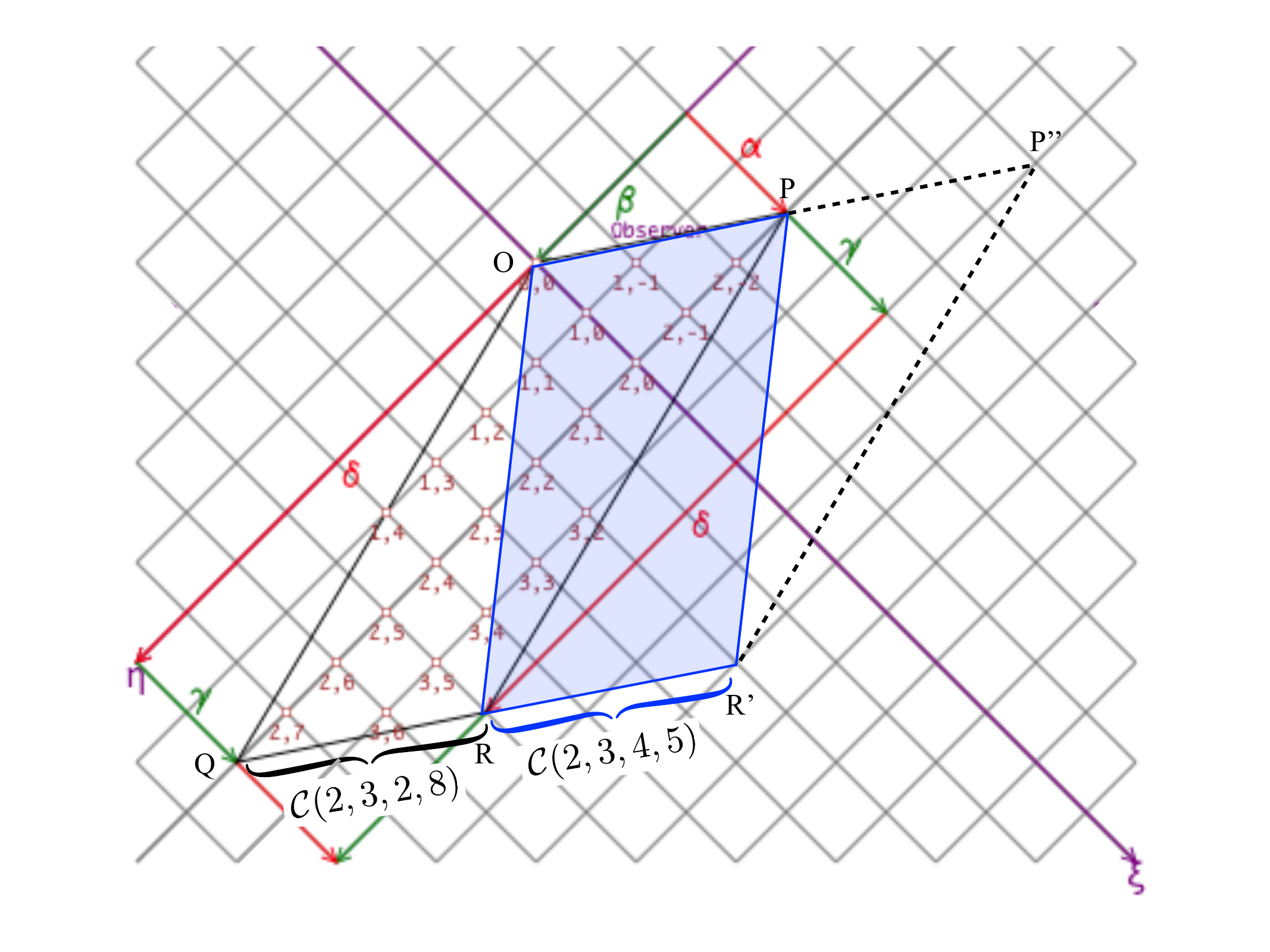}
		\caption{A shear mapping from  $ \mathcal{C}( 2,3,2,8)   $ to $ \mathcal{C}( 2,3,4,5) $.}
		\label{OP-shearing}
	\end{center}
\end{figure}

%

\begin{definition} \label{def-f-b-cycle}
 A \emph{forward-cycle} of a cycloid is an elementary\footnote{An elementary cycle is a cycle where all nodes are different.} cycle containing only forward places of $S_1^\rightarrow$.
 A \emph{backward-cycle} of a cycloid is an elementary cycle containing only
 backward places of $S_1^\leftarrow$ (Definition \ref{petrispace}). 
\end{definition}

\begin{theorem} \label{th-f-b-cycle}
In a cycloid   $ \mathcal{C}( \alpha, \beta, \gamma, \delta ) $ with area $A$ the length of a forward-cycle is 
$p = \frac{A}{gcd(\beta,\delta)} $ and length of a backward-cycle is $p' =\frac{A}{gcd(\alpha,\gamma)}$. 
\end{theorem}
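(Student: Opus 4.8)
The plan is to describe a forward-cycle explicitly in Petri-space coordinates, reduce its closing condition to the equivalence criterion of Theorem~\ref{parameter} b), and read off the minimal period from the resulting Diophantine system. The starting observation is that, by Definition~\ref{petrispace}, the forward output place $s^{\rightarrow}_{\xi,\eta}$ of $t_{\xi,\eta}$ feeds $t_{\xi+1,\eta}$. Hence a cycle through forward places only visits the transitions $t_{\xi,\eta},t_{\xi+1,\eta},t_{\xi+2,\eta},\dots$, keeping $\eta$ fixed and increasing $\xi$ by one at each step. In the cycloid this path closes precisely when it reaches a transition equivalent to the starting one, so the length $p$ of a forward-cycle is the least positive integer with $t_{\xi+p,\eta}\equiv t_{\xi,\eta}$.

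Next I would apply Theorem~\ref{parameter} b) to the difference vector $(p,0)$: the congruence $t_{\xi+p,\eta}\equiv t_{\xi,\eta}$ holds iff there are $m,n\in\Int$ with $\alpha m+\gamma n=p$ and $-\beta m+\delta n=0$. Writing $g:=gcd(\beta,\delta)$, $\beta=g\beta'$, $\delta=g\delta'$ with $gcd(\beta',\delta')=1$, the second equation becomes $\beta' m=\delta' n$, whose integer solutions are exactly $m=\delta' k$, $n=\beta' k$ for $k\in\Int$. Substituting into the first equation gives $p=k(\alpha\delta'+\gamma\beta')=k\cdot\frac{\alpha\delta+\gamma\beta}{g}=k\cdot\frac{A}{g}$. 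Since all four parameters are positive, the least positive value occurs at $k=1$, yielding $p=\frac{A}{gcd(\beta,\delta)}$ as claimed.

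It remains to confirm that this cycle is elementary and to treat the backward case. Minimality of $p$ gives elementarity directly: were two of the transitions $t_{\xi+i,\eta}$, $0\le i<j<p$, equivalent, then $t_{\xi+(j-i),\eta}\equiv t_{\xi,\eta}$ with $0<j-i<p$, contradicting the choice of $p$; the same argument applies to the intervening forward places, so the $p$ transition classes and the $p$ forward-place classes traversed are pairwise distinct. For the backward-cycle I would either repeat the computation with the difference vector $(0,p')$ --- where $\alpha m+\gamma n=0$ now forces, via $g':=gcd(\alpha,\gamma)$, the solutions $m=\gamma' k$, $n=-\alpha' k$ and hence $p'=\frac{A}{gcd(\alpha,\gamma)}$ --- or invoke the symmetric cycloid $\mathcal{C}(\beta,\alpha,\delta,\gamma)$ of Theorem~\ref{shear} e): its isomorphism $\varphi$ interchanges forward and backward places, so a backward-cycle of $\mathcal{C}(\alpha,\beta,\gamma,\delta)$ corresponds to a forward-cycle of $\mathcal{C}(\beta,\alpha,\delta,\gamma)$, whose length by the first part is $\frac{A}{gcd(\alpha,\gamma)}$ (the area $A=\det\mathbf{A}$ being unchanged under the swap). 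The only real obstacle is the bookkeeping in solving the Diophantine constraint and extracting the smallest positive period correctly; once that is in place, elementarity and the backward case follow without further difficulty.
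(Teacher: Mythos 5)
Your proof is correct, but it takes a genuinely different computational route than the paper within the same algebraic framework. The paper reduces to the forward-cycle through the origin, applies Theorem~\ref{parameter}~a) (integrality of the parameter vector $\pi(\vec v)=\frac{1}{A}\mathbf{B}\vec v$), obtains the two divisibility conditions $\frac{\delta}{A}\xi\in\Int$ and $\frac{\beta}{A}\xi\in\Int$, combines their minimal solutions into $\xi=\frac{A}{\omega}$ with $\omega=gcd(gcd(\beta,A),gcd(\delta,A))$, and then needs three separate gcd identities (Equations~(\ref{Gl1})--(\ref{Gl3})) to show $\omega=gcd(\beta,\delta)$. You instead use Theorem~\ref{parameter}~b) directly: you solve the Diophantine system $\alpha m+\gamma n=p$, $-\beta m+\delta n=0$ by parametrizing the homogeneous equation through the coprime parts $\beta',\delta'$, which yields at once that \emph{every} closing length is a multiple of $\frac{A}{gcd(\beta,\delta)}$ — so minimality is immediate and the paper's gcd-of-gcd simplification is bypassed entirely. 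That is what your approach buys: a single substitution replaces the paper's lemma-like chain of gcd manipulations, and your difference-vector formulation makes the choice of base transition irrelevant, where the paper must invoke cycloid symmetry to restrict to the origin. You also verify elementarity of the cycle explicitly, which the paper leaves implicit in Definition~\ref{def-f-b-cycle}. Two minor remarks: in your direct backward computation the parametrization $m=\gamma'k$, $n=-\alpha'k$ gives $p'=-k\cdot\frac{A}{gcd(\alpha,\gamma)}$, so positivity forces $k=-1$ rather than $k=1$ — harmless since $k$ ranges over $\Int$, but worth stating; and your alternative via the symmetric cycloid of Theorem~\ref{shear}~e) coincides exactly with the paper's own treatment of the backward case.
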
 

\begin{proof} By the symmetry of a cycloid all  forward-cycles have the same length.  Therefore it is sufficient to consider the  forward-cycle starting in the origin $(0,0)$. It is given by proceeding on the $\xi-$axis until  for the first time
a point $(\xi,0)$ is obtained, which is equivalent to the origin. By Theorem \ref{parameter} a necessary and sufficient condition for 
$\begin{pmatrix}0 \\ 0  \end{pmatrix}  \equiv \begin{pmatrix} \xi \\  0 \end{pmatrix}$ is the following equation 
$\pi(\begin{pmatrix}\xi \\ 0  \end{pmatrix}  -  \begin{pmatrix} 0\\  0 \end{pmatrix}) = 
\frac{1}{A} \begin{pmatrix} \delta & -\gamma \\ \beta & \alpha \end{pmatrix}  \begin{pmatrix}\xi \\ 0  \end{pmatrix}  =
\frac{1}{A}  \begin{pmatrix}\delta \cdot \xi \\ \beta \cdot \xi   \end{pmatrix} \in \Int^2$. This is equivalent to 
$\frac{\delta}{A} \cdot \xi \in \Int \; \land \frac{\beta}{A} \cdot \xi \in \Int $. $\xi = A$ is a solution in $\frac{\delta}{A} \cdot \xi \in \Int $, but $\xi = \frac{A}{gcd(A,\delta)} $ is minimal. The same with  $\xi = \frac{A}{gcd(A,\beta)} $ for $\frac{\beta}{A} \cdot \xi \in \Int $ and both together give
$ \xi = \frac{A}{\omega} $ with $\omega = gcd(gcd(\beta,A),gcd(\delta,A))$.  To finish the proof it is sufficient to prove that $\omega$ equals
$ gcd(\beta,\delta) $. To this end we first prove: 
\begin{equation}\label{Gl1}
gcd(\beta,\alpha \cdot \delta + \beta \cdot \gamma) =   gcd(\beta,\alpha \cdot \delta)
 \end{equation}
 In fact, if $q$ divides $\beta$ and $\alpha \cdot \delta + \beta \cdot \gamma$ then $q$ divides $\alpha \cdot \delta$.
Conversely, if $q$ divides $\beta$ and $\alpha \cdot \delta$ then $q$ divides $\alpha \cdot \delta + \beta \cdot \gamma$.
In the same way the following equation holds.
\begin{equation}\label{Gl2}
gcd(\delta,\alpha \cdot \delta + \beta \cdot \gamma) =   gcd(\delta,\beta \cdot \gamma)
 \end{equation}
 The next equation
 \begin{equation}\label{Gl3} 
 gcd(gcd(\beta,\alpha \cdot \delta),gcd(\delta,\beta \cdot \gamma)) = gcd(\beta,\delta)
 \end{equation}
  is proved by the following true logical formula: 
  $q | \beta  \land q | \alpha \cdot \delta \land q | \delta  \land q | \beta \cdot \gamma \; \Leftrightarrow \; q | \beta  \land  q | \delta $.
  Using these equations we obtain: 
  $\omega 
  {\underset{\text{}}{=}} 
  gcd(gcd(\beta,\alpha \cdot \delta + \beta \cdot \gamma),gcd(\delta,\beta,\alpha \cdot \delta + \beta \cdot \gamma))  {\underset{\text{(\ref{Gl1})(\ref{Gl2})}}{=}} \\
  gcd(gcd(\beta,\alpha \cdot \delta),gcd(\delta,\beta \cdot \gamma))  {\underset{\text{(\ref{Gl3})}}{=}} gcd(\beta,\delta)$.
  The case of backward-cycles is proved in a similar way. Alternatively, we use the isomorphism of Theorem \ref{shear} e) which gives 
   $ \mathcal{C}( \alpha, \beta, \gamma, \delta )$  $ \simeq  \mathcal{C}(  \beta, \alpha, \delta, \gamma ) $ and derive the second part from the first part by the substitution $\alpha \mapsto \beta, \beta \mapsto \alpha, \gamma \mapsto \delta,\delta \mapsto \gamma $.
 \end{proof} 

%


For the cycloids  $ \mathcal{C}( 4,3,3,3 ) $ and $ \mathcal{C}( 4,6,3,3 ) $ from the introduction we obtain $p = 7$  and $p = 10$, respectively.
An important class of cycloids has the property to represent a number of sequential processes of the same length.
Such a cycloid is called regular.

\begin{definition} \label{regular} 
A cycloid $ \mathcal{C} = \mathcal{C}( \alpha, \beta, \gamma, \delta ) $ is \emph{regular} if $\beta$ divides $\delta$.
It consists of a number $\beta$ forward-cycles (called \emph{processes}) of length $p = \frac{A}{\beta} $.
 $ \mathcal{C} $ is called \emph{co-regular} if $\alpha$ divides $\gamma$.
 Then it consists of a number $\alpha$ backward-cycles (called \emph{co-processes}) of length $p = \frac{A}{\alpha} $. $ \mathcal{C} $ is called \emph{canonical regular} if $\beta=\gamma=\delta$.
\end{definition}

The cycloid  $ \mathcal{C}( 4,3,3,3 ) $  from the introduction is canonical regular, whereas and $ \mathcal{C}( 4,6,3,3 ) $ is not regular. 
In applications, a cycloid can appear as a net without the parameters  $ \alpha, \beta, \gamma$ and $ \delta  $ being visible. However, these are required to use algorithms for cycloids. To determine these parameters, properties of the net such as number of transitions, length of the shortest cycle, etc. must be calculated. The following theorem can be used for this purpose.
%
The version of this theorem from  \cite{Valk-2019} does not cover all cases, but the special case 
of canonical regular cycloids ($\gamma = \delta$) is still valid (see case c) of the following theorem). This subcase was important for the applications to regular cycloids in \cite{Valk-2020}. 
If we assume that a net is a cycloid, then the question remains open as to where this knowledge comes from. Since cycloids are finite graphs and their syntax is unambiguous, we conjecture that this property is decidable. However, formalising such a procedure is beyond the scope of this work.


%
 Lemma \ref{normal form} prepares the following proof of a theorem.

\begin{lemma} [\cite{Valk-2019}] \label{normal form}
For any cycloid $\zyk(\alpha,\beta,\gamma,\delta) $ there is a minimal cycle containing the origin $O$ in its fundamental parallelogram representation.
\end{lemma}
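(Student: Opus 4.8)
The plan is to exploit the \emph{translation invariance} of the Petri space and to show that it passes to the quotient, so that the cycloid net is vertex-transitive on its transitions. Once this is available, the lemma follows immediately: an arbitrary minimal cycle can be shifted so that it runs through the origin, without altering its length. Throughout I work with the net (ignoring the initial marking, as permitted by Definition \ref{def-net}), since only the cycle structure is at issue.

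First I would introduce, for each $(a,b)\in\Int^2$, the translation $\tau_{a,b}$ of the Petri space, defined on transitions by $\tau_{a,b}(t_{\xi,\eta}) = t_{\xi+a,\eta+b}$ and on places by $\tau_{a,b}(s^{\rightarrow}_{\xi,\eta}) = s^{\rightarrow}_{\xi+a,\eta+b}$ and $\tau_{a,b}(s^{\leftarrow}_{\xi,\eta}) = s^{\leftarrow}_{\xi+a,\eta+b}$. Reading off $F_1$ from Definition \ref{petrispace}, each transition carries the same four incidences up to a uniform shift of indices, so $\tau_{a,b}$ is a type-preserving net automorphism of $\PS{1}$. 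The key point is then that $\tau_{a,b}$ is a congruence for $\equiv$. By Theorem \ref{parameter} b) one has $x_{\xi,\eta}\equiv x_{\xi_1,\eta_1}$ precisely when $\begin{pmatrix} \xi_1-\xi \\ \eta_1-\eta \end{pmatrix} = \mathbf{A}\begin{pmatrix} m \\ n \end{pmatrix}$ for some $m,n\in\Int$, and this difference vector is left unchanged when both points are shifted by $(a,b)$. Hence $\tau_{a,b}(x_{\xi,\eta})\equiv\tau_{a,b}(x_{\xi_1,\eta_1})$ if and only if $x_{\xi,\eta}\equiv x_{\xi_1,\eta_1}$, so $\tau_{a,b}$ permutes the equivalence classes bijectively, preserves forward/backward/transition type, and respects $F$ through Definition \ref{cycloid} c). Therefore it induces a net automorphism of $\zyk(\alpha,\beta,\gamma,\delta)$, and the family of these induced automorphisms acts transitively on the transition classes.

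Finally, the cycloid net is finite, connected, and contains cycles (for instance the forward-cycles of Theorem \ref{th-f-b-cycle}), so a cycle of minimal length exists; call it $C$, and note that a shortest cycle is automatically simple. Since the net is bipartite, $C$ meets at least one transition, say the class of $t_{\xi_0,\eta_0}$. Applying the automorphism induced by $\tau_{-\xi_0,-\eta_0}$ carries $C$ to a cycle $C'$ of the same length whose vertex set contains the class of $t_{0,0}=O$. As automorphisms preserve length, $C'$ is again minimal, and this is the desired minimal cycle through the origin. The one step genuinely requiring care is the descent of $\tau_{a,b}$ to the quotient; everything else is formal, and that step is exactly the translation-invariance of the difference condition in Theorem \ref{parameter} b).
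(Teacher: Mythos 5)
Your proposal is correct and is essentially the argument this paper relies on: Lemma \ref{normal form} is imported from \cite{Valk-2019} without a reproduced proof, and where the paper uses the underlying fact (e.g., ``By the symmetry of a cycloid\ldots'' in the proofs of Theorem \ref{th-f-b-cycle} and Theorem \ref{th-beta-delta-red}) it appeals to exactly the translation transitivity you establish. Your writeup makes the one delicate step explicit and handles it correctly, namely that the translations $\tau_{a,b}$ of the Petri space are congruences for $\equiv$ --- since by Theorem \ref{parameter} b) equivalence depends only on the difference vector, which translation leaves unchanged --- and hence descend to type-preserving automorphisms of the quotient net acting transitively on transitions, so any minimal cycle can be carried to one through the origin without changing its length.
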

 
\begin{figure}[htbp]
 \begin{center}
        \includegraphics [scale = 0.32]{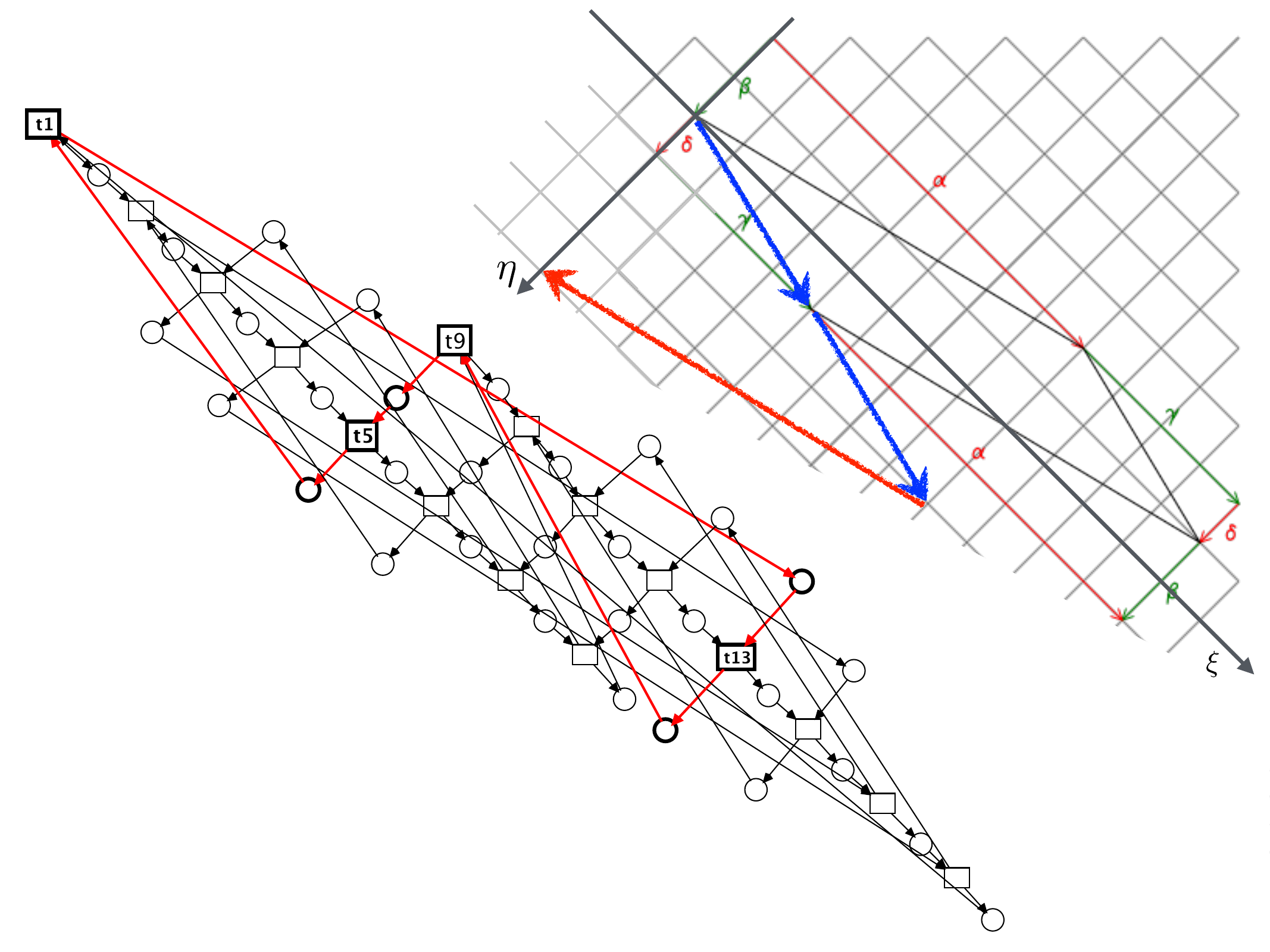}
        \caption{ $ \mathcal{C}( 8,2,4,1 ) $ with minimal cycle of length 4. }
        \label{8-2-4-1}
      \end{center}
\end{figure}


The proof of Theorem \ref{minimal cycles} uses the representation of cycles by paths in the Petri space whose endpoints are equivalent.
Therefore a transition 
$t_{u,v}$ equivalent to the origin is sought in the Petri space. The search can be restricted to points different from the origin and with $u \geq 0$ and $v \geq 0$, since only such transitions are reachable on paths. 
The application of cycloid algebra shows that such transitions are linear combinations of the vectors $(\alpha,-\beta)$ and $(\gamma,\delta)$. All these transitions are equivalent to the corners $O,P,Q$ and $R$ of the fundamental parallelogram. The length $u+v$ of a path from the origin to such a transition has to be minimal. Petri observed that each cycloid has a cycle of length $\gamma+\delta$ via the corner $Q$ and called it the \emph{local basic circuit}  \cite{Petri-HH-lec}. 
As shown by the cycloid
$ \mathcal{C}( 8,2,4,1 ) $ of Figure \ref{8-2-4-1} this  value $\gamma+\delta=5$ is not necessarily minimal. 
The cycle \textbf{t1} \textbf{t13} \textbf{t9} \textbf{t5} of minimal length $cyc = 4$ is obtained by the linear combination  $\begin{pmatrix} u \\ v \end{pmatrix}  = j \cdot \begin{pmatrix} \gamma \\\delta \end{pmatrix} + i \cdot \begin{pmatrix} \alpha \\ - \beta \end{pmatrix} = \begin{pmatrix} 0 \\ 4 \end{pmatrix} $ with $j=2$ and $i=-1$.
This is indicated by two small  and one large arrow        in the fundamental parallelogram
of Figure \ref{8-2-4-1}.
The minimisation over two parameters $i$ and $j$ 
can be reduced to one bounded parameter $j$ in part b) of the Theorem.
 By restricting to particular cases in the remaining cases no minimum operator is needed. 


\begin{theorem}\label{minimal cycles}
The length of a minimal cycle of a cycloid $\zyk = \mathcal{C}(\alpha,\beta,\gamma,\delta)$ is 
$cyc(\alpha,\beta,\gamma,\delta) $ (denoted $cyc$ in the following), where
\begin{itemize}
\item [a)]
                      $ cyc = min \left\{u+v \;|\begin{pmatrix} u \\ v  \end{pmatrix} = \mathbf{A} \cdot
                      \begin{pmatrix} i \\ j  \end{pmatrix}, \; i \in \Int, \;j  \in \Nat , \; u  \geq 0, \;    v  \geq
0\right\} $ 
       \item [b)] $ cyc = min \left\{j\cdot (\gamma+ \delta)+i\cdot(\alpha-\beta) \;|\;j  \in \Nat, \: 
       i = \left\{
	                     \begin{array}{lll}
		                                    \lfloor\frac{j \cdot \delta}{\beta}\rfloor  &  \textrm{if} &\alpha \leq \beta  \\
		                                   -\lfloor\frac{j \cdot\gamma}{\alpha}\rfloor  & \textrm{if} & \alpha > \beta   
	                        \end{array}
              \right\} 
                     \right\} $ \\
                     The value of $j$ is bounded: $j  \leq \frac{A}{\gamma} $ if $\alpha  \leq \beta$ and $j  \leq \frac{A}{\delta} $ otherwise.
       \item [c)]  
                     $cyc = \gamma + \delta + 
                     \; \left\{
	                     \begin{array}{lll}
		            \lfloor\frac{\delta}{\beta}\rfloor (\alpha - \beta) &  \textrm{if} &\alpha \leq \beta  \;\;\textrm{and}  \;\; \gamma  \geq \delta\\
		            -\lfloor\frac{\gamma}{\alpha}\rfloor (\alpha - \beta) & \textrm{if} & \alpha > \beta   \;\;\textrm{and} \;\; \gamma   \leq \delta
	                \end{array}
                     \right\} $ 
         \item [d)] $cyc = \gamma + \frac{\delta}{\beta}\cdot \alpha = \frac{A}{\beta} \;\;\textrm{if}  \;\;  \alpha  \leq \beta 
         \;\;\textrm{and} \;\; \zyk \;\; \textrm{is regular} \;\; (\textrm{i.e.}\;\; \beta|\delta )$
         \item [e)]   $cyc = \delta + \frac{\gamma}{\alpha}\cdot \beta = \frac{A}{\alpha} \;\;\textrm{if}  \;\;  \alpha  > \beta 
         \;\;\textrm{and} \;\; \zyk \;\; \textrm{is co-regular} \;\; (\textrm{i.e.}\;\; \alpha|\gamma )$          
         \end{itemize}           
\end{theorem}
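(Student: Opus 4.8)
The plan is to translate cycles into lattice paths and then reduce a two-variable integer optimisation to a one-variable one. For part a), I would invoke Lemma~\ref{normal form} to assume the minimal cycle passes through the origin $O$. Such a cycle lifts to a directed path in the Petri space from $t_{0,0}$ to some transition $t_{u,v}$ equivalent to the origin. Since every edge of the Petri space increases exactly one coordinate by $1$ (a forward step raises $\xi$, a backward step raises $\eta$), any path out of the origin reaches only points with $u\ge 0$ and $v\ge 0$, and its length equals the number of steps $u+v$. By Theorem~\ref{parameter} b), $t_{u,v}\equiv t_{0,0}$ means $(u,v)^{\top}=\mathbf{A}\,(i,j)^{\top}$ for integers $i,j$; writing this out, the constraints $u\ge0,v\ge0$ force $j\ge0$, and nontriviality $(u,v)\neq(0,0)$ forces $j\ge1$. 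Minimising $u+v$ over these lattice points is exactly formula a).

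For part b) I would expand the objective as $u+v=i(\alpha-\beta)+j(\gamma+\delta)$ and note that for fixed $j$ the conditions $u\ge0,v\ge0$ read $-\tfrac{j\gamma}{\alpha}\le i\le \tfrac{j\delta}{\beta}$. Minimising a linear function of the integer $i$ over this interval, I pick the right endpoint $i=\lfloor\tfrac{j\delta}{\beta}\rfloor$ when the slope $\alpha-\beta\le0$ and the left endpoint $i=-\lfloor\tfrac{j\gamma}{\alpha}\rfloor$ when $\alpha-\beta>0$; in both cases the chosen $i$ is automatically feasible since $\lfloor\tfrac{j\delta}{\beta}\rfloor\ge0\ge-\tfrac{j\gamma}{\alpha}$. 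Writing $cyc_j$ for the resulting value, this gives the one-parameter expression. To bound $j$, I would use that $j=1,i=0$ is Petri's local basic circuit of length $\gamma+\delta$, so $cyc\le\gamma+\delta$, together with the key estimate $cyc_j\ge\tfrac{jA}{\beta}$ (for $\alpha\le\beta$), which follows from $\lfloor\tfrac{j\delta}{\beta}\rfloor(\alpha-\beta)\ge\tfrac{j\delta}{\beta}(\alpha-\beta)$ because $\alpha-\beta\le0$. Combined with $\gamma+\delta\le A$ (as $\alpha,\beta\ge1$), this forces $j\le\tfrac{A}{\gamma}$; the case $\alpha>\beta$ is symmetric and gives $j\le\tfrac{A}{\delta}$.

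For part c) I would show the extra hypotheses make $j=1$ optimal, eliminating the minimum. When $\alpha\le\beta$ and $\gamma\ge\delta$, every $j\ge2$ satisfies $cyc_j\ge\tfrac{2A}{\beta}$ by the estimate above, while $cyc_1=\gamma+\delta+\lfloor\tfrac{\delta}{\beta}\rfloor(\alpha-\beta)\le\gamma+\delta\le2\gamma\le\tfrac{2A}{\beta}$, using $\delta\le\gamma$ in the middle step and $\tfrac{A}{\beta}=\gamma+\tfrac{\alpha\delta}{\beta}\ge\gamma$ at the end. Hence the minimum is attained at $j=1$, which is the first displayed formula. The case $\alpha>\beta$, $\gamma\le\delta$ follows by applying this to the symmetric cycloid $\mathcal{C}(\beta,\alpha,\delta,\gamma)$ of Theorem~\ref{shear} e): its minimal cycle length coincides with that of $\mathcal{C}(\alpha,\beta,\gamma,\delta)$ (minimal cycle length is a net-isomorphism invariant) and its parameters meet the first case's hypotheses.

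Finally, parts d) and e) are the degenerate situations in which the floor disappears. If $\beta\mid\delta$ then $\lfloor\tfrac{j\delta}{\beta}\rfloor=\tfrac{j\delta}{\beta}$ exactly, so for $\alpha\le\beta$ the objective collapses to $cyc_j=\tfrac{jA}{\beta}$, a strictly increasing linear function minimised at $j=1$, giving $cyc=\tfrac{A}{\beta}=\gamma+\tfrac{\delta}{\beta}\alpha$; the co-regular case $\alpha\mid\gamma$ with $\alpha>\beta$ is identical with $\beta$ replaced by $\alpha$. I expect the main obstacle to be part b): getting the endpoint selection and, above all, the bound on $j$ right, since this rests on the inequality $cyc_j\ge\tfrac{jA}{\beta}$ and the floor manipulations. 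Parts c)--e) are then comparatively short consequences of that estimate together with the exactness of the floor under (co-)regularity.
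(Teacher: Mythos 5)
Your proposal is correct, and it shares the paper's overall skeleton --- lift a minimal cycle through the origin (Lemma~\ref{normal form}) to a path in the Petri space ending at a lattice point $\mathbf{A}\cdot(i,j)^{\top}$ with $u,v\geq 0$, reduce to a one-variable optimisation in $j$, bound $j$, then specialise --- but several of your sub-arguments genuinely differ from the paper's. In part b) the paper finds the optimal $i$ by a geometric monotonicity argument (adding $(\alpha,-\beta)$ does not increase the distance from $O$ when $\alpha\leq\beta$) and handles $\alpha>\beta$ only by passing to the symmetric cycloid of Theorem~\ref{shear}~e); you instead minimise the linear function $u+v=i(\alpha-\beta)+j(\gamma+\delta)$ over the integer interval $-\frac{j\gamma}{\alpha}\leq i\leq\frac{j\delta}{\beta}$ by endpoint selection, which covers both sign cases uniformly and automatically disposes of the paper's separate remark in c) that negative $i$ are not needed. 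Your key estimate $cyc_j\geq\frac{jA}{\beta}$ (for $\alpha\leq\beta$) has no counterpart in the paper: there the bound on $j$ comes from rewriting $cyc_j = j\gamma + ((j\delta)\bmod\beta) + \lfloor\frac{j\delta}{\beta}\rfloor\alpha \leq A$, and the optimality of $j=1$ in c) from the term-by-term difference estimate $cyc_j-cyc_1\geq(j-1)\gamma-\delta\geq\gamma-\delta\geq 0$; your single estimate instead yields the $j$-bound (in fact the stronger bound $j\leq\beta$, which implies $j\leq\frac{A}{\gamma}$ since $A\geq\beta\gamma$ --- a comparison worth making explicit, as your phrase ``combined with $\gamma+\delta\leq A$'' elides it), the chain $cyc_j\geq\frac{2A}{\beta}\geq 2\gamma\geq\gamma+\delta\geq cyc_1$ for c), and, since the floor is exact under $\beta\mid\delta$ (resp.\ $\alpha\mid\gamma$), parts d) and e) in one stroke, where the paper argues d) separately via the points $\vec{x}_j$ lying on the $\xi$-axis. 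Both proofs invoke the symmetric cycloid for the mirrored cases, and both are complete; yours buys a more unified argument driven by one inequality, at the cost of that single terse step in the bound on $j$.
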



\begin{proof}
a) 
With respect to paths and cycles in the fundamental parallelogram and by Lemma \ref{normal form} it is sufficient to consider paths  starting in the origin $O$.
Such a cycle of the cycloid corresponds to a path with positive length from $O$ to an equivalent point $\vec{x}$ in the Petri space.
From Theorem \ref{parameter}  we obtain with $\vec{x_2} = \vec{x}$ and $\vec{x_1} = (0,0) $  the necessary and sufficient condition 
$\exists \;i,j \in \Int : \vec{x} =  \mathbf{A} \cdot \begin{pmatrix} i \\ j  \end{pmatrix}$ with $\lnot (i = 0 \land j=0)$. If $\vec{x}= \begin{pmatrix} u \\ v  \end{pmatrix} $  
then $u+v > 0$ is the length of the path from the origin $O$ to the endpoint of $\vec{x}$. $u+v$ should be a minimum to obtain $cyc$. However, some choices of $\vec{x}$ can be excluded. There is no path from $O$ to 
$(u,v)$ if $u < 0$ or $v < 0$. Therefore $j \leq 0$ can be excluded in $\begin{pmatrix} u \\ v  \end{pmatrix} = \mathbf{A} \cdot 
                      \begin{pmatrix} i \\ j  \end{pmatrix} = \begin{pmatrix} \alpha & \gamma \\ -\beta & \delta \end{pmatrix}\cdot\begin{pmatrix} i \\ j \end{pmatrix} =
                      \begin{pmatrix} i\cdot \alpha + j \cdot \gamma \\ -i \cdot \beta+ j \cdot \delta  \end{pmatrix} $. 
This is true by the following proof by contradiction:  assume $j   \leq 0$. \\
Case 1: If $i \geq 0$ then $ v = -i\cdot\beta+j\cdot\delta < 0$ in contradiction to the condition $v \geq 0$.\\
Case 2: If $i < 0$ then $ u = i\cdot\alpha+j\cdot\gamma < 0$ in contradiction to the condition $u \geq 0$.\\

b)
We first consider the case $\alpha \leq \beta$ and prove $ \lfloor\frac{j \cdot \delta}{\beta}\rfloor $ if $\gamma  \geq \delta$. 
Denote the cut point of the line $\overline{QR}$ with the $\xi$-axis by $A_1$ (the line cannot be in parallel to the $\xi$-axis). Next, in a similar way, for $j  \geq 1 $ the endpoint of the vector $j\cdot (\gamma,\delta)$ is denoted by $Q_j$, including $Q_1 = Q$. (See Figure \ref{cyc-theorem} for the cases $j \in \{ 1,2,3 \}$.)
Furthermore we name the cut point of the line through $Q_j$ and  the endpoint of  $j\cdot (\gamma,\delta)+ (\alpha, -\beta)$ with the $\xi$-axis by $A_j$.
On this line the points $\vec{x} = j \cdot \begin{pmatrix} \gamma \\\delta \end{pmatrix} + i \cdot \begin{pmatrix} \alpha \\ - \beta \end{pmatrix} = \begin{pmatrix} u \\ v \end{pmatrix} $  are situated which define the value of $cyc = u+v$. By the condition  $v  \geq 0$ we obtain $j \cdot \delta - i \cdot \beta  \geq 0$ which is
\begin{equation}\label{GL1}
i  \leq \frac{j \cdot \delta}{\beta} 
 \end{equation}
Next we derive an expression for $i$ in dependance of $j$ by proving
that increasing the value of $i$ does not increase the distance to the origin (while the condition
$\eta \geq 0$ is not violated when going $\beta$ steps in direction $-\eta$).
\begin{figure}[htbp]
	\begin{center}
		\includegraphics [scale = 0.31]{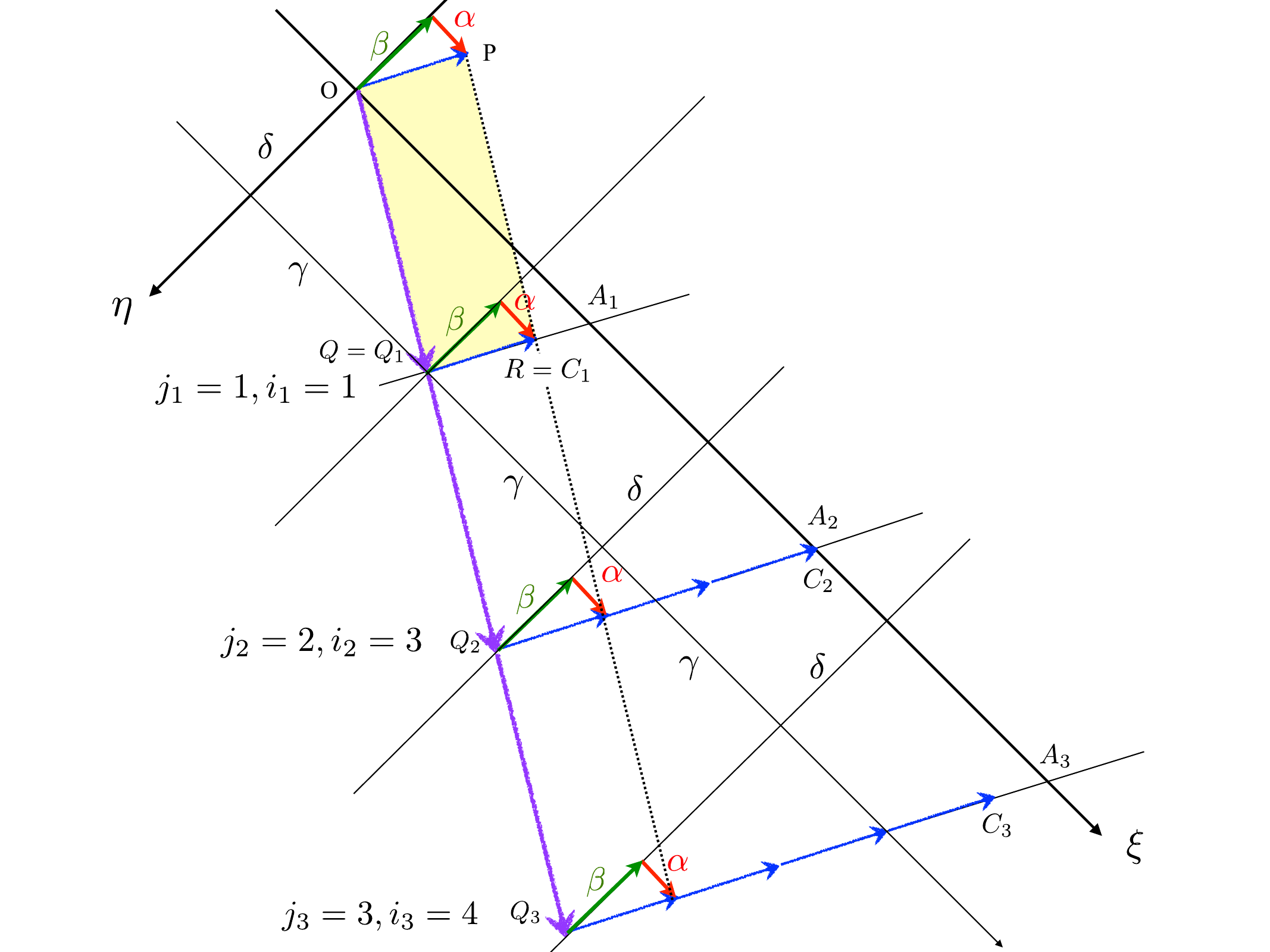}
		\caption{Referenced in the proof of Theorem \ref{minimal cycles}.}
		\label{cyc-theorem}
	\end{center}
\end{figure}
More precisely, 
for any $\xi \geq 0, \eta \geq 0$ we have to prove 
$d(O,\begin{pmatrix} \xi \\ \eta \end{pmatrix}) \geq d(O,\begin{pmatrix} \xi \\ \eta \end{pmatrix} + \begin{pmatrix} \alpha \\ -\beta \end{pmatrix} ) $ under the condition $\eta - \beta \geq 0$. 
This follows from
$\alpha \leq \beta$ by $0 \geq \alpha - \beta 
\; \Rightarrow \; \xi + \eta \geq \xi +\alpha + \eta- \beta 
\; \Rightarrow \; |\xi + \eta | \geq | \xi +\alpha | + |\eta- \beta |
\; \Rightarrow \; d(O,\begin{pmatrix} \xi \\ \eta \end{pmatrix}) \geq d(O, \begin{pmatrix}\xi + \alpha \\ \eta-\beta \end{pmatrix} )$.
By choosing the maximal value of $i$ under the inequality (\ref{GL1})  we obtain $ i = \lfloor \frac{j \cdot \delta}{\beta}\rfloor$.
Therefore the candidates to compute $cyc$ are the endpoints of the vectors 
\begin{equation}\label{vec-x-j-a}
\vec{x}{_j} = j \cdot \begin{pmatrix} \gamma \\\delta \end{pmatrix} + \lfloor \frac{j \cdot \delta}{\beta}\rfloor \cdot \begin{pmatrix} \alpha \\ - \beta \end{pmatrix}  \;\;\;\;\;\; (j \in \Nat). 
 \end{equation} 
 The distance from the origin $O$ to $\vec{x}{_j} $ is $d(O,\vec{x}{_j}) = $
 $j\cdot (\gamma+ \delta)+\lfloor\frac{j \cdot \delta}{\beta}\rfloor \cdot(\alpha-\beta) $. 
For the alternative case $\alpha > \beta$  we look at the symmetric cycloid $\zyk(\beta,\alpha,\delta,\gamma) $ (by interchanging $\alpha$ and $\beta$, as well as $\gamma$ and $\delta$), which is net isomorphic (Theorem \ref{shear} e) and therefore has a minimal cycle of the same length. Equation (\ref{vec-x-j-a})  is replaced by Equation (\ref{vec-x-j-b}):
\begin{equation}\label{vec-x-j-b}
\vec{x}{_j} = j \cdot \begin{pmatrix} \delta \\\gamma \end{pmatrix} + \lfloor \frac{j \cdot \gamma}{\alpha}\rfloor \cdot \begin{pmatrix} \beta \\ - \alpha \end{pmatrix}  \;\;\;\;\;\; (j \in \Nat). 
 \end{equation} 
and we obtain
 $d(O,\vec{x}{_j})   = j \cdot (\gamma + \delta) + \lfloor\frac{j\cdot\gamma}{\alpha}\rfloor \cdot (\beta -\alpha)$ in  case of $\alpha > \beta$. 
%
 To derive the bound we start with the observation that the length of a cycle is bounded by the number $A$ of transitions. In the case $\alpha  \leq \beta$ it follows with respect to the minimal value of $j$:
 $ cyc = j\cdot (\gamma+ \delta)+\lfloor\frac{j \cdot \delta}{\beta}\rfloor \cdot(\alpha-\beta)   \leq A$ which transforms to 
 $  j\cdot \delta - \lfloor\frac{j \cdot \delta}{\beta}\rfloor \cdot\beta + j \cdot \gamma +  
 \lfloor\frac{j \cdot \delta}{\beta}\rfloor \cdot \alpha  = 
 (j\cdot \delta)\; mod \;\beta + j \cdot \gamma +  
 \lfloor\frac{j \cdot \delta}{\beta}\rfloor \cdot \alpha  
 \leq A$. Since $(j\cdot \delta)\; mod \;\beta +\lfloor\frac{j \cdot \delta}{\beta}\rfloor \cdot \alpha \ \geq 0$ we obtain $j \cdot \gamma  \leq A$ and $j  \leq \frac{A}{\gamma} $. The result for the case $\alpha   \geq \beta$ is proved in a similar way.

 %
 c) We prove $j=1$ in case b) of the theorem under the the additional condition $\gamma  \geq \delta$.
From $\vec{x}{_j}  \geq (0,0)$ we deduce from Equation (\ref{vec-x-j-a}):
\begin{equation}\label{ineq-vec-x-j}
j \cdot \delta - \lfloor \frac{j \cdot \delta}{\beta}\rfloor \cdot \beta  \geq 0
 \end{equation} 
The endpoints of the vectors $\vec{x}{_j}$ are denoted by $C_j$ in Figure \ref{cyc-theorem}. The path from the origin $O$ to  $\vec{x}{_j}$ has the length
$cyc_j := j \cdot (\gamma+\delta) + \lfloor \frac{j \cdot \delta}{\beta}\rfloor \cdot (\alpha-\beta)$ and we next prove $cyc_j  \geq cyc_1$ for $j>1$ which shows that $j=1$ is the optimal solution for $cyc$. This is done by the inequality \\
$cyc_j  - cyc_1 = 
j \cdot (\gamma+\delta) + \lfloor \frac{j \cdot \delta}{\beta}\rfloor \cdot (\alpha-\beta) - (\gamma+\delta+ \lfloor \frac{ \delta}{\beta}\rfloor \cdot (\alpha-\beta)) = $ \\
 $(j-1)\cdot \gamma + (j\cdot\delta-\lfloor \frac{j \cdot \delta}{\beta}\rfloor \cdot \beta) - \delta + (\lfloor \frac{j \cdot \delta}{\beta}\rfloor \cdot \alpha - \lfloor \frac{\delta}{\beta}\rfloor \cdot \alpha) +  \lfloor \frac{\delta}{\beta}\rfloor \cdot \beta  \geq \\
 (j-1)\cdot \gamma  - \delta  \geq \gamma - \delta  \geq 0$. \\
 The second summand in the second line is not negative due to the inequality (\ref{ineq-vec-x-j}). This holds obviously for the fourth summand 
 $\lfloor \frac{j \cdot \delta}{\beta}\rfloor \cdot \alpha - \lfloor \frac{\delta}{\beta}\rfloor \cdot \alpha  \geq 0$. In the last inequalities $j > 0$ and $\gamma  \geq \delta$ is used.
 It remains to prove that negative values of $i$ are not needed to compute $cyc$ in the cases under review. In the same way as before the inequality
 $d(O,\begin{pmatrix} \xi \\ \eta \end{pmatrix})  \leq d(O, \begin{pmatrix}\xi \\ \eta\end{pmatrix} -\begin{pmatrix}\alpha \\ -\beta \end{pmatrix} )$ is proved. This shows, that adding the vector $-(\alpha,-\beta)$ to $(\gamma,\delta)$ cannot decrease the distance from the origin. The cases for $j > 1$ is are similar.\\
For the alternative case $\alpha > \beta$ and $\delta  \geq \gamma$ we look at the symmetric cycloid $\zyk(\beta,\alpha,\delta,\gamma) $ (by interchanging $\alpha$ and $\beta$, as well as $\gamma$ and $\delta$).\\
d) If $\beta | \delta$ then Equation (\ref{vec-x-j-a}) becomes \\
$\vec{x}{_j} = j \cdot \begin{pmatrix} \gamma \\\delta \end{pmatrix} + \frac{j \cdot \delta}{\beta} \cdot \begin{pmatrix} \alpha \\ - \beta \end{pmatrix} = 
\begin{pmatrix} j\cdot \gamma+j\cdot\frac{\delta}{\beta} \cdot \alpha\\ j\cdot \delta+j\cdot\frac{\delta}{\beta} \cdot (-\beta) \end{pmatrix} = 
j \cdot\begin{pmatrix} \gamma+\frac{\delta}{\beta} \cdot \alpha\\ 0\end{pmatrix} $. \\
 Since all the points for different $j$ are on the $\xi$-axis, for $j=1$ we obtain a minimal value of  $cyc = \gamma+\frac{\delta}{\beta} \cdot \alpha$.\\
e) Again, for the alternative case $\alpha > \beta$ and $\alpha  |  \gamma$ we look at the symmetric cycloid. 
\end{proof}

 The pattern of Figure \ref{cyc-theorem} is derived from  the cycloid  $ \mathcal{C}( 1,2,5,3 ) $. The point $C_3$ is computed by the following formula, as derived in the preceding proof: $\vec{x}{_3} = 3 \cdot \begin{pmatrix} \gamma \\\delta \end{pmatrix} + \lfloor \frac{3 \cdot \delta}{\beta}\rfloor \cdot \begin{pmatrix} \alpha \\ - \beta \end{pmatrix}  =
  3 \cdot \begin{pmatrix} 5 \\ 3 \end{pmatrix} + \lfloor \frac{3 \cdot 3}{2}\rfloor \cdot \begin{pmatrix} 1 \\ - 2 \end{pmatrix}  = \begin{pmatrix} 19 \\ 1  \end{pmatrix} $, leading to 
  $cyc_3 = 20$. The values for $C_2$ and $C_1 = R$ are $\begin{pmatrix} 13 \\ 0  \end{pmatrix} $ and $\begin{pmatrix} 6 \\ 1  \end{pmatrix} $, respectively.

 Cycloids of minimal length are characteristic for the structure of cycloids. They can be identified from the cycloid net without knowing the fundamental parallelogram. On the other hand, cycloids where the minimal length $cyc$ of cycloids is given by 
 the Equation (\ref{basic}) cover a huge class of cycloids with favourable properties.
This formula specialises the general formula of Theorem \ref{minimal cycles} b) to the case $j=1$ and therefore avoids the use of the minimum operator. Due to the relation to the local basic circuit (see text before Theorem \ref{minimal cycles}) we call the class of \emph{lbc-cycloids}.

By a Pyton-Programm for all of the 
 $10^8$ cycloids with $1  \leq \alpha,\beta,\gamma,\delta  \leq 100$  the values of $i$  and $j$ in the formula of 
 Theorem \ref{minimal cycles} a) have been computed. As a result, 99 \% of these  belonged to 
 the class of \emph{lbc-cycloids}.
 To give a rather strange example of such cases, also all cycloids  
$ \mathcal{C}( 1,11,3, \delta ) $ with $1  \leq \delta  \leq 10^6 \land \delta \neq 6$ belong to this class.

  \begin{equation} \label{basic} 
cyc = \gamma + \delta + 
                     \; \left\{
	                     \begin{array}{lll}
		            \lfloor\frac{\delta}{\beta}\rfloor (\alpha - \beta) &  \textrm{if} &\alpha \leq \beta  \;\;\ \\
		            -\lfloor\frac{\gamma}{\alpha}\rfloor (\alpha - \beta) & \textrm{if} & \alpha > \beta   \;\;
	                \end{array}
                     \right\} 
\end{equation}

In Theorem \ref{minimal cycles} c) and d)  particular subclasses of lbc-cycloids are given.
  In \cite{Valk-2020} different types of circular traffic queues have been studied. It has been proven, that they are represented by cycloids  $ \mathcal{C}( g,c,1,1 ) $, $ \mathcal{C}( g,c,c,c ) $ and
  $ \mathcal{C}( g,c,\frac{g \cdot c}{\Delta} ,1\frac{g \cdot c}{\Delta} ) $, where $c$ is the number of cars, $g$ is the number of gaps in the queue and $\Delta = gcd(g,c)$. 
 By  Theorem \ref{minimal cycles} c) they are all lbc-cycloids. In this paper we also prove that the Equation (\ref{basic}) is invariant to some reductions of cycloids.

  \begin{definition} \label{lbc-cycloid} 
  A cycloid  $ \mathcal{C}( \alpha, \beta, \gamma, \delta ) $ is called \emph{lbc-cycloid} if the minimal length $cyc$ of its cycles is given by Equation (\ref{basic}).\end{definition}

\begin{definition} [\cite{Valk-2019}] \label{standard-M0}
For a cycloid $\mathcal{C}(\alpha,\beta,\gamma,\delta)$ 
we define a cycloid system
$\mathcal{C}(\alpha,\beta,\gamma,\delta,M_0)$ 
or $\mathcal{C}(\N{},M_0)$ by adding the standard initial marking: \\
$ M_0 =$ $ \col{ \gsvw{\xi,\eta} \in \GSvw[1] }{\, \beta\xi + \alpha\eta \,\leq\, 0 \ \land \ \beta(\xi+1) + \alpha\eta \,>\, 0 }/_\zykaeq \ \,\cup $\\
$\; \; \; \;\; \; \; \; \; \; \; \; \; \;\; \;\col{\gsrw{\xi,\eta} \in \GSrw[1] }{\, \beta\xi + \alpha\eta \,\leq\, 0 \ \land \ \beta\xi + \alpha(\eta+1) \,>\, 0 }/_\zykaeq $. 
\end{definition}

\begin{lemma} [\cite{Valk-2019}]\label{le-standard-M0}
Given a cycloid system $\mathcal{C}(\alpha,\beta,\gamma,\delta,M_0)$ 
with standard initial marking $M_0$ then $| M_0 \cap S^\rightarrow |= \beta$ and 
$|M_0 \cap S^\leftarrow |= \alpha$.
\end{lemma}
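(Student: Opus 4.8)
The plan is to reduce everything to the behaviour of the single linear form $L(\xi,\eta) := \beta\xi + \alpha\eta$, in terms of which the two marking conditions of Definition \ref{standard-M0} read $-\beta < L(\xi,\eta) \le 0$ for a forward place $\gsvw{\xi,\eta}$ and $-\alpha < L(\xi,\eta) \le 0$ for a backward place $\gsrw{\xi,\eta}$. The crucial observation is how $L$ transforms along the two generators of the equivalence relation of Definition \ref{cycloid}: a direct computation gives $L(\xi+\alpha,\eta-\beta) = L(\xi,\eta)$ and $L(\xi+\gamma,\eta+\delta) = L(\xi,\eta) + A$, where $A = \alpha\delta+\beta\gamma$. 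Hence the generator $(\alpha,-\beta)$ leaves $L$ unchanged while $(\gamma,\delta)$ shifts it by the area $A$, so any equivalence $x_{\xi,\eta}\equiv x_{\xi+m\alpha+n\gamma,\,\eta-m\beta+n\delta}$ changes the value of $L$ by exactly $nA$.

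First I would count marked forward places line by line. Fixing $\eta$ and solving $-\beta<\beta\xi+\alpha\eta\le 0$ for $\xi$ yields the half-open interval $-1-\frac{\alpha\eta}{\beta} < \xi \le -\frac{\alpha\eta}{\beta}$ of length exactly $1$, which contains precisely one integer; so each row $\eta$ carries exactly one marked forward place. Because $(\alpha,-\beta)$ preserves $L$ (and hence the marking condition), the marked place of row $\eta$ is equivalent to that of row $\eta-\beta$, so the classes are governed by the residue $\eta \bmod \beta$. To see there are exactly $\beta$ of them I would argue distinctness: if two marked forward places are equivalent, their $L$-values differ by $nA$, but both lie in $(-\beta,0]$, so $|nA|<\beta$; since $A=\alpha\delta+\beta\gamma>\beta$ this forces $n=0$, and the remaining difference $m(\alpha,-\beta)$ gives $\eta\equiv\eta' \pmod\beta$. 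Thus distinct residues give distinct classes, and all $\beta$ residues occur as $\eta$ ranges over $\Int$, whence $|M_0\cap S^\rightarrow|=\beta$.

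The backward count is entirely analogous, reading the strip by vertical lines: fixing $\xi$, the condition $-\alpha<\beta\xi+\alpha\eta\le 0$ pins down exactly one integer $\eta$ per column, and the $L$-preserving generator $(\alpha,-\beta)$ now identifies columns $\xi$ and $\xi+\alpha$, so the classes are governed by $\xi\bmod\alpha$; the same $L$-gap argument, now using $A>\alpha$, gives distinctness and yields $|M_0\cap S^\leftarrow|=\alpha$. Alternatively one may transport the backward statement to the forward one via the symmetric cycloid of Theorem \ref{shear} e). I expect the only genuinely delicate point to be this distinctness step, namely verifying that no two marked representatives in inequivalent residue lines collapse onto one class, which is precisely where the inequalities $A>\beta$ and $A>\alpha$ enter; the per-line counting is then just the elementary remark that a half-open interval of length one contains a unique integer.
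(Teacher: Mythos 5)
The paper does not actually prove this lemma: it is quoted from \cite{Valk-2019} and stated without an in-paper argument, so there is no internal proof to compare yours against. On its own merits, your proof is correct and self-contained. The reduction to the linear form $L(\xi,\eta)=\beta\xi+\alpha\eta$ is sound: the two marking conditions of Definition \ref{standard-M0} are exactly $-\beta<L\le 0$ (forward) and $-\alpha<L\le 0$ (backward), and your transformation computation is right, since $L(\xi+m\alpha+n\gamma,\,\eta-m\beta+n\delta)=L(\xi,\eta)+nA$, so the generator $(\alpha,-\beta)$ preserves $L$ while $(\gamma,\delta)$ shifts it by $A$. The per-line count is elementary and correct (the half-open interval of length one contains a unique integer, so each row carries exactly one marked forward place and each column exactly one marked backward place), and you correctly identified the one delicate point: distinctness of classes. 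Your argument there is watertight — two equivalent marked places have $L$-values differing by $nA$ with both values in an interval of length $\beta$ (resp.\ $\alpha$), and since $A=\alpha\delta+\beta\gamma\ge \beta+1>\beta$ (resp.\ $A\ge\alpha+1>\alpha$) for parameters in $\Natp$, this forces $n=0$, after which the residual shift $m(\alpha,-\beta)$ shows the classes of marked forward (resp.\ backward) places are in bijection with residues $\eta \bmod \beta$ (resp.\ $\xi \bmod \alpha$). This gives exactly $\beta$ and $\alpha$ classes, which is what the lemma asserts, and it meshes correctly with the fact that $M_0$ is by definition a set of equivalence classes under $\zykaeq$. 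Your closing remark that the backward case could instead be transported through the symmetric cycloid of Theorem \ref{shear} e) would need mild care (that map swaps forward and backward places, as the paper notes after Theorem \ref{th-all-param}, so it is a net but not a cycloid isomorphism), but since you prove the backward case directly, nothing rests on it.
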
 

%
The following Synthesis Theorem for lbc-cycloids (Definition \ref{lbc-cycloid}) allows for a cycloid system, given as a net without the parameters
 $ \alpha, \beta, \gamma, \delta $, to compute these parameters. It does
not necessarily give a unique result, but for $\alpha \neq \beta$ the resulting cycloids are isomorphic. 
In the theorem $\tau_0 := |\{t |\;\;|{}^{\ndot} t\cap M_0| \geq 1 \;\}|$ is the number of initially marked transitions and $\tau_a := |\{t |\;\;|{}^{\ndot} t\cap M_0| = 2 \;\}|$ is the number of initially active transitions.
They are used to determine $\alpha$ and $\beta$. 
Cycloids with identical systems parameters $\tau_0, \tau_a$, area $A$ and minimal cycle length $cyc$
 are called $\sigma$-equivalent.
 While most of  the theorem is from \cite{Valk-2019},
the given relations of the different solutions are derived in the proof for this theorem in \cite{Valk-2019}.
\begin{theorem} [lbc Synthesis Theorem \cite{Valk-2019}] \label{synthesis}
Cycloid systems with identical system parameters $\tau_0, \tau_a,$ $ A$ and $cyc$ are called $\sigma$-$equivalent$.
Given a lbc-cycloid system $\mathcal{C}(\alpha,\beta,\gamma,\delta, M_0)$ (Definition \ref{lbc-cycloid}) in its net representation $(S,T,F,M_0)$ where the parameters $\tau_0, \tau_a, A$ and $cyc$ are known (but the parameters $\alpha,\beta,\gamma,\delta$ are not). Then a $\sigma$-$equivalent$ cycloid $\mathcal{C}(\alpha',\beta',\gamma',\delta')$ can be computed by $\alpha' = \tau_0$, $\beta' = \tau_a$ and
$\gamma'$ and $\delta'$ 
by some positive integer solution of the following formulas using these settings of $\alpha'$ and $\beta'$:\\
	a) case $\alpha' > \beta'$: $\gamma' \; mod\; \alpha' = \frac{\alpha'\cdot cyc - A}{\alpha'-\beta'}$ and 
	\\
	b) case $\alpha' < \beta'$: $\delta' \; mod\; \beta' = \frac{\beta'\cdot cyc - A}{\beta' - \alpha'}$ and 
\\
By  $A = \alpha \cdot \delta + \beta \cdot \gamma$ the missing parameters $\delta$ in a) and $\gamma$ in b) are obtained.
 These equations may result in different cycloid parameters. The solutions however are related by the 
 isomorphic transformations in Theorem \ref{shear}.
 If the distinction between 
 $S^\rightarrow$ and $S^\leftarrow $
is known Lemma \ref{le-standard-M0} can be used in place of $\tau_0$ and $ \tau_a$ .
\end{theorem}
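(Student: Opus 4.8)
The plan is to treat the statement in three stages: recover $\alpha',\beta'$ from the marking counters $\tau_0,\tau_a$, derive the closed forms for $\gamma',\delta'$ from the lbc cycle length by pure cycloid algebra, and finally show that the several admissible solutions are shear-related, hence cycloid-isomorphic.

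First I would determine $\tau_0$ and $\tau_a$. The decisive observation is that $k(\xi,\eta):=\beta\xi+\alpha\eta$ is invariant modulo $A$ under the equivalence of Definition \ref{cycloid}: replacing $(\xi,\eta)$ by $(\xi+m\alpha+n\gamma,\,\eta-m\beta+n\delta)$ changes $k$ by $n(\alpha\delta+\beta\gamma)=nA$ and is independent of $m$, so $k\,mod\,A$ is a well-defined function on transition classes. Unfolding Definition \ref{standard-M0} shows that the forward input place of $t_{\xi,\eta}$ is marked iff $0<k(\xi,\eta)\leq\beta$ and the backward input place iff $0<k(\xi,\eta)\leq\alpha$; hence a transition class is initially active iff $0<k\,mod\,A\leq\min(\alpha,\beta)$ and initially marked iff $0<k\,mod\,A\leq\max(\alpha,\beta)$. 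Counting the classes in each band — one must account for $d:=gcd(\alpha,\beta)$: since $k$ attains only multiples of $d$, the induced map $\bar k$ from the $A$ transition classes (a group of order $A$) onto the $A/d$ multiples of $d$ in $\Int/A\Int$ has all fibres of size $d$, so a band $0<k\,mod\,A\leq c$ with $d\,|\,c$ contains exactly $c$ classes — yields $\tau_a=\min(\alpha,\beta)$ and $\tau_0=\max(\alpha,\beta)$, in agreement with Lemma \ref{le-standard-M0}. Thus $\alpha'=\tau_0,\ \beta'=\tau_a$ recovers $\{\alpha,\beta\}$ with $\alpha'\geq\beta'$ (case a); if instead the partition $S^\rightarrow,S^\leftarrow$ is available, Lemma \ref{le-standard-M0} returns the true $\alpha,\beta$, which may force case b.

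The core is then an algebraic identity built on Equation (\ref{basic}), valid precisely because the cycloid is an lbc-cycloid (Definition \ref{lbc-cycloid}). In case $\alpha>\beta$ write $\gamma=\alpha q+r$ with $q=\lfloor\gamma/\alpha\rfloor$ and $r=\gamma\,mod\,\alpha$; substituting into $cyc=\gamma+\delta-\lfloor\gamma/\alpha\rfloor(\alpha-\beta)$ cancels the floor term and leaves $cyc=r+\delta+q\beta$. Multiplying by $\alpha$ and eliminating $\alpha\delta$ and the $q$-term through $A=\alpha\delta+\beta\gamma=\alpha\delta+\alpha\beta q+\beta r$ telescopes everything to $\alpha\cdot cyc-A=r(\alpha-\beta)$, i.e. $\gamma'\,mod\,\alpha'=\frac{\alpha'\cdot cyc-A}{\alpha'-\beta'}$, which is part a); because $r$ is a genuine remainder the right-hand side is automatically a nonnegative integer below $\alpha'$. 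Part b) is the identical computation with $(\alpha,\gamma)$ and $(\beta,\delta)$ interchanged, most economically by applying the case-a result to the symmetric cycloid $\mathcal{C}(\beta,\alpha,\delta,\gamma)$ of Theorem \ref{shear} e).

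Finally, the congruence only fixes $\gamma'$ modulo $\alpha'$, and once $\gamma'$ is chosen the area equation $A=\alpha'\delta'+\beta'\gamma'$ forces $\delta'$. Distinct admissible lifts differ by $\gamma'\mapsto\gamma'-\alpha'$ together with $\delta'\mapsto\delta'+\beta'$, which leave $A$, the remainder and $cyc$ unchanged and are exactly the shear isomorphisms of Theorem \ref{shear} a),c); hence all solutions yield net-isomorphic, $\sigma$-equivalent cycloids. I expect the genuine obstacle to be the first stage: converting the geometric band $0<k\,mod\,A\leq c$ into an exact class count when $gcd(\alpha,\beta)>1$, so that $\tau_0$ and $\tau_a$ come out equal to $\max(\alpha,\beta)$ and $\min(\alpha,\beta)$ rather than merely proportional to them.
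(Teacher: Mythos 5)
Your proposal is correct, but there is no in-paper proof to measure it against: Theorem \ref{synthesis} is imported from \cite{Valk-2019} without proof (the text explicitly defers even the relations between the different solutions to ``the proof for this theorem in \cite{Valk-2019}''), and the paper only supplies the worked example $\mathcal{C}(4,3,6,7)$. Judged against the machinery the paper does provide, your reconstruction is sound and in places goes beyond it. Your stage 2 is exactly the derivation the surrounding text suggests: specialising Equation (\ref{basic}), writing $\gamma=\alpha q+r$ so the floor term cancels to $cyc=r+\delta+q\beta$, and eliminating $\alpha\delta+\alpha\beta q$ via $A=\alpha\delta+\alpha\beta q+\beta r$ gives $\alpha\cdot cyc-A=r(\alpha-\beta)$; your remark that the right-hand side is then automatically a remainder in $[0,\alpha')$ is a point the theorem statement leaves implicit. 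Stage 3 matches the paper's own remark that the solutions are related by Theorem \ref{shear}, and you correctly add that integrality of $\delta'$ is uniform across lifts (since $\beta'\alpha'\equiv 0 \bmod \alpha'$) and that the side conditions $\gamma>\alpha$, $\delta>\beta$ hold along the chain of shears. Stage 1 is genuinely new relative to this paper, which merely posits $\tau_0,\tau_a$ and cites Lemma \ref{le-standard-M0}: your invariant $k=\beta\xi+\alpha\eta \bmod A$ (the increment is $nA$, independent of $m$) and the fibre argument --- the transition classes form a group of order $A=\det \mathbf{A}$, $\bar k$ maps it homomorphically onto $d\,\Int/A\Int$ with $d=\gcd(\alpha,\beta)$ dividing $A$, so every fibre has size $d$ and a band $0<k \bmod A\leq c$ with $d\,|\,c$ holds exactly $c$ classes --- is correct and yields the sharper conclusion $\tau_0=\max(\alpha,\beta)$, $\tau_a=\min(\alpha,\beta)$. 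That in turn explains the quirk you rightly flag: with $\alpha'=\tau_0$, $\beta'=\tau_a$ one always lands in case a), and case b) is reachable only through the alternative route via Lemma \ref{le-standard-M0}; your reduction of case b) to case a) through the symmetric cycloid of Theorem \ref{shear} e) is legitimate because Equation (\ref{basic}) is invariant under $(\alpha,\beta,\gamma,\delta)\mapsto(\beta,\alpha,\delta,\gamma)$, so the symmetric cycloid is again an lbc-cycloid with the same $A$ and $cyc$. The only case neither you nor the theorem treats is $\alpha'=\beta'$, where both formulas degenerate; that is a limitation of the statement, not of your argument.
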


To give an example, consider the values $\alpha = 4, \beta = 3, cyc = 12, A = 46$ (which are obtained from the  cycloid  $ \mathcal{C}( 4,3,6,7 ) $ ) then we compute
$\gamma' \; mod\; \alpha' = \frac{\alpha'\cdot cyc - A}{\alpha'-\beta'} = \frac{4\cdot 12 - 46}{4-3} = 2$ and 
$\gamma' = 2$ and $\delta' = \frac{A-\beta \cdot \gamma}{\alpha} =  \frac{46-3 \cdot 2}{4} = 10 $. The resulting cycloid  $ \mathcal{C}( \alpha', \beta', \gamma', \delta'  ) =   \mathcal{C}(4,3,2,10) $ is transformed by Theorem \ref{shear} b) to $ \mathcal{C}( 4,3,6,7 ) $.

 When working with cycloids it is sometimes important to find for a transition $t$ outside the fundamental parallelogram the unique equivalent element $\rho(t)$ inside. 
In general,
by enumerating all elements of the fundamental parallelogram (using Theorem 7 in \cite{Valk-2018}) and applying the equivalence test from Theorem \ref{parameter} a runtime is obtained which already fails for small cycloids.
The following theorem allows for a better algorithm\footnote{The algorithm is implemented under \url{https://cycloids.de}.}, which is linear with respect to the cycloid parameters when the area $A$ is given.
$\rho $ is a mapping defined on the whole Petri space with
range in the fundamental parallelogram of  $ \mathcal{C}$. With the interim result of Lemma \ref{xi-max} it is used in Theorem \ref{bd-irreducible} to prove properties of $\beta\delta$-irreducible cycloids.

\begin{theorem} \label{xy-to-FP}
For any element $\vec{u} = (u,v)$ of the Petri space the (unique) equivalent element within the fundamental parallelogram of a cycloid    $ \mathcal{C}=\mathcal{C}( \alpha, \beta, \gamma, \delta ) $ with matrix $\mathbf{A}$ is 
\begin{equation}\label{pi}
\rho(\vec{u}) = \vec{u} - \mathbf{A}\begin{pmatrix} m \\ n \end{pmatrix}
\end{equation}
where 
$m = \lfloor \frac{1}{A}(u\cdot\delta - v\cdot \gamma)\rfloor$ and $n = \lfloor \frac{1}{A}(v\cdot\alpha + u\cdot\beta)\rfloor$.\\
We write $\rho_\mathcal{C}(\vec{u})$ when the reference to the cycloid  $ \mathcal{C}$ is not obvious.
\end{theorem}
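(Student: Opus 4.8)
The plan is to establish two facts --- that $\rho(\vec u)$ is equivalent to $\vec u$, and that it lies in the (half-open) fundamental parallelogram --- and then to note that uniqueness is automatic because that parallelogram is a fundamental domain for the lattice $\mathbf A \cdot \Int^2$.

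First I would check equivalence. The numbers $m$ and $n$ of Equation (\ref{pi}) are floors of reals, hence $m,n \in \Int$, and Equation (\ref{pi}) rearranges to $\vec u - \rho(\vec u) = \mathbf A \begin{pmatrix} m \\ n \end{pmatrix}$. By Theorem \ref{parameter} b) this is exactly the condition for $\rho(\vec u) \equiv \vec u$. Moreover $\rho(\vec u) \in \Int^2$, since $\vec u$ is an integer point of the Petri space and $\mathbf A \begin{pmatrix} m \\ n \end{pmatrix}$ has integer entries; so $\rho(\vec u)$ is indeed a transition.

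The heart of the argument is to locate $\rho(\vec u)$ inside the parallelogram. The key observation is that $m$ and $n$ are precisely the floors of the two components of $\mathbf A^{-1}\vec u = \frac1A \mathbf B \vec u$ (using $\mathbf A^{-1} = \frac1A \mathbf B$ from the proof of Theorem \ref{parameter}), because $\frac1A \mathbf B \vec u = \frac1A \begin{pmatrix} u\delta - v\gamma \\ u\beta + v\alpha \end{pmatrix}$. Applying $\mathbf A^{-1}$ to Equation (\ref{pi}) gives $\mathbf A^{-1} \rho(\vec u) = \mathbf A^{-1}\vec u - \begin{pmatrix} m \\ n \end{pmatrix}$, whose two components are the fractional parts of the components of $\mathbf A^{-1}\vec u$ and therefore lie in $[0,1)$. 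Writing $\mathbf A^{-1}\rho(\vec u) = \begin{pmatrix} s \\ t \end{pmatrix}$ with $s,t \in [0,1)$, I obtain $\rho(\vec u) = \mathbf A \begin{pmatrix} s \\ t \end{pmatrix} = s \begin{pmatrix} \alpha \\ -\beta \end{pmatrix} + t \begin{pmatrix} \gamma \\ \delta \end{pmatrix}$, i.e. a combination of the edge vectors $\overrightarrow{OP}$ and $\overrightarrow{OQ}$ with coefficients in $[0,1)$. This is exactly the half-open fundamental parallelogram: it contains $O$ and the two edges towards $P$ and $Q$ (with $P,Q$ themselves excluded), and excludes the opposite corners $P,Q,R$ together with the edges $\overline{PR}$ and $\overline{QR}$, matching the description following Definition \ref{cycloid}.

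Finally, for uniqueness, two points of this half-open parallelogram differ by a vector $\mathbf A \begin{pmatrix} s' \\ t' \end{pmatrix}$ with $s',t' \in (-1,1)$; since $\mathbf A$ is invertible, such a difference lies in the lattice $\mathbf A \cdot \Int^2$ only for $s' = t' = 0$. Hence distinct points of the parallelogram are inequivalent, and $\rho(\vec u)$ is the unique representative. The main obstacle is not the computation --- the floor/fractional-part bookkeeping is routine once $\mathbf A^{-1} = \frac1A \mathbf B$ is invoked --- but rather translating the textual, geometric description of which boundary edges and corners belong to the cycloid into the clean half-open condition $s,t \in [0,1)$, and confirming that this region is a genuine fundamental domain so that uniqueness is guaranteed.
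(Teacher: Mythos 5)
Your proof is correct, and it takes a genuinely different route from the paper's. The paper's argument is geometric: it constructs the line through $u$ parallel to $\overline{QR}$ and intersects it with the line $\overline{OQ}$ (Equation (\ref{Vektorgleichung 2})), solving for $\lambda = \frac{1}{A}(u\delta - v\gamma)$ and $\mu = \frac{1}{A}(v\alpha + u\beta)$, and then reasons about the relative positions of auxiliary points $d$, $b$, $c$, $a$ on these lines --- with a separate case distinction for $\lambda < 0$ --- to conclude $m = \lfloor \lambda \rfloor$ and $n = \lfloor \mu \rfloor$. You shortcut all of this by recognizing that $(\lambda,\mu)$ is nothing but $\mathbf{A}^{-1}\vec{u} = \frac{1}{A}\mathbf{B}\vec{u}$, so subtracting the floors leaves the fractional parts $(s,t) \in [0,1)^2$, which are exactly the coordinates of $\rho(\vec{u})$ in the basis $\overrightarrow{OP}, \overrightarrow{OQ}$; membership in the half-open parallelogram is then immediate, with negative $\lambda$ handled uniformly and no case analysis. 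Your version also supplies two things the paper leaves implicit: an explicit check that coefficients in $[0,1)$ match the stated boundary conventions (the edges through $O$ included, the corners $P,Q,R$ and the edges $\overline{PR}$, $\overline{QR}$ excluded), and an actual proof of the uniqueness asserted in the theorem statement, via the fundamental-domain argument that a lattice vector $\mathbf{A}\begin{pmatrix} s' \\ t' \end{pmatrix}$ with $s',t' \in (-1,1)$ must vanish because $\mathbf{A}$ is invertible --- the paper's proof never addresses uniqueness. What the paper's approach buys is geometric intuition tied to Figure \ref{u-v-to-FP}; what yours buys is brevity and rigor, at the cost of making the reader translate the textual description of the fundamental parallelogram into the half-open condition, which you do correctly.
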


\begin{proof}
As \emph{m-sector} we denote the band between the lines $\overline{OQ} $ and $\overline{PR} $ including the points of $\overline{OQ} $, but not those of $\overline{PR} $ (see Figure \ref{u-v-to-FP}). The idea of the proof is to reach the \emph{m-sector} on a line parallel to $\overline{QR} $  starting in $u$  until a point $c$  is reached\footnote{($u, a, b, c, \cdots$ denote the points in the Petri space, where the vectors
$\vec{u},\vec{a},\vec{b},\vec{c},\cdots$ are pointing to them when originated in the origin $(0,0)$, respectively.}.
The point $c$ is the intersection of this line and the line through $x$ parallel to 
$\overline{OQ} $.
  Then $m,n$ are the integer multiple of the vectors $\begin{pmatrix}-\alpha  \\ \beta   \end{pmatrix}$,
$\begin{pmatrix}\gamma  \\ \delta   \end{pmatrix}$, resulting in  the line segments $u$ to $c$ and $c$ to $x$, respectively.

More formally we derive: $\vec{u} \equiv \vec{x} \Leftrightarrow \vec{u} - \vec{x} = \mathbf{A}\begin{pmatrix} m \\ n  \end{pmatrix}$ (by Theorem \ref{parameter}). Hence we obtain:  $\vec{x} = \vec{u} - \mathbf{A}\begin{pmatrix} m \\ n  \end{pmatrix}  = \vec{u} - \begin{pmatrix} \alpha & \gamma \\ -\beta & \delta  \end{pmatrix} \begin{pmatrix} m \\ n  \end{pmatrix} = \vec{u} - \begin{pmatrix}m \cdot \alpha + n \cdot \gamma \\ -m \cdot \beta + n \cdot\delta   \end{pmatrix}  = \vec{u} +m \begin{pmatrix}-\alpha  \\ \beta   \end{pmatrix} - n \begin{pmatrix}\gamma  \\ \delta   \end{pmatrix} $ or

\begin{equation}\label{Vektorgleichung 1}
\vec{x}+ n \begin{pmatrix}\gamma  \\ \delta   \end{pmatrix} = \vec{u} +m \begin{pmatrix}-\alpha  \\ \beta   \end{pmatrix}. 
\end{equation}
This vector equation results in two linear equations for four unknown $n,m$ and $\vec{x} = (x,y)$. 
With $\lambda \in \Real $ and $\vec{u} +\lambda \begin{pmatrix}-\alpha  \\ \beta   \end{pmatrix}  $ the right hand side of equation (\ref{Vektorgleichung 1}) defines the line through $u$ parallel to the line $\overline{QR} $.  
It is intersecting the line
$\overline{OQ} $ in point $d$ and the line $\overline{PR} $ in point $b$.
In the same way equation (\ref{Vektorgleichung 1})  defines with $\mu \in \Real $ and 
$\vec{x}+ \mu \begin{pmatrix}\gamma  \\ \delta   \end{pmatrix}  $ the line through $\vec{x}$ parallel to the line containing $O$ und $Q$.
Both lines intersect in $d$, which is defined by the equation
\begin{equation}\label{Vektorgleichung 2}
 \begin{pmatrix} 0\\ 0  \end{pmatrix} + \mu \begin{pmatrix} \gamma\\ \delta  \end{pmatrix} = \begin{pmatrix} u \\ v  \end{pmatrix} + \lambda\begin{pmatrix} -\alpha \\ \beta  \end{pmatrix} 
\end{equation}
This vector equation provides two linear equations with the solutions
$\lambda = \frac{1}{A}(u\delta - v \gamma)$ und 
$\mu =\frac{1}{A}(v\alpha + u\beta) $. 
%
%
\begin{figure}[t]
 \begin{center}
        \includegraphics [scale = 0.40]{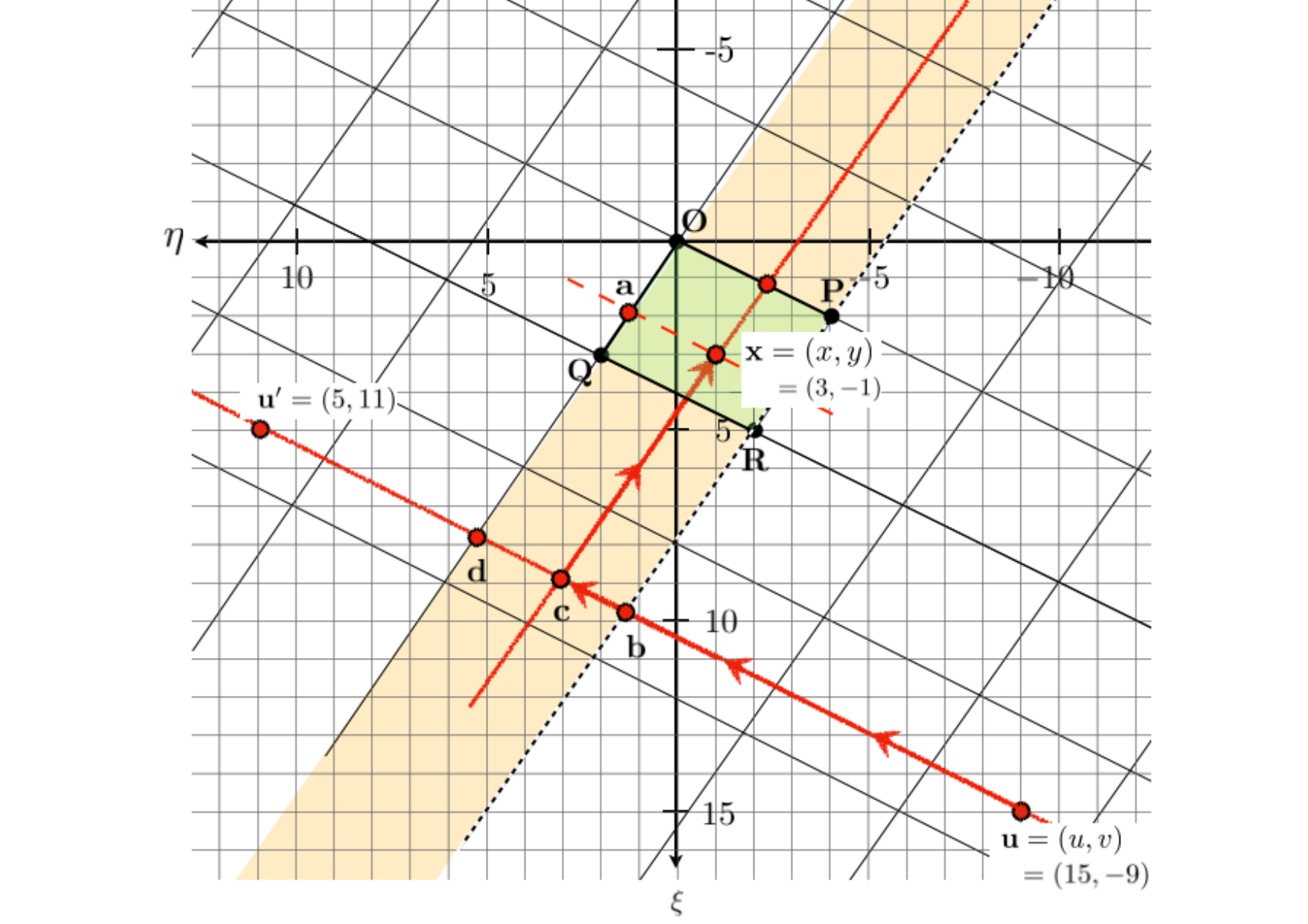}
        \caption{Cycloid  $ \mathcal{C}( 2,4,3,2 )$ in the Petri space to illustrate the proof of Theorem \ref{xy-to-FP}.}
        \label{u-v-to-FP}
      \end{center}
\end{figure}
%
If $\lambda$ is not  negative, then $u$ is on the half-line starting with (and including) the point $d$ in the direction of $c$  and $b$,
hence 
$\mathbf{d} = \vec{u} +\lambda \begin{pmatrix}-\alpha  \\ \beta   \end{pmatrix}$  and 
$\mathbf{b} = \mathbf{d} - \begin{pmatrix}-\alpha  \\ \beta   \end{pmatrix}= 
\vec{u} +(\lambda -1) \begin{pmatrix}-\alpha  \\ \beta   \end{pmatrix}$. Since $c$ is between $d$ and $b$ on the same line  we obtain for 
$\lambda_1$ satisfying $\mathbf{c} = \vec{u} +\lambda_1 \begin{pmatrix}-\alpha  \\ \beta   \end{pmatrix}$ the inequality 
$\lambda  \geq \lambda_1 > \lambda -1$ and since $\lambda_1 \in \Int$ the result 
$m = \lfloor  \lambda \rfloor $. 

If $\lambda$ is  negative, then   $u$, now denoted as $u'$, is on the complementary half-line until (but not containing)  $d$ and not containing  $b$.
The distance between $u'$ and $c$ now is  \textbf{greater} (or equal) than the distance between $u'$ and $d$. 
Instead of the inequality 
$\lambda  \geq \lambda_1 > \lambda -1$ in the case before, we now have $\lambda   \leq \lambda_1 < \lambda -1$ and the number $\lambda_1 \in \Int$ is again obtained by $m = \lambda_1 =  \lfloor  \lambda \rfloor $. 
The formula for  $n = \lfloor  \mu \rfloor$ is derived analogously: it remains to determine the vector 
$\overrightarrow{c\,x} := \vec{x} - \vec{c}$. To this end we introduce $\vec{a} := \vec{d} + \overrightarrow{c\,x}$.
As in the case where we concluded $m = \lambda_1 =  \lfloor \lambda \rfloor$ from $\lambda$
we compare 
$\begin{pmatrix} 0\\ 0  \end{pmatrix} = \vec{d} - \mu \begin{pmatrix} \gamma\\ \delta  \end{pmatrix} $ 
with
$\vec{a}= \vec{d} - \mu_1 \begin{pmatrix} \gamma\\ \delta  \end{pmatrix} $ 
to obtain $n = \mu_1 =  \lfloor  \mu \rfloor$ and 
$ \overrightarrow{c\,x} = \vec{a} - \vec{d} =  \vec{d} - \mu_1 \begin{pmatrix} \gamma\\ \delta  \end{pmatrix} - \vec{d} =
 - \mu_1 \begin{pmatrix} \gamma\\ \delta  \end{pmatrix}$. Combining this result with the vector $\vec{c}$ we obtain for the wanted point
  $\vec{x} = \vec{c} +  \overrightarrow{c\,x}  = 
\vec{u} + \lfloor  \lambda \rfloor \begin{pmatrix}-\alpha  \\ \beta   \end{pmatrix} +\lfloor  \mu \rfloor \begin{pmatrix}-\gamma \\ -\delta   \end{pmatrix} =
\vec{u} + m \begin{pmatrix}-\alpha  \\ \beta   \end{pmatrix} +n \begin{pmatrix}-\gamma \\ -\delta   \end{pmatrix} =
\vec{u} - \mathbf{A}\begin{pmatrix} m \\ n  \end{pmatrix}$ 
 the 
formula just before  equation (\ref{Vektorgleichung 1}).
\end{proof}

An example for the preceding proof is shown in Figure \ref{u-v-to-FP} for the cycloid $ \mathcal{C}( 2,4,3,2 ) $
for $\vec{u} = (u,v) = (15,-9)$ we obtain
$(\lambda,\mu) = (\frac{57}{16}, \frac{21}{8})$, $(m,n) = (3,2), \textbf{c} =(9,3), (x,y) = (3,-1)$.
For the point $(u',v') = (5,11)$ on the same line $\overline{(d,b)} $ we obtain
$(\lambda,\mu) = (- \frac{23}{16}, \frac{21}{8}), (m,n) = (-2,2), \textbf{c} = (9,3), (x,y) = (3,-1) $. 
Next we apply the theorem for  $ \mathcal{C}( 2,3,3,3 ) $ (Figure \ref{2333}) to the element $(9,-1)$,
belonging to the class in the first line of Table \ref{values} (the same as in the last line). 
 In this case we obtain: $(u,v) = (9,-1)$, 
 $m = \lfloor \frac{1}{A}(u\delta - v \gamma)\rfloor= \lfloor \frac{1}{15}(9\cdot3 - (-1)\cdot3)\rfloor = \lfloor \frac{30}{15}\rfloor=2$ 
  and $n = \lfloor \frac{1}{A}(v\cdot\alpha + u\cdot\beta)\rfloor = \lfloor \frac{1}{15}(-1\cdot2 + 9\cdot3)\rfloor= \lfloor \frac{25}{15}\rfloor=1$ and \\
 $\rho( \begin{pmatrix} 9\\ -1 \end{pmatrix} ) =  \begin{pmatrix} 9\\ -1 \end{pmatrix} - \begin{pmatrix} 2 & 3 \\ -3 & 3 \end{pmatrix}\cdot\begin{pmatrix} 2 \\ 1 \end{pmatrix} =
  \begin{pmatrix} 9\\ -1 \end{pmatrix} - \begin{pmatrix} 7 \\ -3 \end{pmatrix} =  \begin{pmatrix} 2 \\ 2 \end{pmatrix}   $.


 \section{Reduction of Cycloids}\label{sec-reduction}
 
  Reductions are defined  on the set  $\Sigma$ of all cycloids  using the terminology from
 \cite{Baa-Nip-1998}.
 For each parameter $\lambda \in  \{   \alpha, \beta, \gamma, \delta \} $ a reduction step $\xrightarrow[\text{\tiny{}}]{\lambda}$ represents a particular scaling down of this parameter.
 Following Theorem \ref{shear} we introduce four such reductions, also called  \emph{reduction rules $R_\lambda$}, for cycloids  keeping them isomorphic. 
 
  \begin{definition} \label{reduction rules} 
For cycloids $ \mathcal{C}_1( \alpha_1, \beta_1, \gamma_1, \delta_1 ) $ and $ \mathcal{C}_2( \alpha_2, \beta_2, \gamma_2, \delta_2 ) $ the following conditional reduction steps  (also called \emph{reduction rules $R_\lambda$}) are defined:
\begin{itemize}
\item [$R_{\alpha}$ :] $\mathcal{C}_1 \xrightarrow[\text{\tiny{}}]{\alpha} \mathcal{C}_2 :\Leftrightarrow$ 
       $ \;\; \alpha_2 = \alpha_1-\gamma_1, \beta_2 = \beta_1+\delta_1, \gamma_2 = \gamma_1$ and $\delta_2 = \delta_1 $
        if $\alpha_1 > \gamma_1$.
\item [$R_{\beta}$ :]  $\mathcal{C}_1 \xrightarrow[\text{\tiny{}}]{\beta} \mathcal{C}_2 :\Leftrightarrow$ 
         $\;\; \alpha_2 = \alpha_1+ \gamma_1, \beta_2 = \beta_1-\delta_1, \gamma_2 = \gamma_1$ and 
 	$\delta_2 = \delta_1$
	 if $\beta_1 > \delta_1$.
 \item [$R_{\gamma}$ :] $\mathcal{C}_1 \xrightarrow[\text{\tiny{}}]{\gamma} \mathcal{C}_2 :\Leftrightarrow$ 
         $ \;\; \alpha_2 = \alpha_1, \beta_2 = \beta_1, \gamma_2 = \gamma_1-\alpha_1$ and $\delta_2 = \delta_1 + 		\beta_1$
	if $\gamma_1 > \alpha_1$.
 \item [$R_{\delta}$ :]  $\mathcal{C}_1 \xrightarrow[\text{\tiny{}}]{\delta} \mathcal{C}_2 :\Leftrightarrow$ 
          $\;\; \alpha_2 = \alpha_1, \beta_2 = \beta_1, \gamma_2 = \gamma_1+\alpha_1$ and  $\delta_2 = \delta_1-\beta_1$ 
	if $\delta_1 > \beta_1$.
 \end{itemize}    
  \end{definition}
 

As several of such rules can be applied, we extend reductions more generally to sets $\Lambda
$ of rules.
 
 \begin{definition} \label{reduction} 
 Let be $\Sigma :=   \{  \mathcal{C}( \alpha, \beta, \gamma, \delta ) | \alpha, \beta, \gamma, \delta \in \Natp \}$
the set of all cycloids and $\Lambda \subseteq \{   \alpha, \beta, \gamma, \delta \}  $ a nonempty subset of parameters. 
A  \emph{$\Lambda$-reduction} is a binary relation $\Lambda \subseteq \Sigma \times \Sigma $ written 
$\mathcal{C} \xrightarrow[\text{\tiny{}}]{\Lambda} \mathcal{C}' $ for $(\mathcal{C},\mathcal{C}') \in \Lambda$
\begin{itemize}
\item [a)] 
$C \xrightarrow{\Lambda^*} C'$
       holds if 
       there is a sequence, called $\Lambda$-\emph{reduction chain},
       $\mathcal{C}_1   \xrightarrow[\text{\tiny{}}]{\lambda_1}  \mathcal{C}_2   \xrightarrow[\text{\tiny{}}]{\lambda_2}  	   	\cdots   \xrightarrow[\text{\tiny{}}]{\lambda_n} \mathcal{C}_{n+1},  $ $(n  \geq 1)$ of reduction steps with $\mathcal{C} 	= \mathcal{C}_1$, $\mathcal{C}_{n+1} = \mathcal{C}'$ and  $\lambda_i \in \Lambda$ for all $1  \leq i  \leq n$.
$n=1$ includes the \emph{reduction step}
       $\mathcal{C} \xrightarrow[\text{\tiny{}}]{\lambda} \mathcal{C}' $.
       In the cases $\Lambda = \{ \lambda \}$ and $\Lambda = \{ \lambda_1,\lambda_2 \}$ we write shorter  
       $\mathcal{C} \xrightarrow[\text{\tiny{}}]{\lambda} \mathcal{C}' $ and
       $\mathcal{C} \xrightarrow[\text{\tiny{}}]{\lambda_1\lambda_2} \mathcal{C}' $, respectively.
       In these two cases we also denote a $\Lambda$-\emph{reduction chain} by $\lambda$-\emph{reduction chain} and $\lambda_1\lambda_2$-\emph{reduction chain}, respectively.
\item [b)] $ \mathcal{C}$ is $\Lambda$-\emph{reducible} if there is  $ \mathcal{C}_1 \in \Sigma$ such that 
        $\mathcal{C} \xrightarrow[\text{\tiny{}}]{\Lambda} \mathcal{C}_1 $.
\item [c)] $ \mathcal{C}$ is $\Lambda$-\emph{irreducible}\footnote{In \cite{Baa-Nip-1998} also the term \emph{in normal form} is used.}
   	if it is not  $\Lambda$-reducible.
 \item [d)] $ \mathcal{C}'$ is a  $\Lambda$-\emph{reduction} of $ \mathcal{C}$ if 
 	 $\mathcal{C} \xrightarrow[\text{\tiny{}}]{\Lambda^*} \mathcal{C}' $ and $ \mathcal{C}'$ is $\Lambda$-\emph{irreducible}.
 \item [e)] Cycloids $ \mathcal{C}_1$ and $ \mathcal{C}_2$  are called $\alpha\gamma$-\emph{reduction 			equivalent}, denoted $\mathcal{C}_1\; \simeq_{\alpha\gamma} \;\mathcal{C}_2$, if they have the same $		\alpha\gamma$-reduction. 
	They are called $\beta\delta$-\emph{reduction equivalent}, 		denoted 
	$\mathcal{C}_1\; \simeq_{\beta\delta} \;\mathcal{C}_2$, if they have the same $\beta\delta$-reduction. 
\item[f)]
 	If $ \mathcal{C} $ is $\gamma$-irreducible and $\gamma < \alpha$ (resp. $\gamma = \alpha$) then $ 			\mathcal{C} $ is called weakly $\gamma$-irreducible (resp. strongly $\gamma$-irreducible).
 	If $ \mathcal{C} $ is $\delta$-irreducible and $\delta < \beta$ (resp. $\delta = \beta$) then $ \mathcal{C} $ is 		called weakly $\delta$-irreducible (resp. strongly $\delta$-irreducible).
\end{itemize}     
\end{definition}

The chain
$ \mathcal{C}( 1,13,1,16 ) \xrightarrow{\delta}
\mathcal{C}( 1,13,2,3 ) \xrightarrow{\beta}
\mathcal{C}( 3,10,2,3 ) \xrightarrow{\beta} 
\mathcal{C}( 5,7,2,3)\xrightarrow{\beta} 
\mathcal{C}( 7,4,2,3) \xrightarrow{\beta}
\mathcal{C}( 9,1,2,3) \xrightarrow{\delta}
\mathcal{C}( 9,1,11,2 ) \xrightarrow{\delta} 
\mathcal{C}( 9,1,20,1 )$
is a  $\beta\delta$-reduction  chain ending in a  $\beta\delta$-irreducible cycloid which is also strongly $\delta$-irreducible.
If we reverse the chain, we  obtain a $\alpha\gamma$-reduction  ending in the $\alpha\gamma$-irreducible cycloid $ \mathcal{C}( 1,13,1,16 )$.
 Also a motive for the naming of strongly/weakly-irreducible can be given by this example.
 $ \mathcal{C}( 1,13,2,3 ) $ is weakly $\delta$-irreducible. It can be further reduced, however,   in the $\beta\delta$-reduction until the strongly $\delta$-irreducible cycloid  $ \mathcal{C}( 9,1,20,1) $ is reached.

  The formula for the length of a minimal cycle in Theorem \ref{minimal cycles} c) is invariant with respect to these reductions. This allows to apply the formula in some cases, even if the conditions $\gamma  \leq \delta$ or $\gamma   \geq \delta$ are not satisfied.

  \begin{theorem} \label{invariant}
Let be $\lambda \in \{ \gamma,\delta \}$ and  $ \mathcal{ C}=\mathcal{C}( \alpha, \beta, \gamma, \delta )  \xrightarrow[\text{\tiny{}}]{\lambda}   \mathcal{C}_1 = \mathcal{C}( \alpha, \beta, \gamma_1, \delta_1 ) $.\\
and
$cyc = \gamma + \delta + 
                     \; \left\{
	                     \begin{array}{lll}
		            \lfloor\frac{\delta}{\beta}\rfloor (\alpha - \beta) &  \textrm{if} &\alpha \leq \beta    
		            \\
		            -\lfloor\frac{\gamma}{\alpha}\rfloor (\alpha - \beta) & \textrm{if} & \alpha > \beta   
		            	                \end{array}
                     \right\} $.  \\
If $cyc_1$ is the same formula, but applied to $ \mathcal{ C}_1$, then for $\alpha \leq \beta,\lambda = \delta$ 
or $\alpha>\beta, \lambda = \gamma$ we obtain
  $cyc_1 = cyc$.                  
\end{theorem}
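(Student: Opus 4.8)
The plan is to exploit the fact that the two branches of the formula are selected purely by the sign of $\alpha-\beta$, and that the reductions $R_\gamma$ and $R_\delta$ (Definition \ref{reduction rules}) leave $\alpha$ and $\beta$ completely unchanged. Consequently $\mathcal{C}_1$ satisfies the same branch condition as $\mathcal{C}$, so in the case $\alpha\leq\beta,\lambda=\delta$ I evaluate the first branch for both $cyc$ and $cyc_1$, and in the case $\alpha>\beta,\lambda=\gamma$ I evaluate the second branch for both. This reduces the proof to two direct substitutions.

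First I would treat $\alpha\leq\beta,\ \lambda=\delta$. The rule $R_\delta$ gives $\gamma_1=\gamma+\alpha$ and $\delta_1=\delta-\beta$, where the side condition $\delta>\beta$ guarantees $\delta_1>0$ (and $\gamma_1>0$ is clear), so $\mathcal{C}_1$ is a genuine cycloid. Substituting into the first branch yields
$$cyc_1=(\gamma+\alpha)+(\delta-\beta)+\left\lfloor\tfrac{\delta-\beta}{\beta}\right\rfloor(\alpha-\beta).$$
The decisive step is the integer-shift identity $\lfloor(\delta-\beta)/\beta\rfloor=\lfloor\delta/\beta\rfloor-1$, which holds because $\lfloor x+k\rfloor=\lfloor x\rfloor+k$ for every integer $k$. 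Expanding, the increment $(\alpha-\beta)$ coming from $\gamma_1+\delta_1$ and the decrement $(\alpha-\beta)$ coming from the reduced floor term cancel exactly, leaving $cyc_1=\gamma+\delta+\lfloor\delta/\beta\rfloor(\alpha-\beta)=cyc$.

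Then I would handle the symmetric case $\alpha>\beta,\ \lambda=\gamma$. Here $R_\gamma$ gives $\gamma_1=\gamma-\alpha$ and $\delta_1=\delta+\beta$, with the side condition $\gamma>\alpha$ ensuring $\gamma_1>0$. Substituting into the second branch and using $\lfloor(\gamma-\alpha)/\alpha\rfloor=\lfloor\gamma/\alpha\rfloor-1$, the same cancellation occurs: $\gamma+\delta$ decreases by $(\alpha-\beta)$ while the term $-\lfloor\gamma_1/\alpha\rfloor(\alpha-\beta)$ increases by $(\alpha-\beta)$, so $cyc_1=cyc$.

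The argument is entirely routine, and I expect no genuine conceptual obstacle; the only points requiring care are the floor identity itself and the observation that the side conditions of the reduction rules keep all parameters positive so that $\mathcal{C}_1\in\Sigma$. The theorem holds precisely because each of these reductions alters $\gamma+\delta$ by $\pm(\alpha-\beta)$ and simultaneously alters the floor correction term by exactly $\mp(\alpha-\beta)$, making the total invariant.
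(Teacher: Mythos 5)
Your proof is correct and follows essentially the same route as the paper's: direct substitution of the reduction rules $R_\gamma$, $R_\delta$ into the respective branch of the formula, together with the floor-shift identity $\lfloor x+k\rfloor=\lfloor x\rfloor+k$ for $k\in\Int$, so that the change $\pm(\alpha-\beta)$ in $\gamma+\delta$ cancels against the change $\mp(\alpha-\beta)$ in the floor correction term. The only cosmetic difference is that the paper carries both choices of $\lambda$ through each branch at once using $\pm/\mp$ signs, whereas you compute the two claimed sign combinations separately.
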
 

\begin{proof} 
For  $\lambda = \gamma$ and $\lambda = \delta$ we obtain 
$\gamma_1 = \gamma - \alpha, \delta_1 = \delta + \beta$ and $\gamma_1 = \gamma + \alpha, \delta_1 = \delta - \beta$, respectively. 
If $\alpha  \leq \beta$ and using $\lfloor x+k \rfloor = \lfloor  x\rfloor + k \;
(x \in \Real, \; k \in \Int$)
and applying
the formula  to $ \mathcal{ C}_1$ we obtain \\
$cyc_1 = \gamma_1 + \delta_1 + \; \lfloor\frac{\delta_1}{\beta}\rfloor (\alpha - \beta)  
 = \gamma  \mp \alpha + \delta \pm \beta + \; \lfloor\frac{\delta\pm \beta}{\beta}\rfloor (\alpha - \beta)  
 = \gamma   + \delta  \mp (\alpha - \beta) + \; (\lfloor\frac{\delta}{\beta}\rfloor \pm 1)(\alpha - \beta)  
 = \gamma   + \delta   + \; \lfloor\frac{\delta}{\beta}\rfloor (\alpha - \beta)  = cyc.
$ 
\\
In the same way for $\alpha > \beta$ we derive 
$cyc_1 = \gamma_1 + \delta_1 - \; \lfloor\frac{\gamma_1}{\alpha}\rfloor (\alpha - \beta)  
 = \gamma  \mp \alpha + \delta \pm \beta - \; \lfloor\frac{\gamma\mp \alpha}{\alpha}\rfloor (\alpha - \beta)  
 = \gamma   + \delta  \mp (\alpha - \beta) - \; (\lfloor\frac{\gamma}{\alpha}\rfloor \mp 1)(\alpha - \beta)  
 = \gamma   + \delta   - \; \lfloor\frac{\gamma}{\alpha}\rfloor (\alpha - \beta)  = cyc.
$ 
\end{proof}

To give an example consider the cycloid   $ \mathcal{C}( 4,2,17,1 ) $ where Theorem \ref{minimal cycles} c) can not be applied since $\alpha > \beta$ and $\gamma > \delta
$.
By repeated $\gamma$-reduction we obtain $ \mathcal{C}( 4,2,1,9 ) $ with
$\gamma < \delta$. Therefore we can compute $cyc = 17+1 - \lfloor \frac{17}{4}  \rfloor (4-2) = 18 - 8 = 10$. This is the same value as for the irreducible cycloid:
$cyc = 1+9 - \lfloor \frac{1}{4}  \rfloor (4-2) = 10 - 0 = 10$. 

In the following theorem the cases
$\lfloor \frac{\delta}{\beta}  \rfloor = 0$  and 
$\lfloor \frac{\gamma}{\alpha}  \rfloor = 0$ are used  to directly compute $\gamma$ and $\delta$ from 
  the properties $cyc$ and $A$ of the cycloid net.
  If the cycloid is $\gamma$- or $\delta$-irreducible Theorem \ref{synthesis} the takes a simpler form, since 
  the floor functions in Theorem \ref{minimal cycles} c) obtain zero as their values. This is the case if the reduction is weak (Definition \ref{reduction} f). If the reduction is strong we have $\alpha=\gamma$ or $\beta=\delta$.
  In the case of a strongly $\delta$-irreducible cycloid with $\beta=\delta$  the equation $A = \alpha\cdot\beta+\beta\cdot\gamma$ implies simply $\gamma = \frac{A}{\beta}-\alpha  $. The case of a  strongly $\gamma$-irreducible cycloid is similar.
That is why we treat only the  case of weakly irreducible cycloids  in the following theorem, which takes a similar form as Theorem \ref{synthesis}.

\begin{theorem} \label{Th-normal-a-neq-b}
Let be $ \mathcal{C} =\mathcal{C}( \alpha, \beta, \gamma, \delta) $ a lbc-cycloid with known values
$\alpha \neq \beta$, area $A$, minimal cycle length $cyc$. Furthermore we assume that the cycloid is weakly $\delta$-irreducible if $\alpha  \leq \beta$ and weakly $\gamma$-irreducible if $\alpha  >\beta$.
Then $\gamma =\frac{1}{\alpha - \beta}\cdot(\alpha \cdot cyc - A )$ and
$\delta= \frac{1}{\alpha - \beta}\cdot(A - \beta \cdot cyc )$.
 \end{theorem}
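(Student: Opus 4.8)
The plan is to use the lbc-hypothesis to replace $cyc$ by the explicit Equation (\ref{basic}), then to show that the weak-irreducibility assumptions force the relevant floor term to vanish, collapsing that equation to $cyc = \gamma + \delta$; solving this together with the area identity $A = \alpha\cdot\delta + \beta\cdot\gamma$ then yields the two claimed formulas. First I would record these two governing equations: since $\mathcal{C}$ is an lbc-cycloid, Definition \ref{lbc-cycloid} gives $cyc$ by Equation (\ref{basic}), and by Definition \ref{cycloid} the area satisfies $A = \alpha\cdot\delta + \beta\cdot\gamma$.

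Next I would dispose of the floor term by a case split along the sign of $\alpha-\beta$ (which is nonzero by the standing hypothesis $\alpha \neq \beta$). In the case $\alpha < \beta$, weak $\delta$-irreducibility means, by Definition \ref{reduction} f), that $\mathcal{C}$ is $\delta$-irreducible with $\delta < \beta$. Indeed the rule $R_\delta$ applies exactly when $\delta > \beta$, so $\delta$-irreducibility already forces $\delta \leq \beta$, and the weak qualifier sharpens this to the strict inequality $\delta < \beta$, whence $\lfloor \frac{\delta}{\beta}\rfloor = 0$. The first branch of Equation (\ref{basic}) therefore collapses to $cyc = \gamma + \delta$. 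Symmetrically, in the case $\alpha > \beta$, weak $\gamma$-irreducibility gives $\gamma < \alpha$ (since $R_\gamma$ applies iff $\gamma > \alpha$), so $\lfloor \frac{\gamma}{\alpha}\rfloor = 0$ and the second branch again yields $cyc = \gamma + \delta$.

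Finally I would solve the resulting linear system. Eliminating $\delta = cyc - \gamma$ in $A = \alpha\cdot\delta + \beta\cdot\gamma$ gives $A = \alpha\cdot cyc - (\alpha-\beta)\cdot\gamma$, hence $\gamma = \frac{1}{\alpha-\beta}(\alpha\cdot cyc - A)$; back-substitution of this into $\delta = cyc - \gamma$ then yields $\delta = \frac{1}{\alpha-\beta}(A - \beta\cdot cyc)$. The division is legitimate because $\alpha - \beta \neq 0$. The only genuinely delicate point is the bookkeeping of the irreducibility definitions, namely translating \emph{weakly $\delta$-irreducible} (resp.\ \emph{weakly $\gamma$-irreducible}) into the sharp strict inequality $\delta < \beta$ (resp.\ $\gamma < \alpha$) that makes the floor term vanish; everything after that reduction is elementary linear algebra, and the two cases produce the same pair of formulas.
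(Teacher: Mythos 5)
Your proposal is correct and takes essentially the same route as the paper's own proof: the weak-irreducibility hypotheses force $\lfloor\frac{\delta}{\beta}\rfloor = 0$ (when $\alpha<\beta$) or $\lfloor\frac{\gamma}{\alpha}\rfloor = 0$ (when $\alpha>\beta$), collapsing the lbc formula to $cyc=\gamma+\delta$, which together with $A=\alpha\cdot\delta+\beta\cdot\gamma$ is solved as a linear system in $(\gamma,\delta)$ --- the paper inverts the $2\times 2$ matrix where you eliminate by substitution, an immaterial difference. If anything, your explicit unpacking of \emph{weakly $\delta$-irreducible} (Definition \ref{reduction} f) into the strict inequality $\delta<\beta$ is spelled out a bit more carefully than in the paper, which states the vanishing of the floor terms without detailed justification.
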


\begin{proof} 
Since we have $\lfloor \frac{\gamma}{\alpha} \rfloor = 0$ or $\lfloor \frac{\delta}{\beta} \rfloor = 0$ 
by Theorem \ref{minimal cycles} we obtain $cyc = \gamma+\delta$. 
 With the formula for $A$ we have the equation 
 $ \begin{pmatrix} cyc \\ A \end{pmatrix}= 
 \begin{pmatrix} 1 & 1 \\ \beta & \alpha \end{pmatrix}^{} \begin{pmatrix} \gamma \\ \delta \end{pmatrix}
 $ 
to compute the solution \\
 $\begin{pmatrix} \gamma \\ \delta \end{pmatrix} = 
 \begin{pmatrix} 1 & 1 \\ \beta & \alpha \end{pmatrix}^{-1} \begin{pmatrix} cyc \\ A \end{pmatrix} = 
 \frac{1}{\alpha-\beta}\begin{pmatrix} \alpha & -1 \\ -\beta & 1 \end{pmatrix} \begin{pmatrix} cyc \\ A \end{pmatrix}=
 \frac{1}{\alpha-\beta}\begin{pmatrix} \alpha\cdot cyc-A \\ -\beta \cdot cyc + A \end{pmatrix}$.
\end{proof}


\section{Cycloid Isomorphisms and Reduction Equivalence}\label{sec-morphism}

In this section the synthesis problem is solved for
cycloid nets without initial marking. In addition, also
the first two parameters $\alpha$ and $\beta$ are subjects to reductions.
A net isomorphism between two cycloids (Definition \ref{def-net}) does not necessarily preserve forward or backward places.
To preserve these properties we define the notion of a 
cycloid isomorphism.

\begin{definition} \label{def-cyc-iso} 
Given two cycloids 
$ \mathcal{C}_i =\mathcal{C}_i( \alpha_i, \beta_i, \gamma_i, \delta_i ), i\in\{ 1,2 \}$
and their cycloid nets  $\N{i} =$ \\
$(\GSvw_i, \GSrw_i, T_i, F_i)$ (Definition \ref{cycloid}), 
 a mapping $f: X_1 \to X_2$ 
 ($X_i = S_i \cup T_i,
 S_i = S^{\rightarrow}_i \cup S^{\leftarrow}_i ) $ is a \emph{cycloid morphism}
 if 
 $f(F_1 \cap (S^{\rightarrow}_1 \cp T_1) )\subseteq (F_2 \cap (S^{\rightarrow}_2 \cp T_2)) \cup id$,
 $f(F_1 \cap (S^{\leftarrow}_1 \cp T_1) )\subseteq (F_2 \cap (S^{\leftarrow}_2 \cp T_2)) \cup id$,
 $f(F_1 \cap (T_1 \cp S^{\rightarrow}_1) )\subseteq (F_2 \cap (T_2 \cp S^{\rightarrow}_2)) \cup id$ and 
 $f(F_1 \cap (T_1 \cp S^{\rightarrow}_1) )\subseteq (F_2 \cap (T_2 \cp S^{\rightarrow}_2)) \cup id$. 
 $f$ is a \emph{cycloid isomorphism} if 
 $f$ is a bijection and the inverse $f^{-1}$ is a also a cycloid morphism.
 Then $ \mathcal{C}_1$ and $ \mathcal{C}_2$ are called \emph{cycloid isomorphic},
 denoted $\mathcal{C}_1 \simeq_{\text{cyc}} \mathcal{C}_2$.
	If $ \mathcal{C}_1$ and $ \mathcal{C}_2$ are cycloid systems with initial markings $M_0^1$ and $M_0^2$, respectively, 
	then the definition of a cycloid isomorphism is extended by
		$f(S^{\rightarrow}_1 \cap M^1_0 ) = S^{\rightarrow}_2 \cap M^2_0 $ and 
		$f(S^{\leftarrow}_1 \cap M^1_0 ) = S^{\leftarrow}_2 \cap M^2_0 $.
\end{definition}

The next Lemma \ref{pi(post)} prepares the proof that the shear mappings (Theorem \ref{shear}) are cycloid isomorphisms.


\begin{lemma} \label{pi(post)}     
For each transition $t$ of the Petri space
$\rho(\postnbfw{t}) = \postnbfw{\rho(t)}$ and $\rho(\postnbbw{t}) = \postnbbw{\rho(t)}$ hold.
\end{lemma}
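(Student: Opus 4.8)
The plan is to reduce the whole statement to one structural observation: in the Petri space the forward and backward \emph{output} places of a transition carry exactly the coordinates of that transition. By Definition \ref{petrispace} we have $\postnbfw{t_{\xi,\eta}} = \gsvw{\xi,\eta}$ and $\postnbbw{t_{\xi,\eta}} = \gsrw{\xi,\eta}$, both located at $(\xi,\eta)$. Since the map $\rho$ of Theorem \ref{xy-to-FP} is defined purely through the coordinate vector $(u,v)$ — the integers $m = \lfloor\frac{1}{A}(u\delta - v\gamma)\rfloor$ and $n = \lfloor\frac{1}{A}(v\alpha + u\beta)\rfloor$ depend on $(u,v)$ alone — feeding $\rho$ the transition $t_{\xi,\eta}$ and feeding it its output place $\postnbfw{t_{\xi,\eta}}$ triggers the identical shift $\mathbf{A}\begin{pmatrix} m \\ n \end{pmatrix}$ and hence the identical image coordinates.

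First I would fix $t = t_{\xi,\eta}$ and set $\begin{pmatrix}\xi'\\\eta'\end{pmatrix} := \begin{pmatrix}\xi\\\eta\end{pmatrix} - \mathbf{A}\begin{pmatrix} m \\ n \end{pmatrix}$ with $m,n$ as above, so that $\rho(t) = t_{\xi',\eta'}$ by Theorem \ref{xy-to-FP}. The second ingredient is that $\rho$ respects the place/transition type: since the equivalence $\equiv$ is type-preserving by Definition \ref{cycloid} — the unique representative of a forward place inside the fundamental parallelogram is again a forward place, and likewise for backward places — the very same coordinate computation yields $\rho(\gsvw{\xi,\eta}) = \gsvw{\xi',\eta'}$ and $\rho(\gsrw{\xi,\eta}) = \gsrw{\xi',\eta'}$. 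Chaining these gives $\rho(\postnbfw{t}) = \rho(\gsvw{\xi,\eta}) = \gsvw{\xi',\eta'}$, while $\postnbfw{\rho(t)} = \postnbfw{t_{\xi',\eta'}} = \gsvw{\xi',\eta'}$, so the two sides coincide; the backward identity follows verbatim with $\postnbbw{}$ and $\gsrw{}$ in place of $\postnbfw{}$ and $\gsvw{}$.

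The argument is short, and its only genuine content is the coordinate coincidence noted above. I expect the main (though mild) obstacle to be making the type-preservation step airtight: one must check that $\rho$, which Theorem \ref{xy-to-FP} phrases only for ``elements of the Petri space'', sends a forward place to the forward place of the fundamental parallelogram rather than to a transition or a backward place. This is exactly where the type-preservation clause of Definition \ref{cycloid} enters, guaranteeing that the fundamental parallelogram holds precisely one forward (resp.\ backward) place per class, seated at the transition coordinates, so that $\rho$ commutes with the type label. It is worth noting that the analogous statement for \emph{input} places is not immediate, since $\prenbfw{t_{\xi,\eta}} = \gsvw{\xi-1,\eta}$ carries shifted coordinates; the lemma holds precisely because output places do not shift.
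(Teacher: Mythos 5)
Your proof establishes only a literal reading of the notation under which the lemma is vacuous, and that reading is not the one the paper intends, proves, or uses. In the paragraph preceding Theorem \ref{parameter} the notation is extended from places to transitions (the \emph{forward output transition} of $t_{\xi,\eta}$ is $(\gsvw{\xi,\eta})^{\ndot} = t_{\xi+1,\eta}$), and the paper's proof of the lemma says explicitly that it proves $\rho(t_{u+1,v}) = \postnbfw{\rho(t_{u,v})}$, ``since $\postnbfw{t}=t_{u+1,v}$ is in the Petri space and $\postnbfw{\rho(t)}$ lies in the cycloid net.'' So the coordinates \emph{do} shift: the entire content of the lemma is that $\rho$ commutes with the unit step $(\xi,\eta)\mapsto(\xi+1,\eta)$ (resp.\ $(\xi,\eta)\mapsto(\xi,\eta+1)$) even when that step crosses from one tile of the Petri space into a neighbouring one, where the parameters $(m,n)$ of Theorem \ref{xy-to-FP} change. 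The paper's proof is accordingly a case analysis: if $t_{u,v}$ and $t_{u+1,v}$ lie in the same tile, $(m,n)$ is unchanged and commutation is immediate (this is, in effect, all your argument covers); if $t_{u+1,v}$ lies in the tile to the right, $(m,n)$ changes to $(m+1,n)$ per Table \ref{tile}, and one verifies $\rho(u+1,v) = \rho(u,v) + (1,0) - (\alpha,-\beta)$, which is exactly the wrap-around step of the forward successor back into the fundamental parallelogram; tiles \textbf{II} to \textbf{V} are similar and \textbf{VI} to \textbf{VIII} unreachable. None of this boundary analysis appears in your proposal.

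The giveaway is your closing remark that the analogous statement for \emph{input} places ``is not immediate'' because $\prenbfw{t_{\xi,\eta}} = \gsvw{\xi-1,\eta}$ carries shifted coordinates, and that the lemma ``holds precisely because output places do not shift.'' That shifted-coordinate difficulty is precisely the lemma's content, not something it avoids. Indeed, under your reading the lemma could not do the job for which it is cited in the proof of Theorem \ref{shearing}: to conclude that $\rho_{\mathcal{C}'}$ is a cycloid morphism one must also preserve the arcs in $S^{\rightarrow}\cp T$ and $S^{\leftarrow}\cp T$, i.e.\ the arcs $(\gsvw{\xi,\eta}, t_{\xi+1,\eta})$ and $(\gsrw{\xi,\eta}, t_{\xi,\eta+1})$, and preserving these is equivalent to the successor-transition commutation with its tile-crossing case --- the statement the paper actually proves and your argument omits. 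Your coordinate observation for output places is correct as far as it goes, but it proves a strictly weaker (essentially trivial) statement; to repair the proof you must handle the step across the boundary of the fundamental parallelogram, e.g.\ by the $(m,n)$-offset computations of Table \ref{tile}, or by showing directly that $\rho(\vec{u}+(1,0)) = \rho(\rho(\vec{u})+(1,0))$ and likewise for the $\eta$-step.
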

\begin{proof} 
By the equivalence relation $\equiv $ (Definition \ref{cycloid}) the Petri space is divided into tiles of the fundamental parallelogram's shape. A cycloid is defined by folding all these tiles onto the fundamental parallelogram with origin $t_{0,0}$\footnote{In Figure \ref{P-space+FD} a) neighbouring tiles of the fundamental parallelogram are represented by broken lines.}.
By Equation (\ref{pi}) from Theorem \ref{xy-to-FP} we obtain 
\begin{equation}\label{10+}
 \rho(\vec{u}) = 
\vec{u} - \begin{pmatrix} \alpha & \gamma \\ -\beta & \delta \end{pmatrix}\begin{pmatrix} m \\ n \end{pmatrix} = 
\vec{u}-m \cdot\begin{pmatrix} \alpha \\ -\beta  \end{pmatrix} - n \cdot \begin{pmatrix} \gamma\\ \delta \end{pmatrix}. 
\end{equation}

First, we calculate the values of m and n when a transition $t_{u,v}$ of the Petri space with coordinates $\vec{u} =(u,v)$ lies in the neighbouring tiles of the fundamental parallelogram. Starting with the neighbour on the right, these are designated clockwise with the Roman numerals \textbf{I}  to \textbf{VIII} (see Figure \ref{pspace} b). By Theorem \ref{xy-to-FP} for the corner $P = \vec{u}=(u,v) = (\alpha,-\beta)$, which is located outside the fundamental parallelogram,  we obtain
$m = \lfloor \frac{1}{A}(u\cdot\delta - v\cdot \gamma)\rfloor =\lfloor \frac{1}{A}(\alpha\cdot\delta +\beta\cdot \gamma)\rfloor = \lfloor \frac{A}{A} \rfloor =1 $
and 
$n = \lfloor \frac{1}{A}(v\cdot\alpha + u\cdot\beta)\rfloor
= \lfloor \frac{1}{A}(-\beta\cdot\alpha + \alpha\cdot\beta)\rfloor = 0$, hence $ \rho(\vec{u}) = (0,0)$.
By the symmetry of the fundamental parallelograms $(m,n) = (1,0)$ are the parameters for all transitions in neighbour tile \textbf{I}. Table \ref{tile} shows in its second line with respect to the neighbour   \textbf{I} the values $\Delta ( \alpha, \beta, \gamma, \delta ) = (-\alpha,\beta)$ and $(m,n) = (1,0)$. The last column gives an example for the cycloid  $ \mathcal{C}( 2,3,3,3 ) $ from Figure \ref{pspace} b) for the transition $(4,-4)$  by $\rho(4,-4) = (4,-4) - \Delta ( 2,3,3,3 ) = (4,-4) -  ( 2,-3 ) = (2,-1)$. With respect to the neighbouring tile \textbf{II} we repeat the calculation with respect to the corner  $R =(\alpha+\gamma,\delta,-\beta)$  instead of $P$ and obtain $(m,n) = (1,1)$. Similar computations for all neighbours are given in Table \ref{tile}, whereby the results for \textbf{VI}, \textbf{VII} and \textbf{VIII}  are not used in the following, but are given for the sake of completeness.
Note that for the second and third column the entries of the lines \textbf{V} to \textbf{VIII} are the negative   of the entries of \textbf{I} to \textbf{IV}.

Instead of 
the  equation $\rho(\postnbfw{t}) = \postnbfw{\rho(t)}$ of the Lemma we prove 
$\rho(t_{u+1,v}) = \postnbfw{\rho(t_{u,v}) }$, since $\postnbfw{t}=t_{u+1,v}$ is in the Petri space and $\postnbfw{\rho(t)}$  lies in the cycloid net. We distinguish the two cases, where $t_{u,v}$ and $t_{u+1,v}$
are located in the same tile or not.

\textbf{Case 1:} 
The transitions $t_{u,v}$ and $t_{u+1,v} $ are located both within the same tile.
Then they relate to the same values of $m=0$ and $n=0$ in Equation (\ref{10+}). By representing transitions like
$t_{a,b}$ by their coordinates $(a,b)$ we obtain from Equation (\ref{10+}): \\
$\rho\begin{pmatrix} u+1 \\ v \end{pmatrix} = 
\begin{pmatrix}u +1 \\ v \end{pmatrix} - \mathbf{A}\begin{pmatrix} 0 \\ 0 \end{pmatrix} = 
\begin{pmatrix} u \\ v \end{pmatrix} + \begin{pmatrix} 1 \\ 0 \end{pmatrix} 
= 
\rho\begin{pmatrix} u\\ v \end{pmatrix}  + \begin{pmatrix} 1 \\ 0 \end{pmatrix}
= \postnbfw{\rho\begin{pmatrix} u\\ v \end{pmatrix} } $.\\
The analogous derivation  holds with respect to the backward output transition $\rho\begin{pmatrix} u \\ v +1\end{pmatrix}$.\\

\textbf{Case 2:}  $t_{u+1,v} $ is located  within the tile to the right of that of $t_{u,v}$.\\
Now we assume that:
$\rho\begin{pmatrix} u \\ v \end{pmatrix} = \begin{pmatrix}u  \\ v \end{pmatrix} - \mathbf{A}\begin{pmatrix} m \\ n \end{pmatrix}$. \\
Since by Table \ref{tile} the  change of parameters for the tile to the right hand side is $(m,n) = (1,0)$ we obtain:
 $\rho\begin{pmatrix} u+1 \\ v \end{pmatrix} = 
\begin{pmatrix}u +1 \\ v \end{pmatrix} - \mathbf{A}\begin{pmatrix} m+1 \\ n \end{pmatrix} = $
$\begin{pmatrix} u \\ v \end{pmatrix} + \begin{pmatrix} 1 \\ 0 \end{pmatrix} -(m+1) \cdot \begin{pmatrix} \alpha \\ -\beta  \end{pmatrix} - n \begin{pmatrix} \gamma\\ \delta \end{pmatrix}= \\$
$\begin{pmatrix} u \\ v \end{pmatrix}  -m \cdot \begin{pmatrix} \alpha \\ -\beta  \end{pmatrix} - n \begin{pmatrix} \gamma\\ \delta \end{pmatrix}  + \begin{pmatrix} 1 \\ 0 \end{pmatrix} -  \begin{pmatrix} \alpha \\ -\beta \end{pmatrix}= $
%
$ \begin{pmatrix}u  \\ v \end{pmatrix} - \mathbf{A}\begin{pmatrix} m \\ n \end{pmatrix} + \begin{pmatrix} 1 \\ 0 \end{pmatrix} -  \begin{pmatrix} \alpha \\ -\beta \end{pmatrix}= \\$
$\rho\begin{pmatrix} u\\ v \end{pmatrix}  + \begin{pmatrix} 1 \\ 0 \end{pmatrix} -  \begin{pmatrix} \alpha \\ -\beta \end{pmatrix}= 
 \postnbfw{\rho\begin{pmatrix} u\\ v \end{pmatrix} } $. 
 The penultimate equation applies due to the assumption. The last equation describes the step from the fundamental parallelogram to the right tile and back into the fundamental parallelogram.
The analogous derivation  holds with respect to the backward output transition $\rho\begin{pmatrix} u \\ v +1\end{pmatrix}
$.
This completes the proof of
$\rho(t_{u+1,v}) = \postnbfw{\rho(t_{u,v}) } $.
The analogous derivation  holds with respect to the backward output transition $\rho\begin{pmatrix} u \\ v +1\end{pmatrix}$.\\
The cases with respect to tiles  \textbf{II} to \textbf{V} are similar, whereas those with respect to \textbf{VI} to \textbf{VIII} do not apply, since these  tiles cannot be reached from the fundamental parallelogram.
\end{proof}


\begin{table}[htbb]
 \begin{center}
\caption{Values of $(\xi+m\cdot 2 + n\cdot 3,\eta-m\cdot 3+n\cdot 3$)}
\label{tile}
\begin{tabular}{|c||c|c|c|l}
\hline
 tile&$\Delta ( \alpha, \beta, \gamma, \delta ) $  &   $(m,n)$  & $(2,-1)+\Delta(2,3,3,3)$  \\  \hline \hline 
 \textbf{I}&$(\alpha,-\beta) $ & $(1,0)$ &$(4,-4)$ \\
   \hline
  \textbf{II}&$(\alpha+\gamma,-\beta+\delta) $ & $(1,1)$ &$(7,-1)$ \\
   \hline
  \textbf{III}&$(\gamma,\delta) $ & $(0,1)$ &$(5,2)$ \\
   \hline
 \textbf{IV}&$(-\alpha+\gamma,\beta+\delta) $ & $(-1,1)$ &$(3,5)$ \\
   \hline
 \textbf{V}&$(-\alpha,\beta) $ & $(-1,0)$ &$(0,2)$ \\
   \hline
 \textbf{VI}&$(-\alpha-\gamma,\beta-\delta) $ & $(-1,-1)$ &$(-3,-1)$ \\
   \hline
 \textbf{VII}&$(-\gamma,-\delta) $ & $(0,-1)$ &$(-1,-4)$ \\
   \hline
 \textbf{VIII}&$(\alpha-\gamma,-\beta-\delta) $ & $(1,-1)$ &$(1,-7)$ \\
   \hline
   \hline
\end{tabular}
 \end{center}
       \end{table}
       

\begin{figure}[t]
	\begin{center}
		\includegraphics [scale = 0.4]{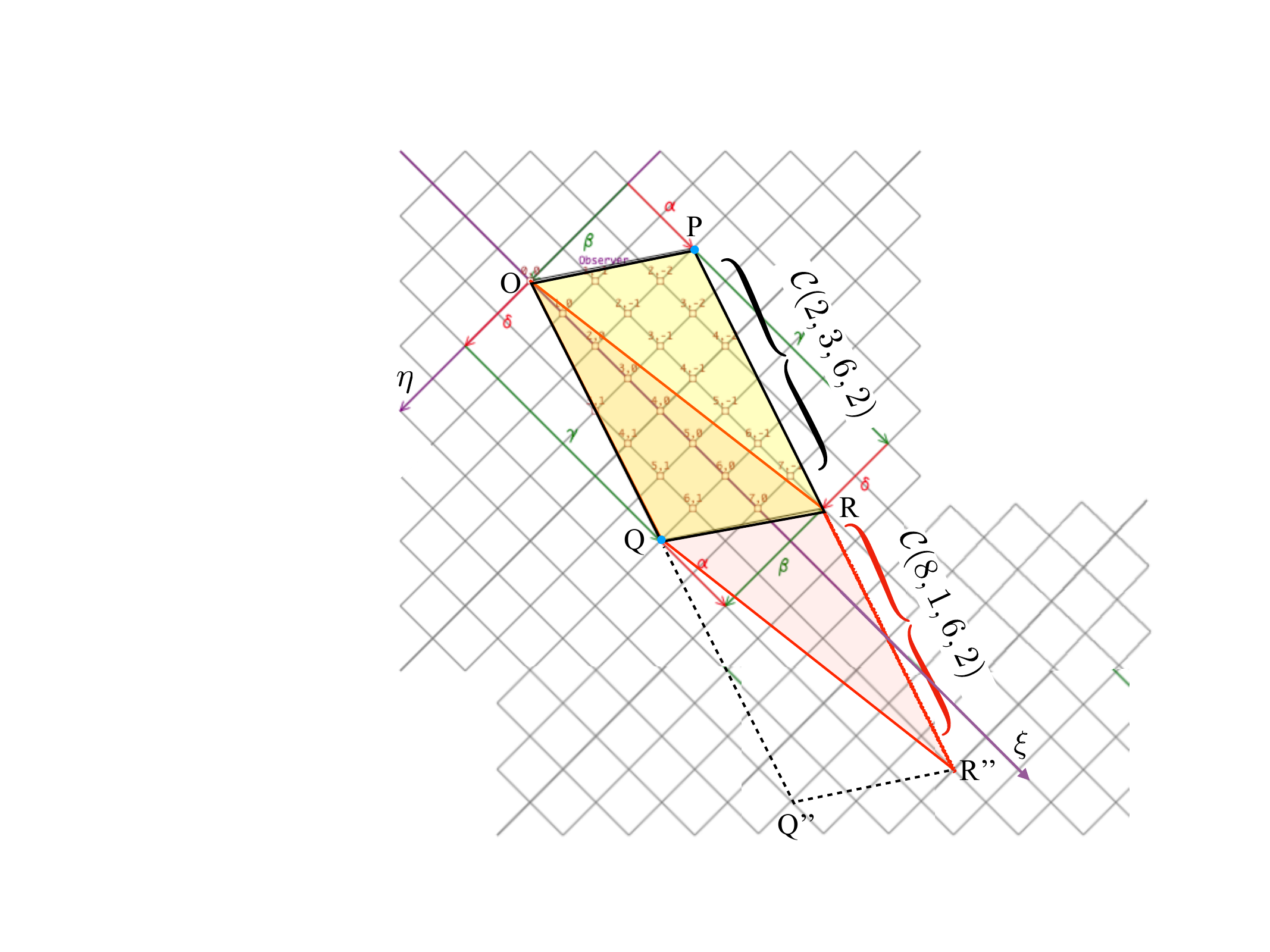}
		\caption{Shearing from $ \mathcal{C}( 2,3,6,2) $ to $ \mathcal{C}( 8,1,6,2) $. }
		\label{OQ-shearing}
	\end{center}
\end{figure}

\begin{theorem} \label{shearing}
If  $ \mathcal{C}'$ is the result of applying one of the four reduction rules from Definition \ref{reduction rules} to  $ \mathcal{C}$ then $ \mathcal{C}'$ and $ \mathcal{C}$ are cycloid isomorphic (Definition \ref{def-cyc-iso}).
\end{theorem}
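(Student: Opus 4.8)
The plan is to reduce the statement to Theorem \ref{shear} together with Lemma \ref{pi(post)}. First I would observe that the four reduction rules are precisely the transformations c), d), a), b) of Theorem \ref{shear} (for $R_\alpha, R_\beta, R_\gamma, R_\delta$ respectively), and that the symmetric transformation e) is deliberately not among them (it swaps forward and backward places and is therefore no cycloid isomorphism). Re-reading the proof of Theorem \ref{shear}, each of a)--d) is realized by the identity map on the Petri space, under which the matrices $\mathbf{A}$ and $\mathbf{A}'$ generate the same lattice; hence the equivalence relations $\equiv$ of $\mathcal{C}$ and of $\mathcal{C}'$ coincide. Since both cycloids are type preserving, the forward and backward place classes of $\mathcal{C}$ and $\mathcal{C}'$ also coincide. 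Thus, as abstract quotients the two cycloid nets carry identical node sets, flow and forward/backward labelling, and the only genuine content left is to match their two concrete fundamental-parallelogram representations.

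Next I would exhibit the isomorphism concretely as $f := \rho_{\mathcal{C}'}$ restricted to the fundamental parallelogram of $\mathcal{C}$: a representative $\vec{x}$ in the parallelogram of $\mathcal{C}$ is sent to the unique $\equiv$-equivalent representative $\rho_{\mathcal{C}'}(\vec{x})$ in the parallelogram of $\mathcal{C}'$. Because $\equiv_{\mathcal{C}} = \equiv_{\mathcal{C}'}$, the map $f$ is a well-defined bijection respecting the forward/backward type. The core of the argument is then the incidence check. In the concrete net of $\mathcal{C}$ the forward output edge of a transition $t$ is the pair $(t, \rho_{\mathcal{C}}(\postnbfw{t}))$, since $\postnbfw{t}$ may lie outside the parallelogram of $\mathcal{C}$ and must be folded back. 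Its $f$-image is $\rho_{\mathcal{C}'}(\rho_{\mathcal{C}}(\postnbfw{t}))$, and using that $\rho_{\mathcal{C}'}$ is constant on $\equiv_{\mathcal{C}'} = \equiv_{\mathcal{C}}$-classes this equals $\rho_{\mathcal{C}'}(\postnbfw{t})$; by Lemma \ref{pi(post)} this is $\postnbfw{\rho_{\mathcal{C}'}(t)} = \postnbfw{f(t)}$, i.e. the forward output edge of $f(t)$ in the net of $\mathcal{C}'$. The backward output edges are treated identically with the second identity of Lemma \ref{pi(post)}. The input edges then follow directly: $f$ preserves the forward/backward type of places, so an edge emanating from a forward (resp. backward) place is mapped to an edge from a forward (resp. backward) place, and combined with the net-isomorphism property already established in Theorem \ref{shear} this gives the full cycloid-morphism condition.

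Finally I would note that $f^{-1} = \rho_{\mathcal{C}}$ restricted to the parallelogram of $\mathcal{C}'$ is a cycloid morphism by the same computation with the roles of $\mathcal{C}$ and $\mathcal{C}'$ exchanged, so $f$ is a cycloid isomorphism and $\mathcal{C} \simeq_{\text{cyc}} \mathcal{C}'$. I expect the main obstacle to be exactly the incidence step: one must describe the flow of the two nets through the correct foldings $\rho_{\mathcal{C}}$ and $\rho_{\mathcal{C}'}$ and verify that, after invoking $\equiv_{\mathcal{C}} = \equiv_{\mathcal{C}'}$, the whole computation collapses to Lemma \ref{pi(post)}; this is the part the paper says was anticipated in the proof of that lemma. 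A minor additional check is that the side condition of each rule (e.g.\ $\gamma > \alpha$ for $R_\gamma$) keeps the transformed parameters in $\Natp$, so that $\mathcal{C}'$ is indeed a cycloid.
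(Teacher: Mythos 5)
Your proposal is correct and takes essentially the same route as the paper: the paper likewise leans on Theorem \ref{shear} (the reduction rules are exactly the shear cases, with identical equivalence relations) and realizes the cycloid isomorphism through the folding $\rho_{\mathcal{C}'}$, whose compatibility with forward and backward output places is precisely Lemma \ref{pi(post)}. The only cosmetic difference is that the paper writes the map as $\psi_\lambda = \varphi_\lambda \circ \rho_{\mathcal{C}'}$ with an interposed translation $\varphi_\lambda$ by an equivalence-preserving lattice vector (e.g.\ $(\alpha,-\beta)$ for $R_\delta$), which, as your argument implicitly shows, is absorbed by $\rho_{\mathcal{C}'}$, so your $f = \rho_{\mathcal{C}'}$ coincides with the paper's $\psi_\lambda$.
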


\begin{proof} 
	In Theorem \ref{shear} the property of isomorphism has been proven.
	To prove the property of cycloid isomorphism for the cases $R_\alpha$,  $R_\beta$,  $R_\gamma$ and  $R_\delta$  in Definition \ref{reduction rules} we define cycloid isomorphisms $\psi_\alpha$, $\psi_\beta$, $\psi_\gamma$ and $\psi_\delta$, respectively.
	We start with case $R_\delta$  and introduce on the Petri space the linear isomorphism $\varphi_\delta(\xi,\eta) = (\xi+\alpha,\eta-\beta) $. It maps the corners 
	$O =(0,0)$, $Q =(\gamma,\delta)$, $P =(\alpha,-\beta)$ and $R =(\alpha+\gamma,\delta,-\beta)$ 
	of the fundamental parallelogram of $\mathcal{C}(\alpha,\beta,\gamma,\delta) $ to
	$O'' =(\alpha,-\beta)$, $Q''=(\gamma+\alpha,\delta-\beta)$, $P'' =(2\cdot\alpha,-2\cdot\beta)$ and $R''=(2\cdot\alpha+\gamma,\delta-2\cdot\beta)$, respectively. 
For the cycloid  $ \mathcal{C}( 2,3,2,8) $	in Figure \ref{OP-shearing} the images of the vertices are   $O''= P$, $Q''= R$, $P''$ and $R''=R'$ and broken lines for additional edges. 
In general, $\varphi_\delta(0,0)=(\alpha,-\beta)$ implies that $\varphi_\delta(\xi,\eta)$ is equivalent to $(\xi,\eta)$ in this case, and consequently for all elements $(\xi,\eta)$  of  $ \mathcal{C} $'s fundamental parallelogram: $(\xi,\eta) \equiv \varphi_\delta(\xi,\eta)$.
However, in the definition of a fundamental parallelogram the origin $O$ has to be the origin $t_{0,0}$ of the Petri space. 
Therefore we have to apply the map $\rho_{ \mathcal{C}'}$, which was proven to preserve forward and backward places and transitions in Lemma \ref{pi(post)} and is therefore a cycloid isomorphism. Finally for this case, we obtain the cycloid isomorphism 
$\psi_\delta(\xi,\eta) = \rho_{ \mathcal{C}'}(\varphi_\delta(\xi,\eta))$ or $\psi_\delta= \varphi_\delta\circ\rho_{ \mathcal{C}'}$.\\
For case $R_\gamma$ of the theorem the proof is similar by using	$\varphi_\gamma(\xi,\eta) = (\xi-\alpha,\eta+\beta) $ and 
$\psi_\gamma= \varphi_\gamma\circ\rho_{ \mathcal{C}'}$. The equivalence preserving vector is $(-\alpha,\beta)$ in place of $(\alpha,-\beta)$.\\
Also case $R_\alpha$ is proven in a similar way, now by using $\varphi_\alpha(\xi,\eta) = (\xi-\gamma,\eta-\delta) $
and 
$\psi_\alpha= \varphi_\alpha\circ\rho_{ \mathcal{C}'}$. The equivalence preserving vector is $(\gamma,\delta)$.\\
Case $R_\beta$ is dual to case $R_\alpha$, such as case $R_\gamma$ was dual to case $R_\delta$: $\varphi_\beta(\xi,\eta) = (\xi+\gamma,\eta+\delta) $
and $\psi_\beta= \varphi_\beta\circ\rho_{ \mathcal{C}'}$.
\end{proof} 

The image of the fundamental parallelogram of  $ \mathcal{C}( 2,3,6,2 ) $ with respect to $\varphi_\beta$ is given in Figure \ref{OQ-shearing} by the corners $Q$, $Q''$, $R$ and $R''$ and broken lines for additional edges.
While the invariant edge of the fundamental parallelogram is the edge between $O$ and $Q$ it may be called
$O$-$Q$-shearing to distinguish it from the shearing of Figure~\ref{OP-shearing}, which is a $O$-$P$-shearing by the use of this terminology.
To give an example for a transformation which does not preserve isomorphism, consider the cycloids $ \mathcal{C}( \alpha, \beta, \gamma, \delta ) $ and $ \mathcal{C}( \gamma, \delta,\alpha, \beta) $.
Obviously they have the same area, but are not isomorphic in general.
For instance the cycloids $ \mathcal{C}( 3,1,1,1 ) $ and $ \mathcal{C}( 1,1,3,1) $ have the same area $A = 4$, but different minimal cycle length $2$ and $4$, respectively.

\begin{figure}[t]
 \begin{center}
        \includegraphics [scale = 0.45]{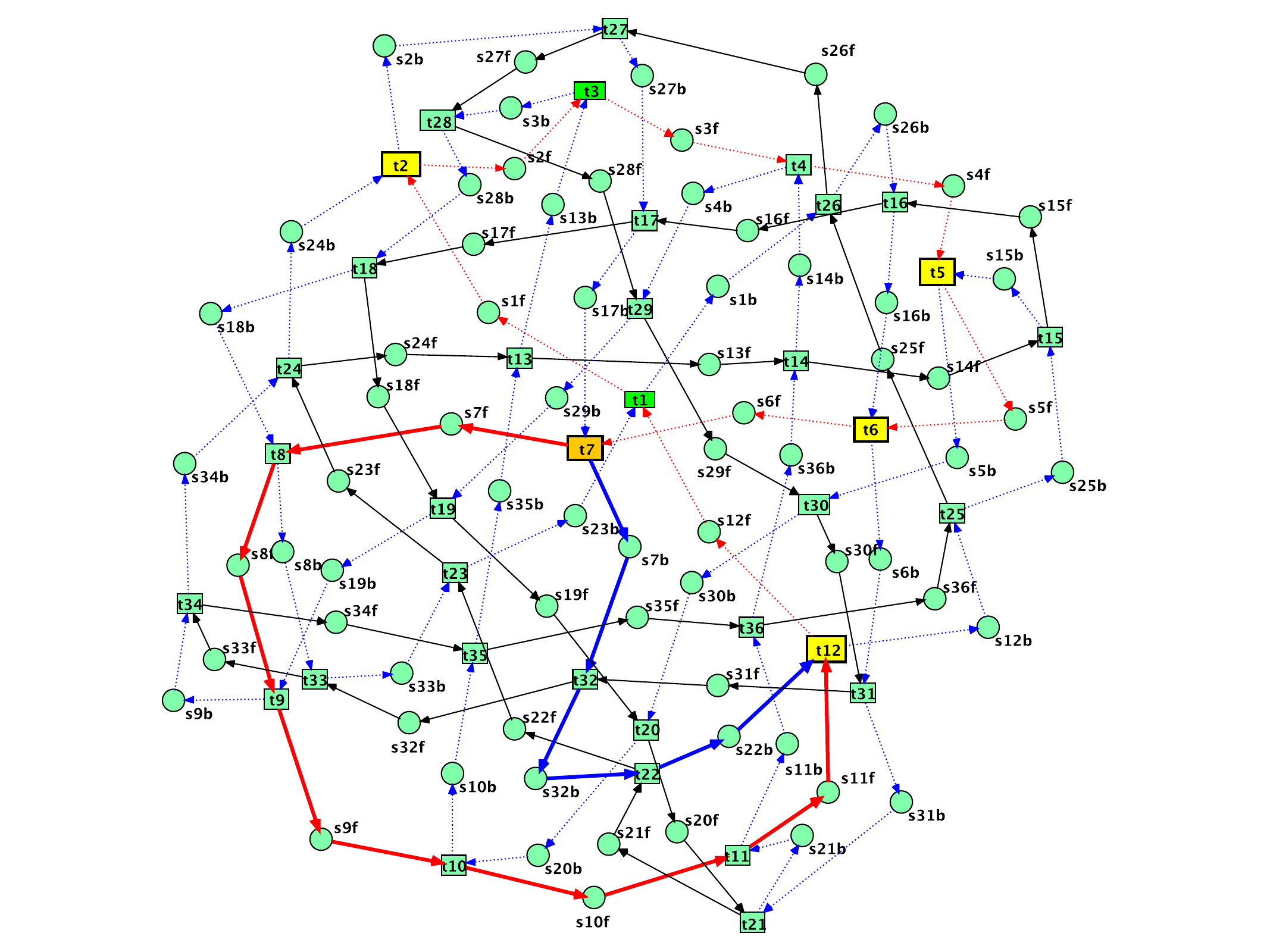}
        \caption{A cycloid net of the cycloid  $ \mathcal{C}( 5,3,2,6 ) $.  }
        \label{c-5-3-2-6-red2}
      \end{center}
\end{figure}

To prepare an algorithm for computing the parameters $ \alpha, \beta, \gamma, \delta $ of a cycloid net as in Figure~\ref{c-5-3-2-6-red2}, we give a formula for the interval of the $\xi$-axis belonging to the fundamental parallelogram. 

\begin{lemma} \label{xi-max}
For a cycloid $ \mathcal{C}( \alpha, \beta, \gamma, \delta )$ the interval of the $\xi$-axis 
within the fundamental parallelogram extends from the origin $t_{0,0}$ up to $t_{\xi_{max} ,0}$ where 
$\xi_{max} = \lceil \frac{A}{max(\beta,\delta)} \rceil - 1$.
\end{lemma}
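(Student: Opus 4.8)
The plan is to work directly in the Petri space with the geometry of the fundamental parallelogram, whose corners are $O = (0,0)$, $P = (\alpha, -\beta)$, $Q = (\gamma, \delta)$ and $R = (\alpha+\gamma, \delta-\beta)$. The positive $\xi$-axis, i.e. the ray $\eta = 0$, $\xi \ge 0$, leaves $O$ in the direction $(1,0)$, which lies angularly between the two edge directions $(\alpha, -\beta)$ of edge $OP$ (with negative $\eta$) and $(\gamma, \delta)$ of edge $OQ$ (with positive $\eta$); hence this ray immediately enters the interior of the parallelogram. The first step is therefore to determine through which of the two far edges the ray exits, and to compute the exit abscissa.

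The second step is a short case distinction on the sign of $\delta - \beta$. Since $P$ has $\eta = -\beta < 0$ and $Q$ has $\eta = \delta > 0$, while $R$ has $\eta = \delta - \beta$, the ray crosses edge $PR$ exactly when $\delta > \beta$ and edge $QR$ exactly when $\beta > \delta$. Parametrising $PR$ as $P + t\,(\gamma, \delta)$ and solving $-\beta + t\delta = 0$ gives $t = \beta/\delta$ and exit abscissa $\alpha + \frac{\beta\gamma}{\delta} = \frac{A}{\delta}$; symmetrically, parametrising $QR$ as $Q + t\,(\alpha, -\beta)$ and solving $\delta - t\beta = 0$ gives exit abscissa $\gamma + \frac{\alpha\delta}{\beta} = \frac{A}{\beta}$. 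In both cases the crossing point is $(\frac{A}{\max(\beta,\delta)}, 0)$. The boundary case $\beta = \delta$ is then covered automatically: the ray exits exactly at the corner $R = (\alpha + \gamma, 0)$, and indeed $\frac{A}{\beta} = \alpha + \gamma$ there.

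The third step converts this geometric endpoint into the integer bound. By Definition \ref{cycloid} the fundamental parallelogram contains its two edges through the origin but excludes the two edges through $R$ together with the corners $P, Q, R$; consequently the $\xi$-axis meets it exactly in the half-open segment $0 \le \xi < \frac{A}{\max(\beta,\delta)}$, with $O$ included and the exit point excluded. Every integer point $(\xi, 0)$ with $0 \le \xi < \frac{A}{\max(\beta,\delta)}$ is a transition of the cycloid (the origin being a corner, the remaining points lying in the interior), so $\xi_{max}$ is the largest integer strictly below $\frac{A}{\max(\beta,\delta)}$. Finally I would invoke the elementary identity that the greatest integer strictly less than a real $x$ equals $\lceil x \rceil - 1$ (whether or not $x$ is an integer), which yields $\xi_{max} = \lceil \frac{A}{\max(\beta,\delta)} \rceil - 1$.

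The only genuine subtlety, and the point I would be most careful about, is the strict-versus-weak inclusion at the exit edge. It is essential that the $R$-edges are excluded so that the exit point is never itself counted; this is precisely what forces the rounding to be $\lceil x \rceil - 1$ rather than $\lfloor x \rfloor$ in the integral case $\max(\beta,\delta)\mid A$ (for example $\beta=\delta$, where the exit coincides with the excluded corner $R$). Everything else is routine linear geometry.
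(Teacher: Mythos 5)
Your proof is correct, but it does not follow the route of the paper's own proof. The paper argues algebraically via Theorem \ref{xy-to-FP}: a transition $t_{\xi,0}$ lies in the fundamental parallelogram exactly when $\rho(\xi,0)=(\xi,0)$, i.e.\ when $\mathbf{A}$ applied to $(m,n)$ is the zero vector, where $m=\lfloor \xi\cdot\delta/A\rfloor$ and $n=\lfloor \xi\cdot\beta/A\rfloor$; positivity of the parameters forces both floors to vanish, which gives the simultaneous conditions $\xi < A/\delta$ and $\xi < A/\beta$, hence $\xi < A/max(\beta,\delta)$, and the same ceiling identity you invoke ($\lceil x\rceil-1$ is the largest integer strictly below $x$, integral $x$ included) finishes the proof. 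Interestingly, the paper records your argument almost verbatim as a remark immediately \emph{after} its proof (``a more geometric way to obtain this result''), intersecting the lines through $Q,R$ and through $P,R$ with the $\xi$-axis to obtain $A/\beta$ and $A/\delta$; your version is slightly sharper in that you determine, via the sign of $\delta-\beta$, \emph{which} edge is actually crossed and check the corner case $\beta=\delta$ (exit exactly at the excluded corner $R=(\alpha+\gamma,0)$), whereas the remark simply takes the conjunction of both bounds. What your geometric route buys is elementarity: it needs only the corner coordinates, convexity of the parallelogram, and the half-open boundary convention of Definition \ref{cycloid}, and your explicit care that the $R$-edges are excluded---forcing the rounding $\lceil x\rceil-1$ rather than $\lfloor x\rfloor$ precisely when $max(\beta,\delta)$ divides $A$---is exactly the subtlety on which the statement turns (and matters later, since Theorem \ref{bd-irreducible} c) uses $\xi_{max}=\alpha+\gamma-1$ in the case $\beta=\delta$). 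What the algebraic route buys is uniformity (no case distinction on which edge the ray exits through, no separate interior/boundary bookkeeping) and consistency with the cycloid-algebra machinery built around $\rho$, which the paper reuses in Lemma \ref{t00-back} and Theorem \ref{bd-irreducible}; that is presumably why it was chosen as the official proof.
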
 

\begin{proof} 
The condition for a transition $t_{\xi,0}$ to lie on the $\xi$-axis within the fundamental parallelogram is by Theorem \ref{xy-to-FP}:
$\begin{pmatrix} \xi \\ 0 \end{pmatrix} = \begin{pmatrix} \xi \\ 0 \end{pmatrix} - \mathbf{A}\begin{pmatrix} m \\ n \end{pmatrix} $ or $ \mathbf{A}\begin{pmatrix} m \\ n \end{pmatrix} = \begin{pmatrix} 0 \\ 0 \end{pmatrix} $ with 
$m = \lfloor \frac{1}{A}(u\delta - v \gamma)\rfloor =\lfloor \frac{1}{A}(\xi\cdot\delta - 0\cdot \gamma)\rfloor =
\lfloor \frac{\xi\cdot\delta}{A} \rfloor$ and 
$n = \lfloor \frac{1}{A}(v\alpha + u\beta)\rfloor=\lfloor \frac{1}{A}(0\cdot\alpha + \xi\cdot\beta)\rfloor =
\lfloor \frac{\xi\cdot\beta}{A} \rfloor$.
This gives $\mathbf{A}\begin{pmatrix} m \\ n \end{pmatrix} = 
\begin{pmatrix} \alpha\cdot \lfloor \frac{\xi\cdot\delta}{A}\rfloor + \gamma \cdot \lfloor \frac{\xi\cdot\beta}{A} \rfloor \\ -\beta\cdot \lfloor \frac{\xi\cdot\delta}{A}\rfloor +\delta \cdot \lfloor \frac{\xi\cdot\beta}{A} \rfloor \end{pmatrix}= 
\begin{pmatrix} 0 \\ 0 \end{pmatrix} $.
Since the cycloid parameters are positive, 
from the first row we obtain 
$ \lfloor \frac{\xi\cdot\delta}{A}\rfloor = 0$ and $\lfloor \frac{\xi\cdot\beta}{A} \rfloor =0$ which satisfies also the second row.
The overall condition is therefore $\xi < \frac{A}{\delta} \;\land \;\xi < \frac{A}{\beta} $ or
$\xi < \frac{A}{max(\beta,\delta)}$. 
The largest integer satisfying this condition is $\xi_{max} = \lceil \frac{A}{max(\beta,\delta)} -1 \rceil
= \lceil \frac{A}{max(\beta,\delta)} \rceil - 1$.
\end{proof} 

A more geometric way to obtain this result starts with the observation that $\xi_{max} $ is the largest integer value on the $\xi$-axis before the intersection of the $\xi$-axis with the lines containing $Q,R$ or $P,R$ of the fundamental parallelogram. 
The line containing $Q$ and $R$ is given by the equation $\eta = -\frac{\beta}{\alpha}(\xi-\gamma) + \delta $ (see \cite{Valk-2019}).
Setting $\eta=0$ gives $\xi = \frac{A}{\beta} $ .
The line containing $P$ and $R$ is given by the equation $\eta = \frac{\delta}{\gamma}(\xi-\alpha) - \beta $.
Again, setting $\eta=0$ gives $\xi = \frac{A}{\delta} $. 
Therefore we obtain the overall condition
$\xi < \frac{A}{\delta} \;\land \;\xi < \frac{A}{\beta} $ and proceed as in the proof before. 
For the cycloid $ \mathcal{C}( 4,2,2,3) $ of Figure~\ref{P-space+FD} a) we obtain $A=16$ and 
 $\xi_{max} = \lceil \frac{16}{max(2,3)} \rceil -1 = \lceil \frac{16}{3}  \rceil -1 = 5$. 
As can be seen in the figure, the $\xi$-axis overlaps with the fundamental parallelogram in the transitions from $t_{0,0}$ to $t_{5,0}$.
The values of $\xi_{max}$ for the cycloids of Figure~\ref{OQ-shearing} are $7$ and $10$.

\begin{lemma} \label{t00-back}
For a cycloid $ \mathcal{C}( \alpha, \beta, \gamma, \delta ) $ the backward output transition of $t_{0,0} $ is
$\postnbbw{t_{0,0}} = t_{\alpha,1-\beta}$ and the backward input transition of  $t_{0,0} $ is
$ \prenbbw{t_{0,0}} = t_{\gamma,\delta-1}$.
\end{lemma}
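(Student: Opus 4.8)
The plan is to do the actual computation in the Petri space, where the flow relation $F_1$ is explicit, and then fold the two neighbouring transitions back into the fundamental parallelogram by means of the equivalence $\equiv$ and the map $\rho$ of Theorem \ref{xy-to-FP}.

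First I would read off the two neighbours directly from Definition \ref{petrispace}. The backward output \emph{place} of $t_{0,0}$ is $\postnbbw{t_{0,0}} = \gsrw{0,0}$, and by $F_1$ its unique output transition is $(\gsrw{0,0})^{\ndot} = t_{0,1}$. Dually, the backward input place of $t_{0,0}$ is $\prenbbw{t_{0,0}} = \gsrw{0,-1}$, whose unique input transition is $t_{0,-1}$. Hence, in the Petri space, the backward output and backward input transitions of $t_{0,0}$ are $t_{0,1}$ and $t_{0,-1}$, respectively.

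Next I would pin down the representatives of these two transitions inside the fundamental parallelogram. By Theorem \ref{parameter} b) one has $t_{0,1} \equiv t_{\alpha,1-\beta}$, since $(\alpha,1-\beta)-(0,1) = (\alpha,-\beta) = \mathbf{A}\begin{pmatrix}1\\0\end{pmatrix}$, and $t_{0,-1} \equiv t_{\gamma,\delta-1}$, since $(\gamma,\delta-1)-(0,-1) = (\gamma,\delta) = \mathbf{A}\begin{pmatrix}0\\1\end{pmatrix}$. To check that these equivalent points are the \emph{canonical} representatives, I would evaluate $\rho$ of Theorem \ref{xy-to-FP} at each. For $(\alpha,1-\beta)$ one computes $m = \lfloor\tfrac{1}{A}(\alpha\delta-(1-\beta)\gamma)\rfloor = \lfloor 1-\tfrac{\gamma}{A}\rfloor = 0$ and $n = \lfloor\tfrac{1}{A}((1-\beta)\alpha+\alpha\beta)\rfloor = \lfloor\tfrac{\alpha}{A}\rfloor = 0$, using $0 < \gamma,\alpha < A$ (which holds because $A = \alpha\delta+\beta\gamma \geq \alpha+\gamma$ as $\beta,\delta\geq 1$); hence $\rho(\alpha,1-\beta) = (\alpha,1-\beta)$, and the symmetric computation gives $\rho(\gamma,\delta-1) = (\gamma,\delta-1)$. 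Thus both points already lie in the fundamental parallelogram.

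Finally I would transfer the result from the Petri space to the cycloid. Since $t_{0,0}$ lies in the fundamental parallelogram, Lemma \ref{pi(post)} applied to $t=t_{0,0}$ gives that the backward output transition of $t_{0,0}$ in the cycloid equals $\rho(t_{0,1}) = t_{\alpha,1-\beta}$, the first claim. For the input side I would instead apply Lemma \ref{pi(post)} to $t = t_{\gamma,\delta-1}$: its Petri-space backward output transition is $t_{\gamma,\delta}$, and $\rho(\gamma,\delta) = (0,0)$ because the corner $Q=(\gamma,\delta)\equiv O$; hence the cycloid backward output transition of $t_{\gamma,\delta-1}$ is $t_{0,0}$, so by uniqueness of the backward input place of $t_{0,0}$ we conclude $\prenbbw{t_{0,0}} = t_{\gamma,\delta-1}$, the second claim. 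The only step requiring genuine care — the main obstacle — is confirming that $(\alpha,1-\beta)$ and $(\gamma,\delta-1)$ fall strictly inside the fundamental parallelogram rather than on one of its excluded far edges; this is exactly what the two floor evaluations in the $\rho$-step settle.
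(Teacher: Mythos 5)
Your proposal is correct and takes essentially the same route as the paper: identify the Petri-space neighbours $t_{0,1}$ and $t_{0,-1}$ of $t_{0,0}$ from Definition \ref{petrispace} and fold them into the fundamental parallelogram by the cycloid-algebra map $\rho$ of Theorem \ref{xy-to-FP}, with the floor evaluations resting on $0<\alpha,\gamma<A$. The only cosmetic differences are that you evaluate $\rho$ at the candidate points $(\alpha,1-\beta)$ and $(\gamma,\delta-1)$ to verify they are fixed points (after guessing them via Theorem \ref{parameter} b)), whereas the paper computes $\rho$ directly at $(0,1)$ and $(0,-1)$ with $(m,n)=(-1,0)$ and $(0,-1)$, and that you route the input-side claim through Lemma \ref{pi(post)} applied at $t_{\gamma,\delta-1}$ plus a uniqueness argument, where the paper simply computes $\rho(t_{0,-1})=t_{\gamma,\delta-1}$.
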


\begin{proof} 
a) For any cycloid the backward output transition $t_{0,1} = \postnbbw{t_{0,0}} $  of $t_{0,0} $ is not contained in the fundamental parallelogram.
Again, we calculate the equivalent $\vec{x}$ of $t_{0,1} $ within the fundamental parallelogram using Theorem \ref{xy-to-FP}: $\vec{x} = \vec{u} - \mathbf{A}\begin{pmatrix} m \\ n \end{pmatrix}$ where 
$\vec{u} = (u,v) = (0,1)$ and 
$m = \lfloor \frac{1}{A}(u\delta - v \gamma)\rfloor =
\lfloor \frac{1}{A}(0\cdot\delta - 1\cdot \gamma)\rfloor =
\lfloor \frac{-\gamma}{A}\rfloor = -1$ and 
$n = \lfloor \frac{1}{A}(v\alpha + u\beta)\rfloor =
\lfloor \frac{1}{A}(1\cdot\alpha + 0\cdot\beta)\rfloor=
\lfloor \frac{\alpha}{A}\rfloor = 0$.
Hence we obtain 
$\vec{x} =\begin{pmatrix} 0 \\ 1 \end{pmatrix} - \begin{pmatrix} \alpha & \gamma \\ -\beta & \delta \end{pmatrix}\begin{pmatrix} -1 \\ 0 \end{pmatrix} = 
\begin{pmatrix} 0 \\ 1 \end{pmatrix} - \begin{pmatrix} -\alpha \\ \beta \end{pmatrix} =
\begin{pmatrix} \alpha \\ 1-\beta \end{pmatrix} $. \\
b) In the same way we calculate the equivalent $\vec{x}$ of $t_{0,-1} $ within the fundamental parallelogram using Theorem \ref{xy-to-FP}: $\vec{x} = \vec{u} - \mathbf{A}\begin{pmatrix} m \\ n \end{pmatrix}$ where 
$\vec{u} = (u,v) = (0,-1)$ and 
$m = \lfloor \frac{1}{A}(u\delta - v \gamma)\rfloor =
\lfloor \frac{1}{A}(0\cdot\delta + 1\cdot \gamma)\rfloor =
\lfloor \frac{\gamma}{A}\rfloor = 0$ and 
$n = \lfloor \frac{1}{A}(v\alpha + u\beta)\rfloor =
\lfloor \frac{1}{A}(-1\cdot\alpha + 0\cdot\beta)\rfloor=
\lfloor \frac{-\alpha}{A}\rfloor = -1$.
Hence we obtain 
$\vec{x} =\begin{pmatrix} 0 \\ -1 \end{pmatrix} - \begin{pmatrix} \alpha & \gamma \\ -\beta & \delta \end{pmatrix}\begin{pmatrix} 0 \\ -1 \end{pmatrix} = 
\begin{pmatrix} 0 \\ -1 \end{pmatrix} - \begin{pmatrix} -\gamma \\ -\delta \end{pmatrix} =
\begin{pmatrix} \gamma \\ \delta-1 \end{pmatrix} $.
\end{proof}

This result is also obtained in a more geometric way as follows.
The position of the output transition of $\postnbbw{t_{0,0}} $ in the fundamental parallelogram is one step from $P$ in direction of the $\eta$-axis: 
$P + (0,1) = (\alpha,-\beta) + (0,1) = (\alpha,1-\beta)$ and, similarly for the second case: $Q + (0,-1) = (\gamma,\delta) + (0,-1) = (\gamma,\delta-1)$.

Reductions of Definition \ref{reduction} are obviously unique if only one rule is allowed, i.e. $\Lambda
 = \{ \lambda \}$. This is not true if two rules can be applied: i.e. $\Lambda = \{ \lambda_1,\lambda_2 \}$. 
To give an example, the sequence 
$  \mathcal{C}( 1,1,2,1 )  \xrightarrow[\text{}]{\gamma}   \mathcal{C}( 1,1,1,2 ) \xrightarrow[\text{}]{\delta}   \mathcal{C}( 1,1,2,1 )  \xrightarrow[\text{}]{\gamma}\cdots $ starts a non terminating sequence of 
$\gamma\delta$-reduction steps. The following theorem shows a different situation in the case of 
$\beta\delta$-reduction steps.

\begin{theorem} \label{bd-irreducible}
Each cycloid    $\mathcal{C}'=\mathcal{C}( \alpha', \beta', \gamma', \delta' ) $ has  an unique  $\beta\delta$-reduction $ \mathcal{C}=\mathcal{C}( \alpha, \beta, \gamma, \delta ) $ with the following properties:
\begin{itemize}
       \item [a)]  $\beta = \delta$ with $\beta = gcd(\beta',\delta')$.
       \item [b)]  $ \mathcal{C}$ is regular, composed of $\beta$ processes of length 
       		$p = \frac{A}{\beta} $.
       \item [c)] The cycloid corner R of $ \mathcal{C}$ is on the $\xi$-axis. Therefore the transitions of the forward 		cycle  through the origin are all lying on the $\xi-$axis.  
       \end{itemize}     
\end{theorem}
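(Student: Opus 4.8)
The plan is to exploit the fact that the two $\beta\delta$-reduction rules act on the pair $(\beta,\delta)$ exactly like the subtractive form of Euclid's algorithm. First I would record that rule $R_\beta$ of Definition \ref{reduction rules} sends $(\beta,\delta)\mapsto(\beta-\delta,\delta)$ under the guard $\beta>\delta$, while $R_\delta$ sends $(\beta,\delta)\mapsto(\beta,\delta-\beta)$ under the guard $\delta>\beta$; the parameters $\alpha,\gamma$ are carried along (with $\alpha\mapsto\alpha+\gamma$, respectively $\gamma\mapsto\gamma+\alpha$) but never feed back into the evolution of $(\beta,\delta)$. Since the two guards are mutually exclusive and together cover every case $\beta\neq\delta$, exactly one rule is applicable whenever $\beta\neq\delta$, and none is applicable precisely when $\beta=\delta$. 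Hence the $\beta\delta$-reduction is \emph{deterministic}: the reduction chain starting from $\mathcal{C}'$ is forced at every step, which immediately gives both the existence and the uniqueness of the $\beta\delta$-irreducible endpoint $\mathcal{C}$, and shows that this endpoint is characterised by $\beta=\delta$. Termination follows because each step strictly decreases the positive integer $\beta+\delta$ while the guards keep all four parameters in $\Natp$.

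For part (a) I would invoke the standard invariant of the subtractive Euclidean algorithm, $gcd(\beta-\delta,\delta)=gcd(\beta,\delta)=gcd(\beta,\delta-\beta)$, so that $gcd(\beta,\delta)$ is preserved by both rules and therefore equals $gcd(\beta',\delta')$ along the entire chain. As the endpoint satisfies $\beta=\delta$, its common value is $gcd(\beta,\delta)=\beta=\delta$, yielding $\beta=\delta=gcd(\beta',\delta')$.

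Part (b) is then immediate: $\beta=\delta$ gives $\beta\mid\delta$, so $\mathcal{C}$ is regular by Definition \ref{regular}, and that definition directly supplies the decomposition into $\beta$ processes of length $p=\frac{A}{\beta}$. Here I would first note that the area $A$ is invariant under the reductions — a one-line computation gives $(\alpha+\gamma)\delta+(\beta-\delta)\gamma=\alpha\delta+\beta\gamma$ and symmetrically for $R_\delta$, consistent with the shear maps of Theorem \ref{shear} preserving $|T|=A$ — so the same $A$ appears for $\mathcal{C}$ and $\mathcal{C}'$.

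For part (c), the corner $R=(\alpha+\gamma,\delta-\beta)$ of $\mathcal{C}$ collapses to $(\alpha+\gamma,0)$ once $\delta=\beta$, which lies on the $\xi$-axis. To pass from this to the statement about the whole forward cycle through the origin, I would follow forward places from $t_{0,0}$, which advance $\xi$ by one and leave $\eta$ fixed, so the cycle runs through $t_{0,0},t_{1,0},t_{2,0},\dots$ along the $\xi$-axis. Applying Lemma \ref{xi-max} with $max(\beta,\delta)=\beta$ and $\frac{A}{\beta}=\alpha+\gamma\in\Nat$, the $\xi$-axis meets the fundamental parallelogram exactly in $t_{0,0},\dots,t_{\xi_{max},0}$ with $\xi_{max}=\alpha+\gamma-1$; these are distinct representatives, none equivalent to $O$ except $O$ itself, while $t_{\alpha+\gamma,0}=R\equiv O$ closes the cycle. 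Thus the forward cycle through the origin consists precisely of the $p=\alpha+\gamma=\frac{A}{\beta}$ transitions on the $\xi$-axis, matching the process length from (b). I expect the only genuinely delicate step to be exactly this verification in (c) — that $t_{\alpha+\gamma,0}$ is the \emph{first} return to the origin-class and that all earlier $\xi$-axis transitions are distinct nodes of the cycloid — which is what Lemma \ref{xi-max} provides together with the integrality $\frac{A}{\beta}=\alpha+\gamma$; the remaining parts are the bookkeeping of Euclid's algorithm and the defining relation $R\equiv O$.
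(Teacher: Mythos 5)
Your proposal is correct and follows essentially the same route as the paper's proof: mutual exclusivity of the guards of $R_\beta$ and $R_\delta$ plus the strictly decreasing sum $\beta+\delta$ for termination and uniqueness, the subtractive-Euclid $\gcd$ invariant for part (a), Definition \ref{regular} with Theorem \ref{th-f-b-cycle} for part (b), and the computation $\xi_{max}=\lceil A/\max(\beta,\delta)\rceil-1=\alpha+\gamma-1$ from Lemma \ref{xi-max} for part (c). Your explicit remarks on area invariance and on the distinctness of the $\xi$-axis representatives are small additions the paper leaves implicit, but they do not change the argument.
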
 

\begin{proof} 
In a sequence of  $\beta\delta$-reduction steps at most one of the rules $R_\beta$ and $R_\delta$ is applicable since their conditions are contradictory. In each step the sum $\beta+\delta$  decreases strictly. Therefore each such sequence ends after a finite number of steps with a unique result.\\
a) The pseudo-code in Table \ref{code} to compute  $ \mathcal{C}$ from $ \mathcal{C}'$ implements the repeated applications of rules $R_\beta$ and $R_\delta$ (Definition \ref{reduction rules}) by Dijkstra's guarded commands formalism (see \cite{Gries81}, page 343). Essentially this is the classical algorithm of Euclid.
\begin{table}[htbp] \small
    \caption{ Pseudo-code to compute for a given  $ \mathcal{C}( \alpha', \beta', \gamma', \delta' ) $ a $\beta\delta$-irreducible cycloid  $ \mathcal{C}( \alpha, \beta, \gamma, \delta ) $.}
    \label{code}
    \begin{center}
          \begin{tabular}{rllccc}
   \textbf{Input}&  $ \alpha', \beta', \gamma', \delta'$    \;\;\;\;\;  \textbf{positive integers}  &          &   \\  

           &  $ \alpha,\beta,\gamma,\delta := \alpha',\beta',\gamma',\delta'$        &          &   \\  
            \textnormal{\textbf{do}}    &  $\{ gcd(\beta,\delta) = gcd(\beta',\delta') \}$        &           &   \\  
              &  $\beta > \delta  \; \rightarrow  \;\beta:= \beta - \delta \; ; \; \alpha := \alpha + \gamma $        & (Rule $R_\beta$)         &   \\ 
           $\square$  & $\beta < \delta  \; \rightarrow  \;\delta:= \delta - \beta  \; ; \; \gamma:= \gamma + \alpha$       & (Rule $R_\delta$)         &   \\  
             \textnormal{\textbf{od}}    &  $\{ \beta = \delta = gcd(\beta',\delta' )\}$       &          &  \\ 
              \textbf{Output}  & $ \alpha, \beta, \gamma, \delta$       &          &  \\ 
         \end{tabular}
   \end{center}
\end{table}
The loop invariant  $\{gcd(\beta,\delta) = gcd(\beta',\delta')\} $ holds initially since $\beta' = \beta$ and $ \delta' = \delta $. By the well known properties $gcd(\beta,\delta) = gcd(\beta-\delta,\delta)$ if $\beta>\delta$ and $gcd(\beta,\delta) = gcd(\delta-\beta,\beta)$ if $\beta<\delta$ the loop invariant property is true. The loop terminates by the strictly decreasing loop variant $\beta+\delta$ with $gcd(\beta,\delta) = gcd(\beta',\delta') = \beta = \delta$.\\
b) $ \mathcal{C}$ is regular (Definition \ref{regular}) since $\beta = \delta$
 implies  $\beta | \delta$ and by Theorem \ref{th-f-b-cycle} the process length is
       $p = \frac{A}{gcd(\beta,\delta)} =\frac{A}{\beta} $.\\
 c)  Since 
       $R = \begin{pmatrix} \alpha+\gamma \\ \delta - \beta  \end{pmatrix}$ in general, we obtain 
       $R =\begin{pmatrix} \alpha+\gamma \\ 0  \end{pmatrix}$ for $ \mathcal{C}$. Hence, R is on the $\xi$-axis. 		The transitions of the forward cycle  through the origin are all lying on the $\xi-$axis if and only if  
       $\xi_{max} $ (Lemma \ref{xi-max}) is one less than the $\xi$-coordinate of $R$. Therefore we prove
       $\xi_{max} =  \alpha+\gamma-1 $ as follows:   
       $\xi_{max} = 
       \lceil \frac{A}{max(\beta,\delta)} \rceil - 1 = 
       \lceil \frac{\alpha\cdot\beta+\beta\cdot\gamma}{max(\beta,\beta)} \rceil - 1 = 
       \lceil \frac{\beta\cdot(\alpha+\gamma)}{\beta} \rceil - 1 = 
        \alpha+\gamma-1 $.
\end{proof}

Successive applications of the same rule in classical versions of Euclid's algorithm 
can be executed in a single operation using the modulo function \cite{sipser}.
Since $a \, mod \,b < \frac{a}{2}$ for any $0 < b < a$, a time complexity of 
$T(n) = \mathcal{O}(log_2 \, n)$ is obtained for computing $gcd(a,b)$ where $n = max(a,b)$. The form of Table \ref{code} is preferred here since it is using the rules $R_\beta$ and $R_\delta$ .

\begin{figure}[htbp]
 \begin{center}
        \includegraphics [scale = 0.32]{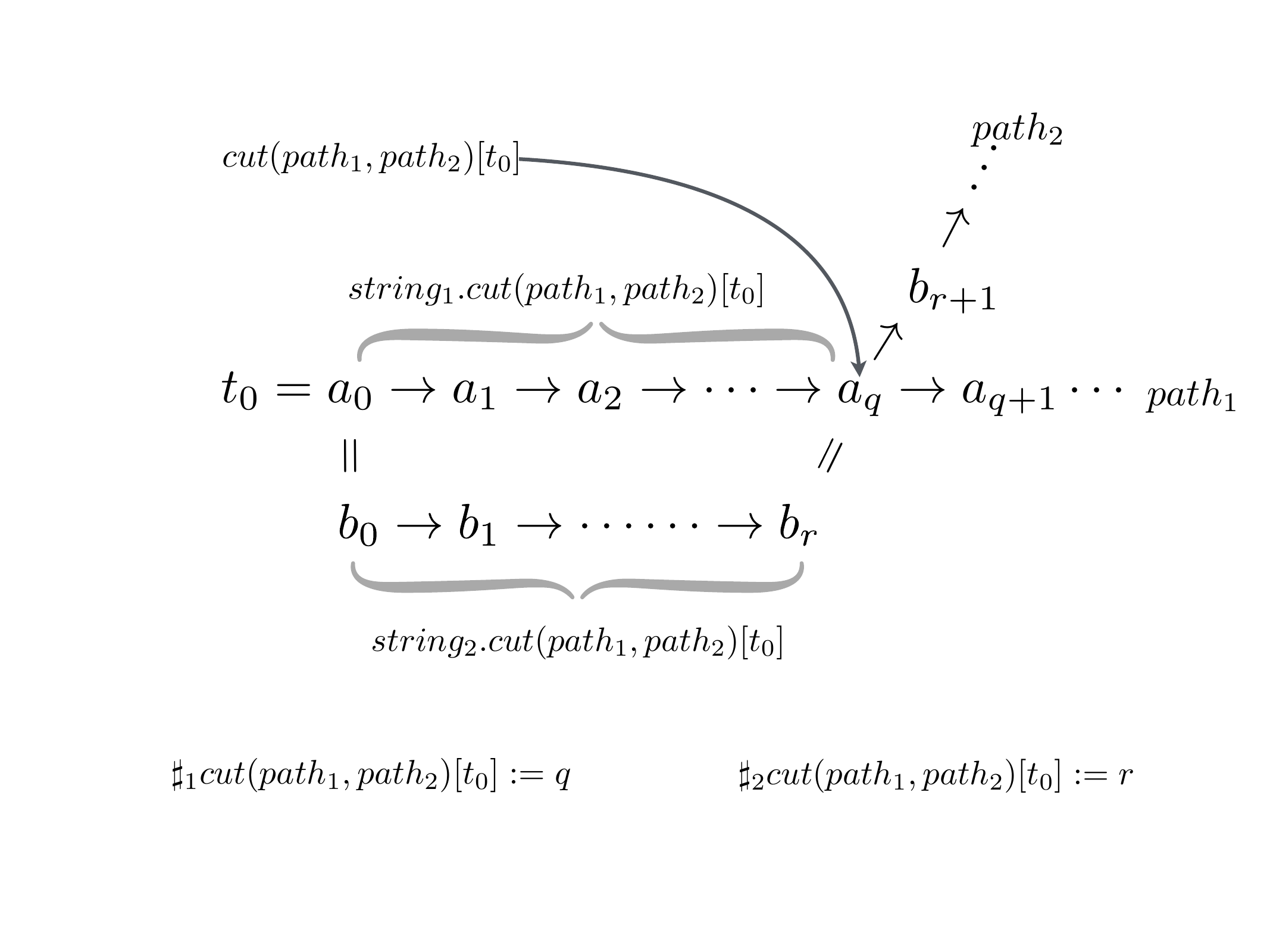}
        \caption{Definition of $cut(path_1,path_2)[t_0]$ for $path_1 = a_0,a_1,\cdots$
        and $path_2 = b_0,b_1,\cdots$}
        \label{fig-cut}
      \end{center}
\end{figure}


In applications, it may happen that a net $\N{} =(\GSvw, \GSrw, T, F)$ is known to be the net of a cycloid, but its parameters  $ \alpha, \beta, \gamma, \delta  $ are not known. This applies, for example, to systems of cooperating processes. In order to use the theory of cycloids, however, these parameters must be determined. To achieve this, graphical properties of  $\N{}$ could be utilised, such as the number of transitions, the length or structure of processes or the intersection of different paths. In the following, such structures are defined that are based on the relative length of transition sequences, here called paths. As a remarkable result, these investigations lead to a decision algorithm for cycloid isomorphism with given parameters.

In the following definition, the terms of a forward-cycle and backward-cycle from Definition \ref{def-f-b-cycle} of length $p =\frac{A}{gcd(\beta,\delta)}$  and 
$p' =\frac{A}{gcd(\alpha,\gamma)}$, respectively (Theorem \ref{th-f-b-cycle}) are taken up, but considered as infinite sequences of transitions, called paths.


\begin{definition} \label{def-path} 
Let $t_0$ be a transition of a cycloid net $\N{} =(\GSvw, \GSrw, T, F)$ (Definition \ref{cycloid} ) and $\tau = \{
t_i | i \in \Nat \}$ an  infinite sequence of transitions
 with $t_0$ as first element.
 \begin{itemize}
      \item [a)] If $t_{n+1} = {\postnbfw{(t_{n}}\;)}^{\ndot}$ for all $n\in \Nat$ then $\tau$ is denoted as    $fw[t_0]$ and called a \emph{forward path},
 	\item [b)] If $t_{n+1} = {}^{\ndot}{(\prenbfw{t_{n}})}$ for all $n\in \Nat$ then $\tau$ is denoted as $		\overleftarrow{fw}[t_0]$ and called a \emph{forward path in opposite direction},
 	\item [c)] If $t_{n+1} = (\postnbbw{t_{n}}\;)^{\ndot}$ for all $n\in \Nat$ then $\tau$ is denoted as 		$bw[t_0]$ and called a \emph{backward path},
 	\item [d)] If $t_{n+1} = {}^{\ndot}{(\prenbbw{t_{n}})}$ for all $n\in \Nat$ then $\tau$ is denoted as $		\overleftarrow{bw}[t_0]$ and called a  \emph{backward path in opposite 			direction}.
	\item [e)] For a path $pa = t_0, t_1,\cdots$ from $\{ fw[t_0],\overleftarrow{fw}[t_0],bw[t_0], \overleftarrow{bw}[t_0]\}$  we denote the element $t_{n}$ by $pa(n) := t_n$. 
	(To give the example from  the application of this definition in Corollary \ref{alpha-gamma-cor} a) 1): 
	$pa =fw[t_0]$ and $n =(\alpha-\gamma)$ and 
	$fw[t_0](\alpha-\gamma)$ is  the $n$-th element of the sequence $fw[t_0]$ with $n = \alpha-\gamma$.)
 \end{itemize}
\end{definition} 


Figure \ref{fig-cut} shows symbolically how two such paths intersect in a transition, which is then called a \emph{cut}.

\begin{definition} \label{def-cut} 
Let be $t_0$ a transition of a cycloid net    $\N{} =(\GSvw, \GSrw, T, F)$ (Definition \ref{cycloid}) and  $path_1$ and $path_2$ two different paths as defined in Definition \ref{def-path}, both starting in $t_0$. Then we define $cut(path_1,path_2)[t_0]$ to be the first transition $a_q$ in 
 $path_1$, which is also a member $b_r$ of  $path_2$ 
 (see Figure \ref{fig-cut}).
Then we define
 $string_1.cut(path_1,path_2)[t_0]:= a_0,a_1,a_2,\cdots,a_q$
with length 
$\sharp_1 cut(path_1,path_2)[t_0]:= q$.
In a similar way, $string_2.cut(path_1,path_2)[t_0]:= b_0,b_1,b_2,\cdots,b_r$
with length 
$\sharp_2 cut(path_1,path_2)[t_0]:= r$.
The transition $cut(path_1,path_2)[t_0]$  exists and the values of $q$  and $r$ are finite since the sequences in Definition \ref{def-path} are cyclic by returning to $t_0$.

\end{definition}


The following theorem shows how to deduce from a cycloid net the parameters of a $\beta\delta$-irreducible cycloid, generating the cycloid net.
 This is done under assumption that the parameters $ \alpha, \beta, \gamma, \delta $ are not given, but 
 by the definition of a cycloid net (Definition \ref{cycloid})  the distinction between forward and backward places is given.
 
\begin{figure}[htbp]
 \begin{center}
        \includegraphics [scale = 0.32]{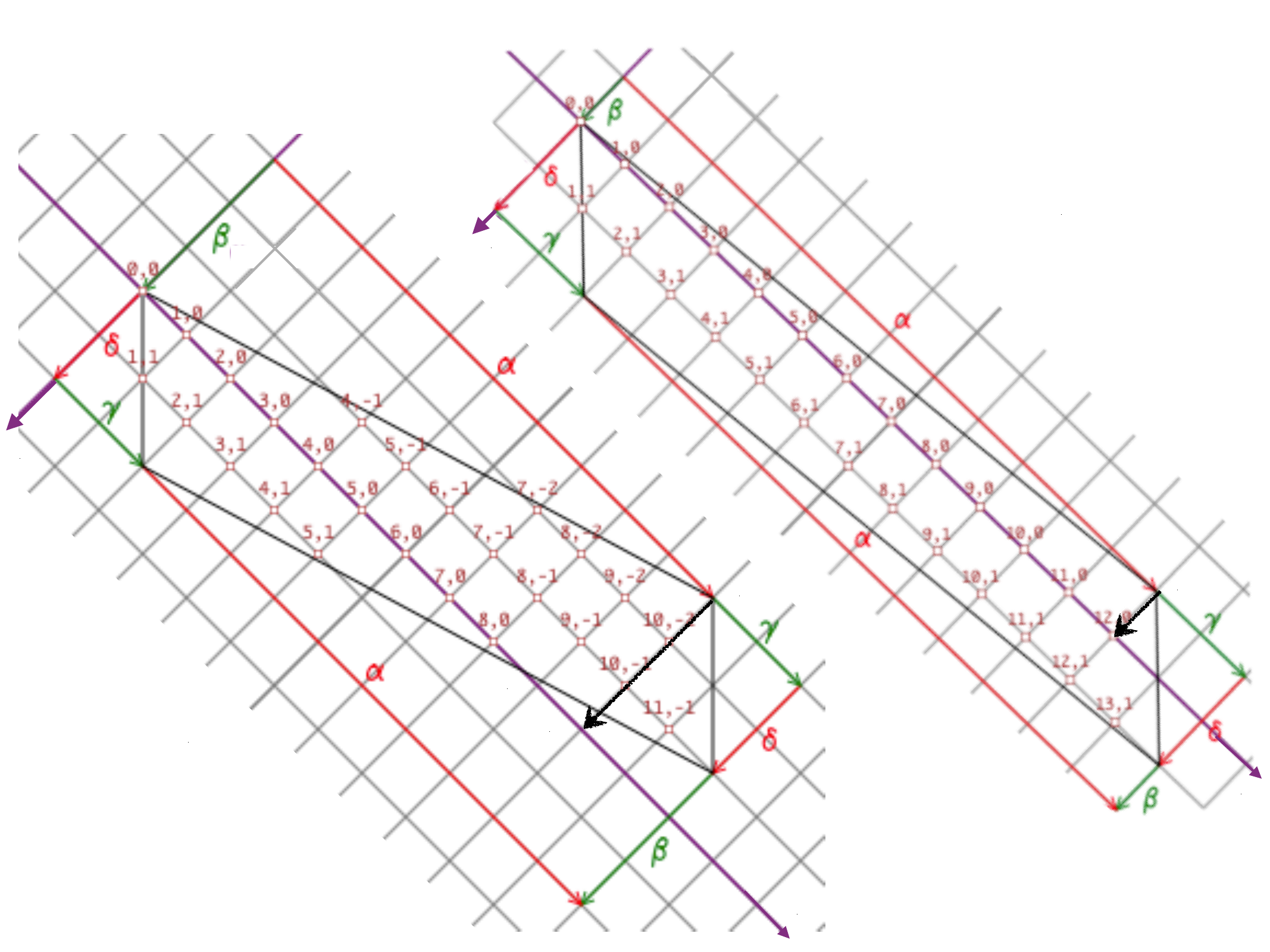}
        \caption{Fundamental parallelograms of  $ \mathcal{C}( 10,3,2,2 ) $ and   $ \mathcal{C}( 12,1,2,2 ) $. }
        \label{10-322+12-122}
      \end{center}
\end{figure}



\begin{theorem} \label{th-beta-delta-red}
Let be  
$\N{} =(\GSvw, \GSrw, T, F)$ (Definition \ref{cycloid}) a cycloid net of $ \mathcal{C}_1$ 
with area $|T|=A $ and $t_0 \in T$ a transition.
Furthermore define 
\begin{itemize}
	\item[a)] $\alpha = \sharp_1 cut(fw ,bw )[t_0] $,
	\item[b)] $\beta=\delta =  \sharp_2 cut(fw ,bw )[t_0]$ and 
	\item[c)]  $\gamma = \frac{A}{\beta}-\alpha $.
\end{itemize}
Then  $ \mathcal{C}( \alpha, \beta, \gamma, \delta ) $ is $\beta\delta$-irreducible
and  cycloid-isomorphic to $ \mathcal{C}_1$.
Alternatively, $\gamma$ is the length of the forward cycle from $cut(fw ,bw ) [t_0]$ back to $t_{0}$, without counting  $cut(fw ,bw ) [t_0]$.
\end{theorem}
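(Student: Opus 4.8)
The plan is to compute the two cut‑lengths not in the given net $\mathcal{C}_1$ directly, but in its $\beta\delta$‑irreducible form $\mathcal{C}=\mathcal{C}(\alpha,\beta,\gamma,\delta)$, whose existence, uniqueness and structure are supplied by Theorem \ref{bd-irreducible}. First I would record that a cycloid isomorphism preserves forward places, backward places, transitions and the flow relation, hence carries a forward path to a forward path and a backward path to a backward path step by step (it commutes with the successor operators of Definition \ref{def-path}); consequently it maps $cut(fw,bw)[t_0]$ to $cut(fw,bw)[f(t_0)]$ and leaves both $\sharp_1$ and $\sharp_2$ invariant. Since every reduction step is a cycloid isomorphism (Theorem \ref{shearing}), we have $\mathcal{C}_1\simeq_{\text{cyc}}\mathcal{C}$, and since a cycloid is vertex‑transitive under the translations of the Petri space, the values $\sharp_1,\sharp_2$ are independent of the chosen $t_0$. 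It therefore suffices to evaluate them in $\mathcal{C}$ with $t_0=t_{0,0}$.

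By Theorem \ref{bd-irreducible} the irreducible cycloid satisfies $\beta=\delta=gcd(\beta',\delta')$, its corner is $R=(\alpha+\gamma,0)$, and the forward cycle through the origin runs along the $\xi$‑axis with length $p=\frac{A}{\beta}=\alpha+\gamma$, using $A=\alpha\delta+\beta\gamma=\beta(\alpha+\gamma)$. The forward path is then the sequence $t_{0,0},t_{1,0},t_{2,0},\dots$ and the backward path is $t_{0,0},t_{0,1},t_{0,2},\dots$. I would first exhibit the cut transition: since $\mathbf{A}\begin{pmatrix}1\\0\end{pmatrix}=\begin{pmatrix}\alpha\\-\beta\end{pmatrix}$, Theorem \ref{parameter} gives $t_{\alpha,0}\equiv t_{0,\beta}$, so the transition $C:=t_{\alpha,0}=t_{0,\beta}$ lies on both paths, at forward‑distance $\alpha$ and backward‑distance $\beta$ from $t_0$.

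The decisive step is minimality, i.e. showing that $C$ is the relevant (first) common transition. If some $t_{0,r}$ with $r\geq 1$ lies on the forward cycle, say $t_{0,r}\equiv t_{q,0}$, then writing $(-q,r)=\mathbf{A}\begin{pmatrix}m\\n\end{pmatrix}$ and reading the second coordinate with $\beta=\delta$ gives $r=\beta(n-m)$; hence $r$ is a positive multiple of $\beta$, so the least admissible value is $r=\beta$, attained (by the previous paragraph) at forward position $q\equiv\alpha\pmod{\alpha+\gamma}$, that is $q=\alpha$ because $0\le\alpha<\alpha+\gamma$. Thus the backward path first returns to the forward process at $C$, giving $\sharp_2\,cut(fw,bw)[t_0]=\beta=\delta$ and, for its forward position, $\sharp_1\,cut(fw,bw)[t_0]=\alpha$. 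The parameter $\gamma$ then follows from $A=\beta(\alpha+\gamma)$ as $\gamma=\frac{A}{\beta}-\alpha$, while the complementary arc of the forward cycle from $C$ back to $t_0$ has length $p-\alpha=\gamma$, which is the alternative characterisation. Finally $\mathcal{C}$ is $\beta\delta$‑irreducible because $\beta=\delta$ falsifies the guards of both $R_\beta$ and $R_\delta$ (Definition \ref{reduction rules}), and cycloid‑isomorphism of $\mathcal{C}$ to $\mathcal{C}_1$ is exactly Theorem \ref{shearing}.

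I expect the minimality argument to be the main obstacle. The essential point is that the admissible return positions must be measured on the backward path, where the congruence $r=\beta(n-m)$ forces them to be multiples of $\beta$ (so the first return is clean at $r=\beta$, landing at forward position $\alpha$); one must be careful that the auxiliary self‑intersections of the two cycles, which occur at forward positions that are multiples of $gcd(\alpha,\gamma)$, are handled consistently with the intended reading of the cut as the first return to the process. A secondary point demanding explicit justification is the transfer of the cut data from $\mathcal{C}_1$ to $\mathcal{C}$, which rests on the fact that cycloid isomorphisms commute with the forward‑ and backward‑successor operators and hence preserve path lengths; once both are in place, the three equalities $\alpha=\sharp_1$, $\beta=\delta=\sharp_2$, $\gamma=\frac{A}{\beta}-\alpha$ follow directly.
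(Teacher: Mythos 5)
Your proposal follows essentially the same route as the paper's proof: pass to the $\beta\delta$-irreducible cycloid supplied by Theorem \ref{bd-irreducible} (cycloid-isomorphic by Theorem \ref{shearing}, with cut data invariant under such isomorphisms), take $t_0=t_{0,0}$, identify the cut transition as $t_{\alpha,0}\equiv t_{0,\beta}$ from $\mathbf{A}\begin{pmatrix}1\\0\end{pmatrix}=\begin{pmatrix}\alpha\\-\beta\end{pmatrix}$, and recover $\gamma$ from $A=\beta(\alpha+\gamma)$ or as the complementary arc of the forward cycle. The only local deviation is the minimality step: the paper argues geometrically, using Lemma \ref{t00-back} to place the first $\beta$ backward steps on $t_{\alpha,1-\beta},\dots,t_{\alpha,0}$ inside the fundamental parallelogram, off the $\xi$-axis that by Theorem \ref{bd-irreducible} c) carries the forward cycle, whereas you argue algebraically via Theorem \ref{parameter} that every backward-path index $r$ landing on the forward cycle satisfies $r\in\beta\cdot\Natp$ — an equivalent and, if anything, slightly more complete justification, which also resolves the ambiguity in Definition \ref{def-cut} in favour of reading the cut as the first return measured along the backward path, exactly as the paper's own proof and its worked example do.
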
 
\begin{proof} 
Given a cycloid $ \mathcal{C}_1$  we consider its 
$\beta\delta$-reduction  $  \mathcal{C} =\mathcal{C}( \alpha, \beta, \gamma, \delta ) $, which is cycloid equivalent by Theorem \ref{shearing}. To work with coordinates we also consider the fundamental parallelogram of $ \mathcal{C}$ within the Petri space.
By the symmetry of a cycloid, without any loss of generality, we can choose the origin as base transition: $t_{0} = t_{0,0}$ (Definition \ref{def-cut}). By Lemma \ref{t00-back} the backward output transition 
$\postnbbw{t_{0,0}} $ of $t_{0,0}$ is  $t_{\alpha,1-\beta}$
(for instance  $t_{10,-2}$ in the cycloid 
$ \mathcal{C}( 10,3,2,2) $ of Figure~\ref{10-322+12-122}).
If $\beta = 1$ transition $t_{\alpha,1-\beta}$ is lying on the $\xi-$axis. If $\beta > 1$ from transition $t_{\alpha,1-\beta}$ following the backward cycle we reach 
the $\xi-$axis in the transition $t_{\alpha,0}$ after passing $\beta - 1$ transitions and 
$\sharp_2 cut(fw,bw)[t_0] = \beta$ in both cases. By Theorem \ref{bd-irreducible} c) the forward cycle through  $t_{0,0}$ is right on the $\xi-$axis within the fundamental parallelogram. Therefore transition $t_{\alpha,0}$
is the first point where the backward cycle is meeting the forward cycle and 
$\sharp_1 cut(fw,bw)[t_0]=\alpha $.
By Theorem \ref{bd-irreducible} a) we have $\delta= \beta$ and $\gamma$ is computed using $A = \alpha \cdot \delta + \beta \cdot \gamma$. Alternatively, we can use the formula  (Theorem \ref{th-f-b-cycle}) for the length of a forward cycle
$p = \frac{A}{gcd(\beta,\delta)} = \frac{\alpha\cdot\beta+\beta\cdot\delta}{\beta} = \alpha+\delta $. Therefore the forward cycle is made up of parts from $t_{0}$ to $cut(fw,bw)[t_0] $ of length $\alpha$ and from $cut(fw,bw) [t_0]$ to $t_{0}$ of length $\gamma$.
\end{proof} 

As example   for the theorem consider the randomly selected transition $t_{0} =\textnormal{\textbf{t7}}$ in the cycloid net of Figure~\ref{c-5-3-2-6-red2}. 
It is generated by the RENEW-tool \cite{Moldt+23b} as cycloid $ \mathcal{C}( 5,3,2,6 ) $ with $A = 36$ transitions, but afterwards the structure of the underlying fundamental parallelogram has been erased, so that the parameters $5,3,2,6 $ are not longer visible in the graph of the net. By the result of Theorem \ref{th-beta-delta-red}  the parameters of its $\beta\delta$-reduction are constructed as follows.
%
%
The tool generating the cycloid net distinguishes forward and backward places by the letters $\textnormal{\textbf{f}}$ and $\textnormal{\textbf{b}}$ at the end of  their names, respectively.
The backward cycle through $t_{0} =\textnormal{\textbf{t7}}$ meets the forward cycle in $cut(fw,bw)[\textnormal{\textbf{t7}}]=
\textnormal{\textbf{t12}}$ with $string_2.cut(fw,bw)[\textnormal{\textbf{t7}}] = \textnormal{\textbf{t32}}$ $\textnormal{\textbf{t22}}$ $\textnormal{\textbf{t12}}$, resulting in 
$\sharp_2 cut(fw[\textnormal{\textbf{t7}}] ,bw[\textnormal{\textbf{t7}}] )  = \beta = \delta = 3$,
while the same for the forward cycle is  $string_1.cut(fw,bw)[\textnormal{\textbf{t7}}] =\textnormal{\textbf{t8}}$ $\textnormal{\textbf{t9}}$ $\textnormal{\textbf{t10}},\textnormal{\textbf{t11}},\textnormal{\textbf{t12}}$, resulting in 
$\sharp_1 cut(fw[\textnormal{\textbf{t7}}] ,bw[\textnormal{\textbf{t7}}] )= \alpha = 5$. 
These paths are shown as thick lines in Figure \ref{c-5-3-2-6-red2}.

Since $\gamma = \frac{36}{3}-5=7 $ 
the $\beta\delta$-irreducible cycloid is  $ \mathcal{C}( 5,3,7,3 ) $. An alternative method to determine $\gamma$ is by counting the length of the transition sequence  \textbf{t1} \textbf{t2} \textbf{t3} \textbf{t4} \textbf{t5} \textbf{t6} \textbf{t7} in the
forward cycle from $\textnormal{\textbf{t12}}$ to $\textnormal{\textbf{t7}}$ of length $7$.

The cycloid $ \mathcal{C}( 12,1,2,2) $ of Figure~\ref{10-322+12-122} gives an example where the backward cycle from point $P$ meets the $\xi$-axis \emph{within} the fundamental parallelogram.
This is very important for the validity of the proof and is holding also for the $\beta\delta$-irreducible cycloid $ \mathcal{C}( 12,1,14,1)$, which was too badly shaped for this picture.
A counterexample is the cycloid
$ \mathcal{C}( 10,3,2,2) $ in the same figure which also $\beta\delta$-reduces to $\mathcal{C}( 12,1,14,1)$.

Since the   graphical structures  of two  isomorphic cycloids are equal, there is a methodically simple algorithm for deciding this isomorphism. It is based on two
$\beta\delta-$reductions of the cycloids which have the behaviour of Euclid's algorithm. Using the modification as described after Table \ref{code}, we obtain an algorithm with a time complexity of $T(n) = \mathcal{O}(log_2 \, n)$, where $n= max\{\beta_1,\delta_1,\beta_2,\delta_2\}$ for the cycloids $ \mathcal{C}_1$ and $ \mathcal{C}_2$ in the following Corollary \ref{th2-cyc-iso}.

\begin{corollary} \label{th2-cyc-iso}
Two  cycloids $ \mathcal{C}_i =\mathcal{C}_i( \alpha_i, \beta_i, \gamma_i, \delta_i ), \; i \in \{ 1,2 \}$ are cycloid isomorphic (Definition \ref{def-cyc-iso}) if and only if they are 
$\beta\delta$-reduction equivalent (Definition \ref{reduction}):
$\mathcal{C}_1 \simeq_{\text{cyc}} \mathcal{C}_2 \Leftrightarrow 
\mathcal{C}_1\; \simeq_{\beta\delta} \;\mathcal{C}_2$. 
\end{corollary}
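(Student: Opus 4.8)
The plan is to prove the two implications separately, first recording that $\simeq_{\text{cyc}}$ is an equivalence relation: the identity map is a cycloid isomorphism, the inverse of one is again a cycloid isomorphism by Definition \ref{def-cyc-iso}, and the composition of two cycloid morphisms is a cycloid morphism, since both preserve the partition $S = S^\rightarrow \cup S^\leftarrow$ together with the flow relation. In particular $\simeq_{\text{cyc}}$ is symmetric and transitive, which I shall use in both directions.

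For the implication $\mathcal{C}_1 \simeq_{\beta\delta} \mathcal{C}_2 \Rightarrow \mathcal{C}_1 \simeq_{\text{cyc}} \mathcal{C}_2$ I would argue as follows. By hypothesis the two cycloids possess a common $\beta\delta$-reduction $\mathcal{C}$, so there exist $\beta\delta$-reduction chains from each of $\mathcal{C}_1$ and $\mathcal{C}_2$ to $\mathcal{C}$. Every single step in such a chain is an application of rule $R_\beta$ or $R_\delta$, and by Theorem \ref{shearing} each such step is a cycloid isomorphism. Composing the steps of a chain then yields $\mathcal{C}_1 \simeq_{\text{cyc}} \mathcal{C}$ and $\mathcal{C}_2 \simeq_{\text{cyc}} \mathcal{C}$, whence symmetry and transitivity give $\mathcal{C}_1 \simeq_{\text{cyc}} \mathcal{C}_2$.

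For the converse $\mathcal{C}_1 \simeq_{\text{cyc}} \mathcal{C}_2 \Rightarrow \mathcal{C}_1 \simeq_{\beta\delta} \mathcal{C}_2$ the key is that Theorem \ref{th-beta-delta-red} reconstructs the parameters of the (unique, by Theorem \ref{bd-irreducible}) $\beta\delta$-reduction of a cycloid from purely net-theoretic data: the area $A = |T|$, a chosen base transition $t_0$, and the cut lengths $\sharp_1 cut(fw,bw)[t_0]$ and $\sharp_2 cut(fw,bw)[t_0]$, which give $\alpha$ and $\beta=\delta$, and then $\gamma = \frac{A}{\beta}-\alpha$. I would take a cycloid isomorphism $f\colon \mathcal{C}_1 \to \mathcal{C}_2$, fix a transition $t_0$ in the net of $\mathcal{C}_1$, and set $t_0' := f(t_0)$. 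Since $f$ is a bijection it preserves $A$, and since it maps forward places to forward places, backward places to backward places, and respects the flow relation, it carries the forward and backward paths starting in $t_0$ (Definition \ref{def-path}) onto the corresponding paths starting in $t_0'$; hence it maps $cut(fw,bw)[t_0]$ to $cut(fw,bw)[t_0']$ and preserves both cut lengths. Consequently the parameters extracted by Theorem \ref{th-beta-delta-red} are identical for the two nets, so $\mathcal{C}_1$ and $\mathcal{C}_2$ have literally the same $\beta\delta$-reduction $\mathcal{C}(\alpha,\beta,\gamma,\delta)$, which is precisely $\mathcal{C}_1 \simeq_{\beta\delta} \mathcal{C}_2$.

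I expect the main obstacle to be the invariance claim used in the converse: one must verify carefully that a cycloid isomorphism sends forward paths to forward paths and backward paths to backward paths, so that the combinatorial cut invariants really are preserved. This reduces to checking, step by step in the inductive Definition \ref{def-path}, that the defining relations of a cycloid morphism send $(\postnbfw{t})^{\ndot}$ to $(\postnbfw{f(t)})^{\ndot}$ and $(\postnbbw{t})^{\ndot}$ to $(\postnbbw{f(t)})^{\ndot}$. Once this structural preservation is in hand, the equality of all reconstructed parameters, and therefore the identity of the two $\beta\delta$-reductions, is immediate.
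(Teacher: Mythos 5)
Your proposal is correct and follows essentially the same route as the paper's own proof: the paper likewise derives $\mathcal{C}_1 \simeq_{\text{cyc}} \mathcal{C}_2 \Rightarrow \mathcal{C}_1 \simeq_{\beta\delta} \mathcal{C}_2$ from the fact that a cycloid isomorphism preserves forward/backward path segments, their lengths and their intersections, so that Theorem \ref{th-beta-delta-red} (with uniqueness from Theorem \ref{bd-irreducible}) yields identical reductions, and obtains the converse from Theorem \ref{shearing} by composing the isomorphisms along the reduction chains and their reverses. Your extra remarks (the equivalence-relation properties of $\simeq_{\text{cyc}}$ and the step-by-step check that morphisms commute with $(\postnbfw{t})^{\ndot}$ and $(\postnbbw{t})^{\ndot}$) merely make explicit what the paper states in compressed form.
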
 
\begin{proof} If $\mathcal{C}_1$ and $\mathcal{C}_2$ are cycloid isomorphic
 their cycloid nets $\N{1}$ and $\N{2}$ are 
cycloid isomorphic. The isomorphism maps subsequences of forward and backward cycles to sequences of the same type and length. Also the image of their intersections are the intersections of their images.
Therefore they have the same $\beta\delta$-reduction by 
Theorem \ref{th-beta-delta-red}. The reverse statement of the corollary follows from  Theorem \ref{shearing} since reduction steps as well their reverses preserve cycloid isomorphisms.
\end{proof}  

To give an example, we compare the cycloids  $ \mathcal{C}_1=\mathcal{C}_1( 2,3,1,4 ) $ (the same as in Figure \ref{2-3-1-4}) with $ \mathcal{C}_2=\mathcal{C}_2( 2,5,1,3 ) $. They have the same area $A = 11$ 
(Definition \ref{cycloid}) and process length $p = 11$ (Theorem \ref{th-f-b-cycle}), but are not cycloid isomorphic since their $\beta\delta$-reductions are different, 
namely $ \mathcal{C}^{*}_1( 8,1,3,1 ) $ and $ \mathcal{C}^{*}_2( 7,1,4,1 ) $, respectively.

While Theorem \ref{th-beta-delta-red} represents the final result of a $\beta\delta$-reduction as a graphical property, this is also possible for the entire reduction. Each cycloid of such a reduction can be assigned to a transition of the cycloid net, whereby the path distances describe the respective reduction step.

The following lemma prepares the proof of Theorem \ref{th-all-param}. It states that $(\alpha-\gamma,0)$ is equivalent (Definition \ref{cycloid})  to  $(0,\beta+\delta)$ for a cycloid  $ \mathcal{C}( \alpha, \beta, \gamma, \delta ) $ with $\alpha > \gamma$ and a similar result if $\gamma >\alpha$.

\begin{lemma} \label{alpha-gamma}
For a cycloid  $ \mathcal{C}( \alpha, \beta, \gamma, \delta ) $ we have
\begin{itemize}
       \item [a)] $\begin{pmatrix} \alpha-\gamma \\ 0  \end{pmatrix} \equiv \begin{pmatrix} 0\\ \beta + \delta  \end{pmatrix} $ \;\; if $\alpha > \gamma$ and this does not hold for a smaller value of $\beta+\delta$,
        \item [b)] $\begin{pmatrix} \gamma - \alpha\\ 0  \end{pmatrix} \equiv \begin{pmatrix} 0\\ -(\beta+\delta) 
        \end{pmatrix} $  \;\; if $\gamma > \alpha$ 
        and this does not hold for a smaller value of $\beta+\delta$.
        \end{itemize}     
        
\end{lemma}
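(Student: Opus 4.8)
The plan is to verify the two equivalences directly via Theorem \ref{parameter} and then to establish the minimality by a short $gcd$ computation; the minimality is the only real work.

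For part a), I would form the difference vector $\vec{v} = (0,\beta+\delta) - (\alpha-\gamma,0) = (\gamma-\alpha,\,\beta+\delta)$ and simply exhibit its integer preimage under $\mathbf{A}$. A one-line computation gives $\mathbf{A}\begin{pmatrix} -1 \\ 1 \end{pmatrix} = \begin{pmatrix} -\alpha+\gamma \\ \beta+\delta \end{pmatrix} = \vec{v}$, so by Theorem \ref{parameter} b) the points $(\alpha-\gamma,0)$ and $(0,\beta+\delta)$ are equivalent with $(m,n)=(-1,1)$. (Equivalently, $\pi(\vec v)=\frac1A\mathbf{B}\vec v=(-1,1)$ has integer entries.)

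For the minimality I would determine all $k$ with $(\alpha-\gamma,0)\equiv(0,k)$. By Theorem \ref{parameter} this requires $(\gamma-\alpha,k)=\mathbf{A}\begin{pmatrix} m\\ n\end{pmatrix}$ for integers $m,n$; substituting $m'=m+1,\ n'=n-1$ turns the first coordinate into the homogeneous equation $m'\alpha+n'\gamma=0$ and the second into $k=(\beta+\delta)+(-m'\beta+n'\delta)$. Writing $g=gcd(\alpha,\gamma)$ and $\alpha=g\alpha_0,\ \gamma=g\gamma_0$ with $gcd(\alpha_0,\gamma_0)=1$, the solutions of $m'\alpha+n'\gamma=0$ are exactly $(m',n')=(\gamma_0 t,-\alpha_0 t)$, $t\in\Int$, and since $\alpha_0\delta+\gamma_0\beta=A/g$ this yields $k=(\beta+\delta)-t\cdot\frac{A}{g}$. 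Hence the attainable positive values of $k$ form the progression $\{\beta+\delta+s\cdot\frac{A}{g}\mid s\in\Int\}$.

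It remains to show $\beta+\delta$ is its least positive member, i.e. $\frac{A}{g}>\beta+\delta$, and this is exactly where the hypothesis $\alpha>\gamma$ enters: since $g=gcd(\alpha,\gamma)\le\gamma<\alpha$ we get $\alpha\delta>g\delta$ and $\beta\gamma\ge g\beta$, so $A=\alpha\delta+\beta\gamma>g(\delta+\beta)$, giving $\frac{A}{g}>\beta+\delta$, whence $k=\beta+\delta$ is minimal. Part b) follows by the analogous computation, now with $\vec v=\mathbf{A}\begin{pmatrix}1\\ -1\end{pmatrix}=(\alpha-\gamma,-(\beta+\delta))$ and the same strict inequality $\frac{A}{g}>\beta+\delta$, secured this time by $g\le\alpha<\gamma$. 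The main obstacle is precisely this minimality bookkeeping: collapsing the two-parameter solution set to the single modulus $A/g$ and then certifying the strict inequality from the order hypothesis on $\alpha$ and $\gamma$.
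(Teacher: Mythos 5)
Your proof is correct, but it establishes the minimality by a genuinely different route than the paper. The paper parametrizes the candidate targets as $(0,\beta+\delta-x)$ with $0\le x<\beta+\delta$, computes the parameter vector explicitly,
$\pi(\vec v)=\bigl(1-\frac{\gamma x}{A},\ -1+\frac{\alpha x}{A}\bigr)$,
and eliminates every $x>0$ by a non-integrality argument: whichever of $\alpha,\gamma$ is the larger, the corresponding coordinate lies strictly between $0$ and $1$ (e.g.\ $\gamma x<\gamma(\beta+\delta)\le\gamma\beta+\alpha\delta=A$ when $\gamma\le\alpha$), so both parts a) and b) fall out of a single computation with one case split. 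You instead solve the Diophantine system completely: the shift $(m',n')=(m+1,n-1)$ makes the first coordinate homogeneous, the solution lattice of $m'\alpha+n'\gamma=0$ is $(m',n')=(\gamma_0 t,-\alpha_0 t)$ with $g=gcd(\alpha,\gamma)$, and the attainable second coordinates form the arithmetic progression $k=\beta+\delta-t\cdot\frac{A}{g}$, $t\in\Int$; minimality then reduces to the single strict inequality $\frac{A}{g}>\beta+\delta$, which your hypothesis-dependent estimate ($g\le\gamma<\alpha$ for part a), $g\le\alpha<\gamma$ for part b)) correctly delivers, since exactly one of the two summands in $A=\alpha\delta+\beta\gamma>g(\beta+\delta)$ is strict in each case. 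Your route costs more bookkeeping but buys strictly more information: it characterizes \emph{all} $k$ with $(\alpha-\gamma,0)\equiv(0,k)$, namely $k\equiv\beta+\delta \pmod{A/g}$, whereas the paper only rules out the finitely many candidates below $\beta+\delta$; moreover your modulus $A/gcd(\alpha,\gamma)$ is exactly the backward-cycle length $p'$ of Theorem \ref{th-f-b-cycle}, which gives the lemma a structural reading (successive crossings of the $\eta$-axis are spaced one backward cycle apart) that the paper's pointwise check does not expose.
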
 
\begin{proof}
By Theorem \ref{parameter} we derive $\begin{pmatrix} \alpha-\gamma \\ 0  \end{pmatrix} \equiv \begin{pmatrix} 0\\ \beta + \delta -x \end{pmatrix} $  
for $0  \leq x <\beta+\delta$ and conclude $x=0$.
For  $\vec{v} =  \begin{pmatrix} \alpha-\gamma \\ 0  \end{pmatrix} -\begin{pmatrix} 0\\ \beta + \delta -x \end{pmatrix}   
 =  \begin{pmatrix} \alpha-\gamma \\ -\beta-\delta+x \end{pmatrix}$  we have to prove that
 $\pi(\vec{v} ) =\frac{1}{A} \cdot \begin{pmatrix} \delta & -\gamma \\ \beta & \alpha \end{pmatrix}
\cdot \vec{v}$ has integer values:\\
$\pi(\vec{v} ) =\frac{1}{A} \cdot \begin{pmatrix} \delta & -\gamma \\ \beta & \alpha \end{pmatrix}\cdot \begin{pmatrix} \alpha-\gamma \\ -\beta-\delta+x \end{pmatrix} = 
\frac{1}{A} \cdot
\begin{pmatrix} \delta\cdot\alpha-\delta\cdot\gamma +\gamma\cdot\beta+\gamma\cdot\delta-\gamma\cdot x\\ 
\beta\cdot\alpha-\beta–\gamma-\alpha\cdot\beta-\alpha\cdot\delta+\alpha\cdot x \end{pmatrix} = \\
\frac{1}{A}\cdot \begin{pmatrix} A-\gamma\cdot x\\ 
 -A+\alpha\cdot x\end{pmatrix} =
 \begin{pmatrix} 1-\frac{ \gamma\cdot x}{A} \\ 
 -1+\frac{ \alpha\cdot x}{A}\end{pmatrix} $.   
If $\gamma  \leq \alpha$ by the condition $x< \beta+\delta$ we conclude 
$\gamma\cdot x < \gamma\cdot\beta + \gamma\cdot\delta  \leq $
$ \gamma\cdot\beta + \alpha\cdot\delta  = A$ and $\frac{\gamma\cdot x}{A} <1 $.
If $\alpha  \leq \gamma$ by the condition $x< \beta+\delta$ we conclude 
$\alpha\cdot x < \alpha\cdot\beta + \alpha\cdot\delta  \leq $
$\gamma\cdot\beta + \alpha\cdot\delta  = A$ and $\frac{\alpha\cdot x}{A} <1 $.
Hence, for $\pi(\vec{v}) $ to have integer values $x=0$ is necessary and a) is proved. 
Case b) is similar with $\pi(\vec{v}) = \begin{pmatrix} -1+\frac{ \gamma\cdot x}{A} \\ 
 1-\frac{ \alpha\cdot x}{A}\end{pmatrix} $ instead.
\end{proof}

The relations of Lemma \ref{alpha-gamma} hold for the Petri space. By the cycloid folding they transform to path properties in a cycloid.

\begin{corollary} \label{alpha-gamma-cor}
Let be $t_0$ a transition of a cycloid  $ \mathcal{C}( \alpha, \beta, \gamma, \delta ) $ and    $\N{} =(\GSvw, \GSrw, T, F)$ (Definition \ref{cycloid} ) its cycloid net.
\begin{itemize}
       \item [a)] If $\alpha>\gamma$ then
       		\begin{itemize}
       			\item [1)] $fw[t_0](\alpha-\gamma) = bw[t_0](\beta+\delta)$ (see Definition \ref{def-path} e) and
       			\item [2)] $\sharp_1cut(fw,bw)[t_0] = \alpha-\gamma$ and 
					$\sharp_2 cut(fw,bw)[t_0] = \beta+\delta$.
		\end{itemize}     
        \item [b)]  If $\gamma>\alpha$ then
       	       \begin{itemize}
       			\item [1)] $fw[t_0](\gamma-\alpha) = \overleftarrow{bw}[t_0](\beta+\delta)$ and
       			\item [2)] $\sharp_1cut(fw,\overleftarrow{bw})[t_0] = \gamma-\alpha$ and $\sharp_2 cut(fw,\overleftarrow{bw})[t_0] = \beta+						\delta$ .
		\end{itemize}     
\end{itemize}     
\end{corollary}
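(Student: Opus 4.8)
The plan is to read the two equivalences of Lemma~\ref{alpha-gamma} off the coordinate axes of the Petri space and then fold them into the path and cut notation of Definitions~\ref{def-path} and~\ref{def-cut}. By the symmetry of a cycloid I would first assume without loss of generality that $t_0=t_{0,0}$, as in the proof of Theorem~\ref{th-beta-delta-red}; this is harmless because $\equiv$ is invariant under a common integer translation, so all cut lengths are independent of the chosen base transition. With this normalisation the four path types of Definition~\ref{def-path} run along the coordinate axes: the forward output transition of $t_{\xi,\eta}$ is $t_{\xi+1,\eta}$ and the backward output transition is $t_{\xi,\eta+1}$ (Definition~\ref{petrispace}), so as cycloid transitions $fw[t_0](n)=t_{n,0}$, $bw[t_0](n)=t_{0,n}$ and $\overleftarrow{bw}[t_0](n)=t_{0,-n}$ for every $n\in\Nat$.

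For a) the equality 1) is then immediate: Lemma~\ref{alpha-gamma}a) states $t_{\alpha-\gamma,0}\equiv t_{0,\beta+\delta}$, so $fw[t_0](\alpha-\gamma)$ and $bw[t_0](\beta+\delta)$ are one and the same cycloid transition $c$. For 2) I would show that $c$ is the cut and that $\alpha-\gamma$ and $\beta+\delta$ are its first occurrence indices on the two paths. On the backward side the minimality clause of Lemma~\ref{alpha-gamma}a) does the work: since $t_{\alpha-\gamma,0}\not\equiv t_{0,s}$ for $0\le s<\beta+\delta$, the transition $c$ is not among $bw[t_0](0),\dots,bw[t_0](\beta+\delta-1)$, whence $\sharp_2\,cut(fw,bw)[t_0]=\beta+\delta$. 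On the forward side I would use that the forward cycle has length $p=\frac{A}{gcd(\beta,\delta)}$ (Theorem~\ref{th-f-b-cycle}) together with the estimate $\alpha-\gamma<\alpha\le\frac{A}{\delta}\le p$; hence $t_{0,0},t_{1,0},\dots,t_{\alpha-\gamma,0}$ are pairwise distinct and $c$ first appears on $fw[t_0]$ at index $\alpha-\gamma$, giving $\sharp_1\,cut(fw,bw)[t_0]=\alpha-\gamma$.

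Part b) is entirely symmetric. Lemma~\ref{alpha-gamma}b) yields $t_{\gamma-\alpha,0}\equiv t_{0,-(\beta+\delta)}$, which by the coordinate description of the paths means $fw[t_0](\gamma-\alpha)=\overleftarrow{bw}[t_0](\beta+\delta)$, proving 1). The two index computations of 2) then repeat those of a) with $bw$ replaced by $\overleftarrow{bw}$: the minimality clause of Lemma~\ref{alpha-gamma}b) gives $\sharp_2=\beta+\delta$, and the injectivity of the forward path up to its cycle length (here via $\gamma-\alpha<\gamma\le\frac{A}{\beta}\le p$) gives $\sharp_1=\gamma-\alpha$.

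The step I expect to be delicate is the precise matching of ``the common transition produced by the Lemma'' with ``the cut of Definition~\ref{def-cut}'', i.e. confirming that $c$ is the first mutual meeting of the two paths and that its first occurrence indices are exactly the cut lengths $\sharp_1$ and $\sharp_2$. This is where the full strength of the minimality statement in Lemma~\ref{alpha-gamma} is needed, and it has to be combined carefully with the forward cycle length bound from Theorem~\ref{th-f-b-cycle} so that no earlier intersection of the two paths can preempt $c$.
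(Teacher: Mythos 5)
Your route is essentially the paper's route: normalise $t_0=t_{0,0}$, read the folded paths off the coordinate axes of the Petri space (implicitly via Lemma \ref{pi(post)}), obtain part 1) of each case directly from Lemma \ref{alpha-gamma}, and derive the cut lengths from the minimality clause. Your added scaffolding is sound and even more explicit than the paper's one-sentence treatment of 2): the coordinate description $fw[t_0](n)=t_{n,0}$, $bw[t_0](n)=t_{0,n}$, $\overleftarrow{bw}[t_0](n)=t_{0,-n}$ is correct, and the period estimates $\alpha-\gamma<\alpha\le\frac{A}{\delta}\le p$ and $\gamma-\alpha<\gamma\le\frac{A}{\beta}\le p$ do give pairwise distinctness of the initial forward segment.

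However, the step you yourself flag as delicate is a genuine gap, and the tools you cite do not close it. What Definition \ref{def-cut} requires for $\sharp_1\,cut(fw,bw)[t_0]=\alpha-\gamma$ is that no transition $fw[t_0](q')$ with $1\le q'<\alpha-\gamma$ lies \emph{anywhere} on $bw[t_0]$. Your forward-side argument shows only that the first occurrence of the particular transition $c$ on $fw[t_0]$ is at index $\alpha-\gamma$; it does not exclude that an earlier forward transition coincides with some backward transition \emph{other than} $c$. The minimality clause of Lemma \ref{alpha-gamma} cannot supply this either: it concerns equivalences of the single point $(\alpha-\gamma,0)$ with points $(0,s)$ and says nothing about other points of the $\xi$-axis. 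Such earlier meetings of the two paths do occur: by Theorem \ref{parameter} with $(m,n)=(-1,0)$ one always has $t_{\alpha,0}\equiv t_{0,\beta}$, i.e. $fw[t_0](\alpha)=bw[t_0](\beta)$, a meeting preceding $c$ on the backward path since $\beta<\beta+\delta$; and with $(m,n)=(-1,2)$ one has $t_{\alpha-2\gamma,0}\equiv t_{0,\beta+2\delta}$ whenever $\alpha>2\gamma$, a meeting preceding $c$ on the forward path. So an argument that singles out the pair $(\alpha-\gamma,\beta+\delta)$ among all intersections of the forward and backward cycles is indispensable and is missing from your proof. In fairness, the paper's own proof bridges exactly this spot with the same bare appeal to the minimality of $\beta+\delta$, so your attempt faithfully reproduces the published argument, including its softest step, rather than repairing it; but as a self-contained proof the identification of $c$ with $cut(fw,bw)[t_0]$ remains unestablished.
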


\begin{proof} 
By Lemma \ref{alpha-gamma} a) going $\alpha-\gamma$ steps in $\xi$-direction in the Petri space a transition is reached which is equivalent to a transition going $\beta+\delta$ steps in $\eta$-direction. By the folding into the fundamental parallelogram equivalent transition coincide. Since $\beta+\gamma$ is minimal in Lemma \ref{alpha-gamma} the cut point $cut(fw,bw)[t_0]$ exists and $string_1$ and $string_2$ (Definition \ref{def-cut}) have the length $\alpha-\gamma$ and $\beta+\delta$, respectively.
Case b) is similar, but on the backward path one must walk  in the opposite direction.
\end{proof}

By the next result, Theorem \ref{th-beta-delta-red}  is continued  to relate in a reversed direction \emph{all} values $\alpha_r, \beta_r,\gamma_r,\delta_r$ of cycloids $\mathcal{C}_r =  \mathcal{C}_r( \alpha_r, \beta_r, \gamma_r, \delta_r )\;\; (1  \leq r  \leq u)$ in a 
$\beta\delta$-reduction chain  to intersections of forward and backward cycles in all their cycloid nets. They are given as labels $lab(t_r)$ of a sequence $t_1, t_2, \cdots,t_r,\cdots,t_u$, as a continuation of $t_0$ from Theorem \ref{th-beta-delta-red} .

\begin{definition} \label{def-all-param}
Let be  $ \mathcal{C}_1 =   \mathcal{C}_1( \alpha_1, \beta_1, \gamma_1, \delta_1 ) $ a 
$\beta\delta$-irreducible
cycloid, $\N{} =(\GSvw, \GSrw, T, F)$ its  cycloid net (Definition \ref{cycloid}) and $t_1 \in T$ a transition. 
Furthermore let $t_1, t_2, \cdots,t_u \in T^*$ a  sequence of transitions, labeled by cycloids
$\mathcal{C}_r =  \mathcal{C}_r( \alpha_r, \beta_r, \gamma_r, \delta_r )\;\; (1  \leq r  \leq u)$, as follows:

\begin{itemize}
     \item [a)]  $lab(t_1) =  \mathcal{C}_r( \alpha_1, \beta_1, \gamma_1, \delta_1)$,
	\item [b)]  if $\alpha_r > \gamma_r$  then $t_{r+1} := cut(fw,bw)[t_r]$ and  $lab(t_{r+1})= 							\mathcal{C}_{r+1}( \alpha_{r+1},  \beta_{r+1}, \gamma_{r+1}, \delta _{r+1}) $ 
			with
			$\alpha_{r+1}=\sharp_1cut(fw,bw)[t_r]$ and 
			$\beta_{r+1}=\sharp_2cut(fw,bw)[t_r]$ and
			$\gamma_{r+1}	=\gamma_{r}$ and 
			$\delta _{r+1}=\delta _{r}$,
	\item [c)] if $\gamma_r > \alpha_r$  then $t_{r+1} := cut(fw,\overleftarrow{bw})[t_r]$ and  $lab(t_{r+1})= 							\mathcal{C}_{r+1}( \alpha_{r+1},  \beta_{r+1}, \gamma_{r+1}, \delta _{r+1}) $ 				with
			$\alpha_{r+1}=\alpha_r$  and
			$\beta_{r+1}=\beta_r$ and
			$\gamma_{r+1} = \sharp_1cut(fw,\overleftarrow{bw})[t_r]$ and
			$\delta_{r+1}=\sharp_2cut(fw,\overleftarrow{bw})[t_r]$,					
	\item [d)] if $\gamma_r = \alpha_r$  then $r := u$  (end of the sequence).
\end{itemize} 
\end{definition}
    
\begin{theorem} \label{th-all-param}
Let be $t_1, t_2, \cdots,t_u \in T^*$ and $lab(t_1)=\mathcal{C}_1 ,lab( t_2)=\mathcal{C}_2 , \cdots,lab(t_u)=\mathcal{C}_u $ the sequences as defined in Definition \ref{def-all-param}.
\begin{itemize} 
       \item [a)]  
       		$\mathcal{C}_1   \xrightarrow[\text{\tiny{}}]{\lambda_1}  \mathcal{C}_2   \xrightarrow[\text{\tiny{}}]			{\lambda_2}  \cdots   \xrightarrow[\text{\tiny{}}]{\lambda_n} \mathcal{C}_{u}$ 
      		$(\lambda_i \in \{   \alpha_i, \gamma_i\}  )$ is a  $ \alpha\gamma $-reduction chain and  
		$ \mathcal{C}_u$ is $\alpha\gamma$-irreducible,
        \item [b)] $\mathcal{C}_u   \xrightarrow[\text{\tiny{}}]{\lambda_1}  \mathcal{C}_{u-1}   							\xrightarrow[\text{\tiny{}}]{\lambda_2}  \cdots   \xrightarrow[\text{\tiny{}}]{\lambda_n} 						\mathcal{C}_{1}$   $(\lambda_i \in \{   \beta_i, \delta_i\}  )$ is a  
        		$\beta\delta$-\emph{reduction chain}.
\end{itemize}     
\end{theorem}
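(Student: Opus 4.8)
The plan is to show that the label arithmetic prescribed in Definition \ref{def-all-param} coincides exactly with the parameter arithmetic of the reduction rules $R_\alpha$ and $R_\gamma$ (Definition \ref{reduction rules}), so that the constructed sequence of labels is literally an $\alpha\gamma$-reduction chain; part b) will then follow by reversing each step. The bridge between the net-intrinsic cut lengths and the cycloid parameters is Corollary \ref{alpha-gamma-cor}. First I would set up an induction establishing the invariant that the fixed net $\N{}$ is a cycloid net of each $\mathcal{C}_r$, equivalently $\mathcal{C}_r \simeq_{\text{cyc}} \mathcal{C}_1$. This holds for $r=1$ by hypothesis. For the inductive step, assuming $\N{}$ is the cycloid net of $\mathcal{C}_r$, the paths $fw,bw,\overleftarrow{bw}$ and the cut points in Definition \ref{def-all-param} are well-defined net objects, and Corollary \ref{alpha-gamma-cor} may be applied to the cycloid $\mathcal{C}_r$ with base transition $t_r$. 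The label $\mathcal{C}_{r+1}$ is thereby obtained from $\mathcal{C}_r$ by a single reduction rule (verified below), and Theorem \ref{shearing} guarantees $\mathcal{C}_{r+1} \simeq_{\text{cyc}} \mathcal{C}_r \simeq_{\text{cyc}} \mathcal{C}_1$, so $\N{}$ is also a cycloid net of $\mathcal{C}_{r+1}$ and the induction proceeds.

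The two cases are then a direct match. If $\alpha_r > \gamma_r$, Corollary \ref{alpha-gamma-cor} a) gives $\sharp_1 cut(fw,bw)[t_r] = \alpha_r - \gamma_r$ and $\sharp_2 cut(fw,bw)[t_r] = \beta_r + \delta_r$; substituting into Definition \ref{def-all-param} b) yields $\alpha_{r+1} = \alpha_r - \gamma_r$, $\beta_{r+1} = \beta_r + \delta_r$, $\gamma_{r+1} = \gamma_r$, $\delta_{r+1} = \delta_r$, which is precisely rule $R_\alpha$, so $\mathcal{C}_r \xrightarrow{\alpha} \mathcal{C}_{r+1}$. If $\gamma_r > \alpha_r$, Corollary \ref{alpha-gamma-cor} b) gives $\sharp_1 cut(fw,\overleftarrow{bw})[t_r] = \gamma_r - \alpha_r$ and $\sharp_2 cut(fw,\overleftarrow{bw})[t_r] = \beta_r + \delta_r$, and Definition \ref{def-all-param} c) yields $\gamma_{r+1} = \gamma_r - \alpha_r$, $\delta_{r+1} = \delta_r + \beta_r$ with $\alpha,\beta$ unchanged, i.e.\ rule $R_\gamma$, so $\mathcal{C}_r \xrightarrow{\gamma} \mathcal{C}_{r+1}$. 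Since $R_\alpha$ replaces $\alpha+\gamma$ by $\alpha$ and $R_\gamma$ replaces it by $\gamma$, the positive integer $\alpha_r+\gamma_r$ strictly decreases (this is the subtractive Euclidean algorithm on the pair $(\alpha,\gamma)$); the sequence terminates at the index $u$ with $\alpha_u = \gamma_u$ (Definition \ref{def-all-param} d)), where neither $\alpha_u > \gamma_u$ nor $\gamma_u > \alpha_u$ holds, so neither $R_\alpha$ nor $R_\gamma$ applies and $\mathcal{C}_u$ is $\alpha\gamma$-irreducible. This proves a).

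For b) I would reverse the chain step by step. The inverse of $\mathcal{C}_r \xrightarrow{\alpha} \mathcal{C}_{r+1}$ sends $\mathcal{C}_{r+1}$ back to $\mathcal{C}_r$ via $\alpha_r = \alpha_{r+1} + \gamma_{r+1}$, $\beta_r = \beta_{r+1} - \delta_{r+1}$ and $\gamma,\delta$ unchanged, which is exactly rule $R_\beta$ applied to $\mathcal{C}_{r+1}$; its side condition $\beta_{r+1} > \delta_{r+1}$ holds because $\beta_{r+1} = \beta_r + \delta_r > \delta_r = \delta_{r+1}$. Dually, the inverse of $\mathcal{C}_r \xrightarrow{\gamma} \mathcal{C}_{r+1}$ is rule $R_\delta$ applied to $\mathcal{C}_{r+1}$, whose side condition $\delta_{r+1} > \beta_{r+1}$ holds since $\delta_{r+1} = \delta_r + \beta_r > \beta_r = \beta_{r+1}$. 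Hence reading the chain of a) backwards produces a sequence of $R_\beta$ and $R_\delta$ steps, i.e.\ a $\beta\delta$-reduction chain $\mathcal{C}_u \to \cdots \to \mathcal{C}_1$ ending in the $\beta\delta$-irreducible $\mathcal{C}_1$, as claimed.

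The main obstacle — more a point requiring care than a genuine difficulty — is the inductive invariant that the single net $\N{}$ serves simultaneously as a cycloid net for every $\mathcal{C}_r$, since only then may Corollary \ref{alpha-gamma-cor} legitimately be applied to $\mathcal{C}_r$ with base point $t_r$ to read off its parameters from cut lengths. This is secured by the cycloid-isomorphism invariance of the reduction steps (Theorem \ref{shearing}). Once it is in place, both parts reduce to matching the cut-length arithmetic of Corollary \ref{alpha-gamma-cor} against the rule arithmetic of Definition \ref{reduction rules}, together with the elementary verification of the reduction side conditions.
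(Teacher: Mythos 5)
Your proposal is correct and takes essentially the same route as the paper's own proof: part a) by matching the cut lengths from Corollary \ref{alpha-gamma-cor} against the label arithmetic of Definition \ref{def-all-param} to identify each step as an application of $R_\alpha$ or $R_\gamma$, and part b) by reversing each step via the correspondences $R_\alpha \leftrightarrow R_\beta$ and $R_\gamma \leftrightarrow R_\delta$. The extra details you supply --- the inductive invariant that $\N{}$ remains, up to cycloid isomorphism (Theorem \ref{shearing}), a cycloid net of every $\mathcal{C}_r$, the termination of the subtractive Euclidean scheme, and the explicit verification of the side conditions of the reversed rules --- are points the paper leaves implicit, so they refine rather than change the argument.
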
 

\begin{proof} 
a) If  $\alpha_r > \gamma_r $ in $\mathcal{C}_r =  \mathcal{C}_r( \alpha_r, \beta_r, \gamma_r, \delta_r )\;\; (1  \leq r  < u)$ then $\lambda_r = \alpha_r$ and by Definition \ref{def-all-param} b) and Corollary \ref{alpha-gamma-cor} a2)
$\alpha_{r+1}=\sharp_1cut(fw,bw)[t_r] = \alpha_r-\gamma_r$ as well 
$\beta_{r+1}=\sharp_2cut(fw,bw)[t_r]= \beta+\delta$. Since also 
$\gamma_{r+1}	=\gamma_{r}$ and $\delta _{r+1}=\delta _{r}$ we obtain
$\mathcal{C}_r  \xrightarrow[\text{\tiny{}}]{\alpha_r} \mathcal{C}_{r+1}$. 
If $\gamma_r > \alpha_r $ by a similar conclusion $\mathcal{C}_r  \xrightarrow[\text{\tiny{}}]{\gamma_r} \mathcal{C}_{r+1}$ is obtained and the sequence of Theorem \ref{th-all-param} is a $ \alpha\gamma $-reduction chain. $ \mathcal{C}_u$ is $\alpha\gamma$-irreducible since  $\gamma_u = \alpha_u$ by step d) of Definition \ref{def-all-param}.\\
b) Since  $\mathcal{C}_r  \xrightarrow[\text{\tiny{}}]{\alpha_r} \mathcal{C}_{r+1}\Leftrightarrow
\mathcal{C}_{r+1}  \xrightarrow[\text{\tiny{}}]{\beta_{r+1}} \mathcal{C}_{r}$ and
$\mathcal{C}_r  \xrightarrow[\text{\tiny{}}]{\gamma_r} \mathcal{C}_{r+1}\Leftrightarrow
\mathcal{C}_{r+1}  \xrightarrow[\text{\tiny{}}]{\delta_{r+1}} \mathcal{C}_{r}$
the $ \alpha\gamma $-reduction chain in a) read from back to front is a $ \beta\delta $-reduction chain.
\end{proof}

Table \ref{application-of-theorem} illustrates the above theorem using example  from Figure \ref{c-5-3-2-6-red2}. As shown in the example following Theorem \ref{th-beta-delta-red},  $ \mathcal{C}_1( 5,3,7,3) $ is the corresponding $\beta\delta$-irreducible cycloid (row 2 in Table \ref{application-of-theorem}).  Since $\gamma_1 = 7 > \alpha_1 = 5$ a $\gamma$-reduction step leads to 
$ \mathcal{C}_2( 5,3,2,6) $ (row 3 in Table 4). 
The formation of C2 is derived from the structure of the graph as follows.
If  the label $lab(t_1)=lab(\textbf{t12})$  is $\mathcal{C}_1 $ 
then 
\begin{itemize}
          \item $cut(fw,\overleftarrow{bw})[\textbf{t12}] = \textbf{t2}$, 
	 \item $string_1.cut(fw,\overleftarrow{bw})[\textbf{t12}] = $ \textbf{t1} \textbf{t2}, \;\;\;\;\;\;\;\;\;  \;  \;\;\;  \;\;\; \;\;\;  \;\;\;     (dotted line in Figure 			\ref{c-5-3-2-6-red2})
 	   \item $\sharp_1.cut(fw,\overleftarrow{bw})[\textbf{t12}] = 2$, 
	    \item $string_2.cut(fw,\overleftarrow{bw})[\textbf{t12}] = $  \textbf{t22} \textbf{t32} \textbf{t7} \textbf{t17} 				\textbf{t27} \textbf{t2}, \;\;\;   (first bold then dotted line in Figure \ref{c-5-3-2-6-red2})
 	   \item $\sharp_2.cut(fw,\overleftarrow{bw})[\textbf{t12}] = 6$.
\end{itemize}    
These values $2$ and $6$ provide the values $\gamma_2$ and $\delta_2$ of $lab(t_2) = lab(\textbf{t2}) = 
 \mathcal{C}_2( 5,3,\textbf{2},\textbf{6} )$ 	in the next row. 
 The entries in the following lines are corresponding, whereby for $\lambda = \alpha$  the fifth column is used instead of the sixth. The final cycloid  $ \mathcal{C}_5( 1,15,1,21) $ is $\alpha\delta$-irreducible.

\begin{table}[htbb]
 \begin{center}
\caption{Application of Theorem \ref{th-all-param} to the cycloid net $\N{}$  of Figure \ref{c-5-3-2-6-red2}}
\label{application-of-theorem}
\begin{tabular}{|c||c|c|c|c|c|}
\hline
   					 &  
                               &  
   $t_r$ in                                    & 
   \textbf{tx}                                               & 
   $cut(fw,bw)[\textbf{tx}] =  \textbf{ty} $                      &
    $cut(fw,\overleftarrow{bw})[\textbf{tx} ]=  \textbf{ty}$\\
  $  \mathcal{C}_{r} = lab(t_r)$ 						 &
   	$\lambda$ 						&
   Def. 			&
   in 					&
   $string_1(\sim)$ and $\sharp_1(\sim)$			&
   $string_1(\sim)$ and $\sharp_1(\sim)$\\  
						 &
   						&
  \ref{def-all-param}			&
   $\N{}$ 					&
   $string_2(\sim)$ and  $\sharp_2(\sim)$			&
   $string_2(\sim)$ and  $\sharp_2(\sim)$\\  \hline \hline 
    $ \mathcal{C}_1( 5,3,7,3 )$ 			 			&
  $ \gamma	$						&
  $t_1$							&
  \textbf{t12}						&
  							&
 $cut(fw,\overleftarrow{bw})[\textbf{t12}] = \textbf{t2}$	  \\  
     $\gamma_1 > \alpha_1$			 			&
   							&
  							&
   							&
  							&
  \textbf{t1} \textbf{t2} and  2							\\  
  			 			&
   							&
  							&
   							&
  							&
  \textbf{t22} \textbf{t32} \textbf{t7} \textbf{t17} \textbf{t27} \textbf{t2} and  6							\\  \hline 
      $ \mathcal{C}_2( 5,3,\textbf{2},\textbf{6} )$ 			 			&
  $ \alpha	$						&
  $t_2$							&
  \textbf{t2}						&
  $cut(fw,bw)[\textbf{t2}] = \textbf{t5}$&	  \\  
     $\alpha_2 > \gamma_2$			 			&
   							&
  							&
   							&
  \textbf{t3} \textbf{t4} \textbf{t5} and  3 &							\\  
  			 			&
   							&
  							&
   							&
  \textbf{t27} \textbf{t17} $\cdots$\textbf{t15} \textbf{t5}  and  9	&						\\  \hline 
      $ \mathcal{C}_3( \textbf{3},\textbf{9},2,6)$ 			 			&
  $ \alpha	$						&
  $t_3$							&
  \textbf{t5}						&
  $cut(fw,bw)[\textbf{t5}] = \textbf{t6}$&	  \\  
     $\alpha_3 > \gamma_3$			 			&
   							&
  							&
   							&
  \textbf{t6} and 1  &							\\  
  			 			&
   							&
  							&
   							&
  \textbf{t30} \textbf{t20} $\cdots$\textbf{t16} \textbf{t6}  and  15	&						\\  \hline 
  $ \mathcal{C}_4( \textbf{1},\textbf{15},2,6 )$ 			 			&
  $ \gamma	$						&
  $t_4$							&
  \textbf{t6}						&
  							&
 $cut(fw,\overleftarrow{bw})[\textbf{t6}] = \textbf{t7}$	  \\  
     $\gamma_4 > \alpha_4$			 			&
   							&
  							&
   							&
  							&
  \textbf{t7}  and  1							\\  
  			 			&
   							&
  							&
   							&
  							&
  \textbf{t16} \textbf{t26} $\cdots$ \textbf{t32} \textbf{t7}  and  21						\\  \hline 
   $ \mathcal{C}_5( 1,15,\textbf{1},\textbf{21} )$ 			 			&
   							&
 $ t_5$							&
    \textbf{t7}							&
  							&
  							\\  \hline 

\hline
\end{tabular}
 \end{center}
       \end{table}

\begin{figure}[htbp]
	\begin{center}
		\includegraphics [scale = 0.32]{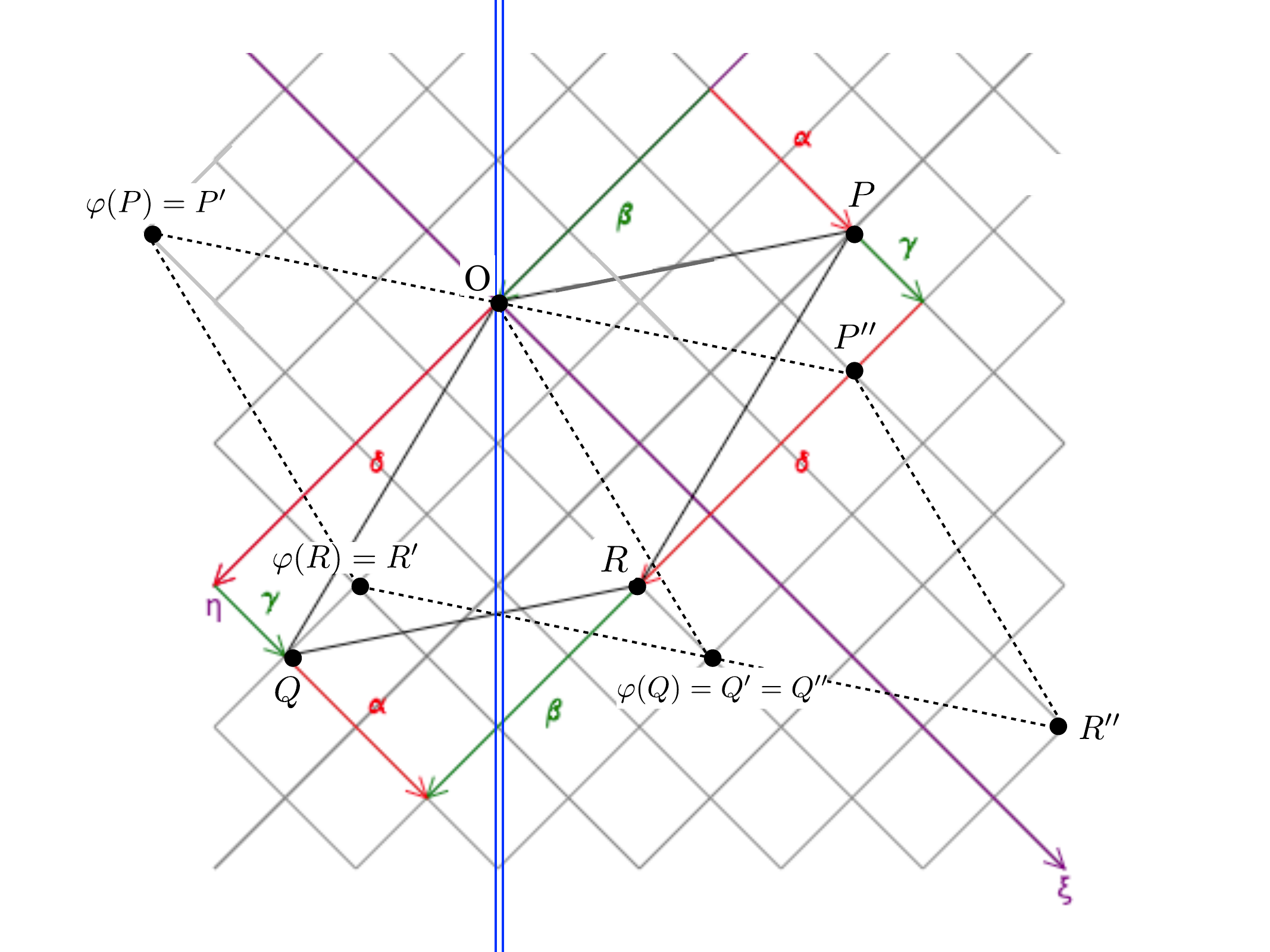}
		\caption{Fundamental parallelogram of  $ \mathcal{C}( 2,3,1,4 ) $ and its symmetric image $ \mathcal{C''}( 3,2,4,1) $}
		\label{2-3-1-4}
	\end{center}
\end{figure}


The above proof of Theorem \ref{th-all-param} b) shows that   results obtained in this article for $\beta\delta$-reductions can be reproduced for $\alpha\gamma$-reductions in a similar way.	When replacing the rules $R_\beta, R_\delta$ by $R_\alpha, R_\gamma$ in the algorithm from Theorem \ref{bd-irreducible}  a $\alpha\gamma$-irreducible cycloid  $ \mathcal{C}( \alpha, \beta, \gamma, \delta ) $ with 
$\alpha = \gamma$ is obtained. 
From the example of a $\beta\delta$-reduction, as given after Definition \ref{reduction}, a $\alpha\gamma$-reduction can be obtained by interchanging the first an the second parameter, as well as the third and the forth one. The inscriptions on the arcs have to be changed from $\delta$ to $\gamma$ and from $\beta$ to $\alpha$.
Although both reduction methods are symmetric, this article focusses on $\beta\delta$-reductions, as it is closer to the concept of regular cycloids (Definition \ref{regular}), which are closely related to circular traffic systems \cite{Valk-2020} and are defined by $\beta$ dividing $\delta$. For some $k\in \Natp$ a regular cycloid  can be written as $ \mathcal{C}_1( \alpha, \beta, \gamma, k\cdot\beta ) $  which can be reduced by $\delta$-reductions to the cycloid isomorphic cycloid
$ \mathcal{C}_2( \alpha, \beta, \gamma+ (k-1)\cdot\alpha, \beta ) $. The behaviour of latter can be compared with the canonical regular cycloid  $ \mathcal{C}_3( \alpha, \beta, \beta, \beta ) $. This is beyond the scope of this article, but it explains why we preferred  $\beta\delta$-reductions over  $\alpha\gamma$-reductions.

Formally, both reduction methods are related by the transformation of Theorem \ref{shear} e) producing the symmetric cycloid  $ \mathcal{C}_2 = \mathcal{C}( \beta,\alpha,\delta, \gamma ) $ of a given cycloid  
$\mathcal{C}_1 = \mathcal{C}( \alpha, \beta, \gamma, \delta ) $.  $ \mathcal{C}_2 $ is net isomorphic 
(Theorem \ref{shear}) e) to $\mathcal{C}_1 $, but not cycloid isomorphic (Definition \ref{def-cyc-iso}), as can be seen in the counterexample below. 
We begin with a consideration that is familiar from the geometry of Euclidean spaces, namely the mapping  $\varphi(\xi,\eta) = (\eta,\xi)$. The resulting diagram, however, is not the fundamental parallelogram of a cycloid.
The image  of $P$ which is $\varphi(\alpha,-\beta) = (-\beta,\alpha)$ has negative $\xi$-coordinates which is impossible for fundamental parallelograms (see the following example).
  Therefore we  apply the mapping $\rho_{\mathcal{C}_2}$ with respect to $\mathcal{C}_2$ from Theorem \ref{xy-to-FP} and obtain the final form of  the fundamental parallelogram of $\mathcal{C''}$.

  Figure \ref{2-3-1-4} shows the fundamental parallelogram of  $ \mathcal{C} =\mathcal{C}( 2,3,1,4) $ with vertices $O,Q,R$ and $P$ as an example of the generation of the isomorphism to the symmetric cycloid  $ \mathcal{C}''=\mathcal{C}( 3,2,4,1) $ (Theorem \ref{shear}).
  The doubled line  through $O$ is the line of reflection of $\varphi(\xi,\eta) = (\eta,\xi)$. We obtain a parallelogram with vertices $\varphi(O) = O, \varphi(Q) = Q', \varphi(R) = R'$ and 
  $\varphi(P) = P'$, which is not a fundamental parallelogram of a cycloid. The application 
  $\rho_{\mathcal{C}''}(\xi,\eta)= (\eta+3,\xi-2)$ transforms this parallelogram into the fundamental parallelogram of $\mathcal{C}''=\mathcal{C}( 3,2,4,1) $ with vertices $O,Q'',R''$ and $P''$. By Corollary \ref{th2-cyc-iso} $\mathcal{C}$ and $\mathcal{C''}$ are not cycloid isomorphic as their $\beta\delta$-irreducible equivalents  $ \mathcal{C}(8,1,3,1) $ and 
    $ \mathcal{C}(7,1,4,1) $ are different.
  
\section{Conclusion}\label{sec-conclusion}
	
The theory of cycloids is extended by the technique of reduction in the style of term rewriting systems. Reduction steps have a geometrical interpretation as shear mappings. New results are derived which are the base of numerous new algorithms for computing cycloid properties. A linear algebra method, called \emph{cycloid algebra},
allows for easier computation of properties like the minimal length of cycles and equivalence of transitions in the Petri space. The parameters $ \alpha, \beta, \gamma$ and $ \delta  $ of the irreducible cycloid and thus of the entire reduction chain can be calculated solely from the properties of paths in the net representation of a cycloid.
This is interpreted as cycloid synthesis and leads to an efficient method for a decision procedure for cycloid isomorphism. 

\section{Acknowledgements}\label{sec-acknowledgements}

We would like to thank the anonymous reviewers for their careful review, which has led to many improvements in terms of comprehensibility and formal clarity.

\bibliographystyle{fundam}
\bibliography{citations-rv}
\end{document}